%
\documentclass[runningheads,envcountsame]{llncs}

\usepackage[T1]{fontenc}
%
%
\usepackage{amsmath}
\usepackage{amssymb}
\usepackage{amsfonts}

\usepackage{graphicx}
\usepackage[matrix,arrow,cmtip,curve]{xy}
\usepackage{hyperref}
\usepackage{tikz,tikz-3dplot,tikz-cd}
\usepackage{caption}
\usepackage{subcaption}
\usepackage{paralist}
\usepackage{comment}

\usetikzlibrary{fit, arrows, automata, shapes, calc, fadings,shapes.geometric}
\tikzstyle{startstop} = [rectangle, rounded corners, 
minimum width=3cm, 
minimum height=1cm,
text centered, 
draw=black, 
fill=red!30]

\tikzstyle{io} = [trapezium, 
trapezium stretches=true, 
trapezium left angle=70, 
trapezium right angle=110, 
minimum width=3cm, 
minimum height=1cm, text centered, 
draw=black, fill=blue!30]

\tikzstyle{process} = [rectangle, 
minimum width=3cm, 
minimum height=1cm, 
text centered, 
text width=3cm, 
draw=black, 
fill=orange!30]

\tikzstyle{decision} = [diamond, 
minimum width=3cm, 
minimum height=1cm, 
text centered, 
draw=black, 
fill=green!30]
\tikzstyle{arrow} = [thick,->,>=stealth]
\tikzset{->, auto, >=stealth', font=\small}
\tikzset{state/.style={shape=circle, draw, fill=white, initial text=,
    inner sep=.5mm, minimum size=1.5mm}}
\tikzset{accepting/.style=accepting by arrow}
\tikzset{state with output/.style={shape=rectangle split, rectangle
    split parts=2, draw, fill=white,
    initial text=, inner sep=1mm}}
\usepackage{tikz}
\newcommand*\loset[1]{\left[\begin{smallmatrix}#1\end{smallmatrix}\right]}
\newcommand*\bigloset[1]{\left[\begin{matrix}#1\end{matrix}\right]}

\newcommand*\intord{\dashrightarrow}

\newcommand*\ev{\textup{\textsf{ev}}}
\newcommand*{\intpom}{\mathbf{IP}}
\newcommand{\TrO}{\mathbf{T}}
\newcommand*{\Tr}{\mathsf{Tr}}
\newcommand*{\pstart}{\mathsf{start}}
\newcommand*{\pend}{\mathsf{end}}
\newcommand*{\Path}{\mathrm{Path}}
\newcommand*\subid[3]{{}_{#1}#2_{#3}}

\newcommand*\exec{%
  \raisebox{1pt}{%
    \begin{tikzpicture}[x=.8ex,y=1ex,-]
      \draw (0,0) -- (1,0) -- (1,1) -- (2,1);
    \end{tikzpicture}}}

\newcommand*\rest[1]{{}_{| #1}}
\newcommand*\pobj[1]{\square^{#1}}
\newcommand*\jneda{\mathbf{i}}
\newcommand*\jnedafinal{\mathbf{f}}
\newcommand*\ibullet{\vcenter{\hbox{\tiny $\bullet$}}}
\newcommand*\pclass{\mathbb{P}_{X}}
\newcommand*\HDA{\mathbf{HDA}}
\newcommand*\Ob{\mathbf{Ob}}
\makeatother
\newcommand\pomsetwop[4]{
  \vcenter{\xymatrix@1@R=#1@C=#2@M=#3{#4}}%
}

\newcommand\ipomset[2][1.3]{%
  \hspace*{.1em}%
  \left[%
    \hspace*{-#1em}%
    \pomsetwop{1.3ex}{#1em}{1pt}{#2}%
    \hspace*{-#1em}%
  \right]%
  \hspace*{.1em}%
}
\newcommand\bigipomset[2][1.3]{%
  \hspace*{.1em}%
  \left[%
    \hspace*{-#1em}%
    \pomsetwop{2.5ex}{#1em}{1pt}{#2}%
    \hspace*{-#1em}%
  \right]%
  \hspace*{.1em}%
}
\newcommand{\ininc}{\stackrel{0}{\hookrightarrow} }
%
\begin{document}
\title{Bisimulations and Logics for Higher-Dimensional Automata}
%
%
\author{Safa Zouari\inst{1}\and
Krzysztof Ziemiański\inst{2}\and
Uli Fahrenberg\inst{3}}
\authorrunning{S. Zouari et al.}
%
\institute{Norwegian University of Science and Technology, Gjøvik, Norway \email{safa.zouari@ntnu.no} \and
 University of Warsaw, Poland
\email{ziemians@mimuw.edu.pl}\\ \and
EPITA Research Lab, Rennes, France\\
\email{uli@lrde.epita.fr}}
\maketitle              
\begin{abstract}
Higher-dimensional automata (HDAs) are models of non-in\-ter\-leav\-ing
concurrency for analyzing concurrent systems. There is a rich literature
that deals with bisimulations for concurrent systems, and some of them have
been extended to HDAs. However, no logical characterizations of these
relations are currently available for HDAs.

In this work, we address this gap by introducing Ipomset modal logic, a
Hennessy-Milner type logic over HDAs, and show that it characterizes
Path-bisimulation, a variant of the standard ST-bisimulation.
We also define a notion of Cell-bisimulation, using the open-maps framework
of Joyal, Nielsen, and Winskel, and establish the relationship between
these bisimulations (and also their ``strong'' variants, which take
restrictions into account). In our work, we rely on a categorical
definition of HDAs as presheaves over concurrency lists and on track objects.
\end{abstract}

\keywords{Higher Dimensional Automaton \and Ipomset Modal Logic \and Hennessy-Milner logic \and Bisimulation \and Open map \and Pomset}

\section{Introduction}
\emph{Higher-Dimensional Automata} (HDAs), introduced by V.Pratt \cite{PrattCG} and Rob v.Glabbeek \cite{VANGLABBEEK2006265}, are a powerful model for non-interleaving concurrency. v.Glabbeek \cite{VANGLABBEEK2006265} places HDAs at the top of a hierarchy of concurrency models, demonstrating how other concurrency models, such as Petri nets \cite{nielsen1981petri}, configuration structures \cite{van1995configuration}, asynchronous systems \cite{bednarczyk1989categories}, \cite{shields1985concurrent}, and event structures \cite{winskel1986event,winskel1989introduction}, can be incorporated into HDAs. 

As its name implies, a Higher-Dimensional Automaton consists of a collection of $n$-dimensional hypercubes or $n$-cells connected via source and target maps. The well-known automata or labeled transition systems are 1-dimensional HDAs. However, HDAs allow for more expressive modeling of concurrent and distributed systems.
For example, the concurrent execution of two events $a$ and $b$ can be modeled by a square labeled as in Fig.~\ref{fig: interleaving vs true concurrency}, while an empty square represents mutual exclusion. Analogously, a filled-in 3-dimensional cube in an HDA can represent three events $a_1,a_2,$ and $a_3$ that execute concurrently, while when considering a hollow cube, each 2-dimensional face models $a_i \parallel a_j$ for $1 \leq i\neq j \leq 3$. See Fig.~\ref{fig:cube}. 

A \emph{higher-dimensional automaton} is a \emph{precubical set} together with an \emph{initial cell} and a set of \emph{final cells}. Like a simplicial set, a \emph{precubical set} is constructed by systematically gluing hypercubes. Formally, it is a graded set $X=\bigcup_{n \in \mathbf{N}}X_n$, where $X_n$ represents the set of \emph{$n$-cells}, with face maps $\delta^0$ resp $\delta^1$ 
defining the mapping of an $n$-cell to its lower resp. upper face. Each $n$-cell is associated with a linearly ordered and labeled set $V$ of length $n$. From a concurrency point of view, such a cell models a list $V$ of $n$ active events. A lower face of a cell $x_n \in X_n$ is a cell $\delta_{V \setminus U}^0(x)$ that has $U \subseteq V$ as active events. For example, in Fig.\ref{fi: face maps}, the square has active \emph{events} $[a b]$. Its faces have active events $[a]$ and $[b]$, respectively. In Section \ref{sec:HDA}, we make this precise by defining \emph{precubical sets} as presheaves over a category of linearly ordered sets with appropriate morphisms \cite{LanguageofHDA,KleeneTh}.

\tdplotsetmaincoords{70}{115}
\begin{figure}[h]
    \centering
    \begin{minipage}[t]{0.48\textwidth}
        \centering
        \begin{tikzpicture}[scale=3,tdplot_main_coords]
            \begin{scope}[shift={(1.5,0)}]
                \coordinate (O) at (0,0,0);
                \tdplotsetcoord{P}{1.414213}{54.68636}{45}
                \draw[,fill opacity=0.5] (O) -- (Py) -- (Pyz) -- (Pz) -- cycle; 
                \draw[fill=yellow!50,fill opacity=0.5] (O) -- (Px) -- (Pxz) -- (Pz) -- cycle; 
                \draw[fill=blue!50,fill opacity=0.5] (O) -- (Px) -- (Pxy) -- (Py) -- cycle; 
                \draw[fill=blue!50,fill opacity=0.5] (Pz) -- (Pyz) -- (P) -- (Pxz) -- cycle; 
                \draw[,fill opacity=0.5] (Px) -- (Pxy) -- (P) -- (Pxz) -- cycle; 
                \draw[fill=yellow!50,fill opacity=0.5] (Py) -- (Pxy) -- (P) -- (Pyz) -- cycle; 
                \node at (0.35,0,.08) {$a_1$};
                \node at (0,0.3,.06) {$a_2$};
                \node at (0,0.1,.35) {$a_3$};
                \node at (.4,.4,0) {$[a_1,a_2]$};
                \node at (.4,0,.35) {$[a_1,a_3]$};
            \end{scope}
        \end{tikzpicture}
        \caption{How HDA models concurrency: The filled-in cube models the events $[a_1,a_2,a_3]$. The 2-dimensional faces with the same color model the same 2 events. The uncolored faces model the events $[a_2,a_3]$.}
        \label{fig:cube}
    \end{minipage}
    \hfill
    \begin{minipage}[t]{0.48\textwidth}
        \centering
        \begin{tikzpicture}[x=1cm, y=1cm]
            \begin{scope}[shift={(1.5,0)}]
                \node[state] (00) at (0,0) {};
                \node[state] (10) at (-1,-1) {};
                \node[state] (01) at (1,-1) {};
                \node[state] (11) at (0,-2) {};
                \path (00) edge node[left] {$\vphantom{b}a$\,} (10);
                \path (00) edge node[right] {\,$b$} (01);
                \path (10) edge node[left] {$b$\,} (11);
                \path (01) edge node[right] {\,$\vphantom{b}a$} (11);
            \end{scope}
            \begin{scope}[shift={(5,0)}]
                \path[fill=black!15] (0,0) to (-1,-1) to (0,-2) to (1,-1);
                \node[state] (00) at (0,0) {};
                \node[state] (10) at (-1,-1) {};
                \node[state] (01) at (1,-1) {};
                \node[state] (11) at (0,-2) {};
                \path (00) edge node[left] {$\vphantom{b}a$\,} (10);
                \path (00) edge node[right] {\,$b$} (01);
                \path (10) edge node[left] {$b$\,} (11);
                \path (01) edge node[right] {\,$\vphantom{b}a$} (11);
            \end{scope}
        \end{tikzpicture}
        \caption{HDA models distinguishing interleaving $a. b + b. a$ (left) from non-interleaving concurrency $a \parallel b$ (right).}
        \label{fig: interleaving vs true concurrency}
    \end{minipage}
\end{figure}
\vspace{-0.5cm}
In addition to concurrency models, equivalence relations should also be considered when describing concurrent systems. Various notions of equivalence have been suggested in studies \cite{sangiorgi1998bisimulation,van2001refinement},\linebreak \cite{vanGlabbeek1997difference,leifer2000deriving,Van.br},\linebreak \cite{hennessy1985algebraic}, guided by considerations of the critical aspects of system behavior within a specific context and the elements from which to abstract. Parallel to behavioral equivalences, modal logic is a useful formalism for specifying and verifying properties of concurrent systems \cite{aceto2007reactive},\linebreak \cite{pnueli1992temporal,baier2008principles,baldan2020model}. 

Characterization of bisimulation in terms of Hennessy-Milner logic (HML) provides additional confidence in both approaches. Two finitely branching systems are bisimilar iff they satisfy the same logical assertions. The literature focusing on logical characterization includes the v.Glabbeek spectrum \cite{Van.br} for sequential processes and \cite{baldan2010logic,nielsen1994bisimulation},\linebreak \cite{de1995three,nielsen2005bisimulation,phillips2014event},\linebreak \cite{baldan2014hereditary} for concurrent systems. Some of these behavioral equivalences have been extended to HDAs. Among them are hereditary history-preserving bisimulations (hh-bisimulation) and ST-bisimulations \cite{VANGLABBEEK2006265}. However, to the best of our knowledge, their logical counterparts have not been investigated for HDA.

This paper presents a variant of HML interpreted over HDAs, namely \emph{Ipomset Modal Logic} (IPML). The original HML in the interleaving setting \linebreak \cite{hennessy1985algebraic} contains negation $(\neg )$, conjunction $(\wedge)$, a formula $\top$ that always holds, and a diamond modality $\langle a \rangle F$, which says that it is possible to perform an action labeled by $a$ and reach a state that satisfies $F$. Unlike the standard HML, IPML considers both sequential and concurrent computations. Thus, it differs from the standard HML within the diamond modality, so it becomes $\langle P \rangle F$ where $P$ is an \emph{interval pomset with interfaces} (interval ipomset). It is interpreted over a \emph{path} $\alpha$, and says that there is a path $\beta$ that recognizes $P$ and extends $\alpha$ to a path (concatenation of $\alpha$ and $\beta$) that satisfies $F$. For example, in Fig.~\ref{fig: interleaving vs true concurrency}, the formula $\langle (a \longrightarrow b) \rangle \top \vee \langle (b \longrightarrow a) \rangle \top$, which stands for mutual exclusion, holds at the top edge of both squares. However, formula $\langle \loset{a \\ b} \rangle \top$ holds only in the upper corner of the filled-in square (on the right). The latter formula shows that our logic is powerful enough to distinguish interleaving from true concurrency. 

\emph{Pomsets} were first introduced by Winkowski \cite{winkowski1977algebraic}. \emph{Interval orders}, a subclass of \emph{pomsets}, have been introduced by Fishburn \cite{fishburn1970intransitive}. Then, they have been equipped with \emph{interfaces} \cite{LanguageofHDA}, facilitating the definition of the \emph{gluing composition} of HDA languages.
 A computational run in an HDA is modeled by a \emph{path}, a sequence of \emph{cells}. Each two consecutive \emph{cells} are related by a \emph{source} or a \emph{target map}. The observable contents of a path $\alpha$ are described by $\ev(\alpha)$, an \emph{interval pomset with interfaces}.
\begin{figure}[h]
    \centering
    \begin{tikzpicture}[x=1.5cm, y=1.5cm]
      \begin{scope}
             \path[fill=black!15] (0,0) to (1,0) to (1,1) to (0,1);
      \node[state, initial] (00) at (0,0) {};
      \node[state] (10) at (1,0) {};
      \node[state] (01) at (0,1) {};
      \node[state] (11) at (1,1) {};
      \path (00) edge node[below] {$\vphantom{d}b$} (10);
      \path (01) edge node[above] {$b$} (11);
      \path (00) edge node[left] {$a$} (01);
      \path (10) edge node[right] {$a$} (11);
 \node at (.5,.5) {$\vphantom{b} [a\dashrightarrow b] $};
      \end{scope}
      \begin{scope}[shift={(3.5,0)}]
       \path[fill=black!15] (0,0) to (1,0) to (1,1) to (0,1);
      \node[state, initial] (00) at (0,0) {};
      \node[state] (10) at (1,0) {};
      \node[state] (01) at (0,1) {};
      \node[state] (11) at (1,1) {};
      \path (00) edge node[below] {$\vphantom{d}\delta^0_a(x)$} (10);
      \path (01) edge node[above] {$\delta^1_a(x)$} (11);
      \path (00) edge node[left] {$\delta^0_b(x)$} (01);
      \path (10) edge node[right] {$\delta^1_b(x)$} (11);
 \node at (.5,.5) {$\vphantom{b} x $};
      \end{scope}
    \end{tikzpicture}
     \caption{Two-dimensional cell with its faces. The labels of each cell are shown on the left.}
  \label{fi: face maps}
  \end{figure}
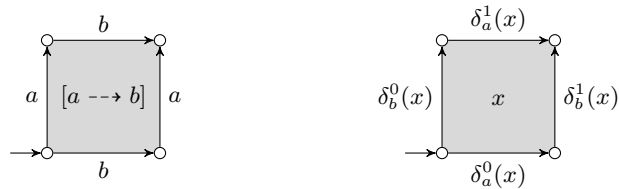
To define \emph{cell-bisimulation} and IPML over HDAs, we employ the notion of open map bisimulation \cite{JOYAL1996164}. This approach requires a category of models $\mathbf{M}$ (the category of HDAs in our case) and a path category $\TrO$ (the category of \emph{track objects} in our case), a subcategory of $\mathbf{M}$ that we call the HDA-path category. Track objects have originally been introduced in \cite{LanguageofHDA} to define languages of HDAs. They form a subcategory of $\mathbf{M}$. A \emph{track object }is a particular HDA that can be constructed from a given interval ipomset $P$, denoted $\square^P$. 
Intuitively, for a given path $\pi$ labeled with an interval ipomset $P$, the track object $\square^P$ is the smallest HDA containing $\pi$.
We show that the resulting logic characterizes path-bisimulation \cite{LanguageofHDA}, a variant of ST-bisimulation \cite{VANGLABBEEK2006265}. However, its extension, equipped with backward modality characterizes the strong path-bisimulation. Finally, we finish this paper with a hierarchy of the equivalence relations encountered in the paper. This is summarized in Fig. \ref{fi: conclusion}.
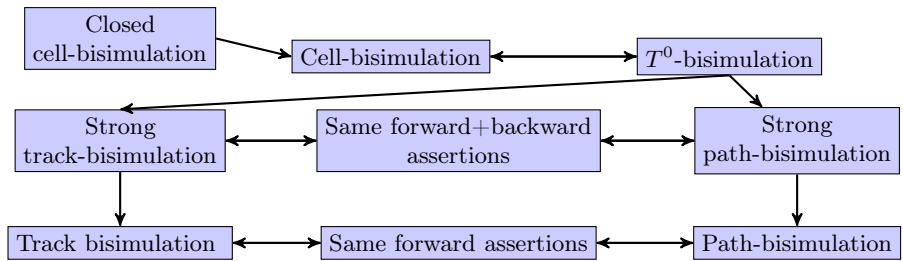
\begin{figure}[h]
\centering
\begin{tikzpicture}[fill=blue!20, x=.9cm, y=.8cm] 
\path (1.6,3.2) node(closed_cell_bis) [rectangle,rotate=0,draw,fill,align=center] {Closed \\ cell-bisimulation}
(1.6,1.5) node(strong_track) [rectangle,draw,fill,align=center] { Strong \\ track-bisimulation }
(1.6,-0.2) node(track) [rectangle,draw,fill,align=center] {Track bisimulation }
(11.6,1.5) node(strong_path_bis) [rectangle,draw,fill,align=center] {Strong \\ path-bisimulation}
(11.6,-0.2) node(path_bis) [rectangle,draw,fill,align=center] { Path-bisimulation}
(6.6,1.5) node(all_asssertions) [rectangle,rotate=0,draw,fill,align=center] {Same forward+backward \\ assertions}
(6.6,-0.2) node(asssertions) [rectangle,rotate=0,draw,fill,align=center] {Same forward assertions}
(10.6,2.9) node(t0bis) [rectangle,rotate=-0,draw,fill] {$T^0$-bisimulation}(5.6,2.9) node(cell_bis) [rectangle,rotate=-0,draw,fill] {Cell-bisimulation};
\draw[thick] (strong_track) -- (all_asssertions);
\draw[thick] (all_asssertions) -- (strong_track);
\draw[thick]  (strong_track) -- (track) ;
\draw[thick] (closed_cell_bis.east) -- (cell_bis.west) ;
\draw[thick] (t0bis) -- (cell_bis);
\draw[thick] (cell_bis) -- (t0bis);

\draw[thick] (track) -- (asssertions);
\draw[thick] (asssertions) -- (track);
\draw[thick] (t0bis.south) -- (strong_track.north);
\draw[thick] (t0bis.south) -- (strong_path_bis);
\draw[thick] (strong_path_bis) -- (path_bis);
\draw[thick] (strong_path_bis) --  (all_asssertions) ;
\draw[thick] (all_asssertions) -- (strong_path_bis);
\draw[thick] (path_bis) --  (asssertions) ;
\draw[thick] (asssertions) -- (path_bis);
\end{tikzpicture}
\caption{Hierarchy of notions of equivalence.}
\label{fi: conclusion}
\end{figure}
Other contributions, that deepen the understanding of mathematical structures and may be of independent interest, include establishing a relation between track objects and interval ipomsets in Th.\@ \ref{th: pomset and track objects}, a relation between the notion of paths of v.Glabbeek \cite{VANGLABBEEK2006265} and tracks of \cite{LanguageofHDA} in Th.\@ \ref{th: path and tracks isomorphism}, and a generalization of the Yoneda lemma to interval ipomsets in Prop.~\ref{prop: generalization of Yoneda}.
All proofs are available in a separate appendix.

\section{Higher Dimensional Automata}\label{sec:HDA}
We review the definition of Higher Dimensional Automata. We rely on a categorical approach proposed and studied in \cite{LanguageofHDA},\linebreak \cite{KleeneTh}, where an HDA is defined as a specific precubical set. To define precubical sets as presheaves over the labeled precube category $\square$, we introduce conclists and conclist maps, which are the objects and the morphisms of $\square$. We restrict our study to finitely branching HDAs.

\begin{definition}
A \emph{concurrency list} or \emph{conclist} is a tuple $(U, \dashrightarrow, \lambda)$, where $U$ is a finite set totally ordered by the strict order $\dashrightarrow$ and $\lambda: U \rightarrow \Sigma $ is a labeling map. Elements of $U$ will be called \emph{events}.
\end{definition}

\begin{definition}
\label{def: precube} 
    A \emph{conclist-map} from a conclist $U$ to $T$ is a pair $(f,\varepsilon)$
    such that:
    \begin{compactitem}
    \item 
        $f:U\to T$ is a label and order-preserving function;
    \item 
        $\varepsilon: T\rightarrow \{ 0,\exec, 1 \}$ is a function such that $\varepsilon^{-1}(\exec)= f(U)$.
    \end{compactitem}
    The composition of morphisms $(f,\varepsilon):U\to T$ and
    $(g,\zeta):T\to V$ is $(g,\zeta) \circ (f,\varepsilon)=(g\circ f,\eta)$, where
  $$  \eta(u)=
      \begin{cases}
        \varepsilon(g^{-1}(u)) & \text{for $u\in g(T)$}, \\
        \zeta(u) & \text{otherwise}.
      \end{cases} $$
     Let $\square$ be the category of conclists and conclist maps.
    We write $U\simeq V$ for isomorphic conclists;
    if two conclists are isomorphic, then the isomorphism between them is unique.
\end{definition}
For a conclist map $(f, \varepsilon): U \rightarrow V$, since the order $\dashrightarrow$ is total, $f$ is an injective function, which is determined by $V \setminus f(U)= V \setminus \varepsilon^{-1}(\exec)$, and hence by $\varepsilon$. For instance, the identity morphism $\mathbf{id}^{\square}_{V}: V \rightarrow V$ is uniquely determined by $\varepsilon_V$ where $\varepsilon_V (v)=\exec$ for all $v\in V.$ Intuitively the map $f$ injects the list of events of $U$  into the list of events of $V$, while the map $\varepsilon$ guarantees that the events of $U$ are active in $V$ by giving them the value $\exec$, and specifies the state of the remaining events by giving them the value $0$ if they are not yet started and $1$ if they are terminated. 

\paragraph*{Notation}
A morphism $(f, \varepsilon): U \rightarrow V$ might be denoted $d_{A,B}:U \to V$ where $A=\varepsilon^{-1}(0)$ and $B=\varepsilon^{-1}(1)$. Such a morphism is usually called a coface map \cite{KleeneTh}. We write $d_{A}^{0}$ for $d_{A, \emptyset}$ and $d_{B}^{1}$ for $d_{\emptyset, B}$. 
\begin{definition}
    A \emph{precubical set} $X$ is a presheaf over $\square$, 
    that is, a functor $X: \square^{\textsf{op}} \to \mathbf{Set}$.
    A \emph{precubical map} between precubical sets is a natural transformation of functors.
\end{definition}

The value of $X$ on the object $U$ of $\square$ is denoted $X[U]$. For the face map associated to coface map $d_{A, B}: U \backslash(A \cup B) \rightarrow U$, we write $\delta_{A, B}=X[d_{A, B}]: X[U] \rightarrow X[U \backslash(A \cup B)]$. Elements of $X[U]$ are cells of $X$. For any $x \in X[U],$ elements of $U$ are called events of $x$. We write $\ev(x)=U$. A precubical set is said to be finitely branching if every cell is the face of a finite number of cells.
\begin{definition}
    A \emph{higher-dimensional automaton} (HDA) $\mathcal{X}$ is a triple $(X, i_X,F_X)$ where $X$ is a precubical set, $i_X$ is a cell called the initial cell, and $F_X$ is the set of final cells. An HDA map $f: \mathcal{X} \rightarrow \mathcal{Y}$ is a precubical map $X\to Y$ that preserves the initial cell\footnote{In our study, final cells are ignored because they are not relevant for bisimulation} , that is, $f(i_X) = i_Y$. We denote $\HDA$ the category formed by HDAs as objects and HDA maps as morphisms.
\end{definition}
We assume that all HDAs are finitely branching.
  \begin{definition}\label{def: standard cube}
      Let $S$ be a conclist.
      The \emph{standard $S$-cube} is the presheaf $\square^S$ represented by $S$, that is,
\begin{compactitem}
    \item for any \emph{conclists} $T$, $\square^S[T]=\mathbf{hom}_\square(T,S)$;
    \item $\square^S [(f,\varepsilon)](g,\eta) = (g,\eta)\circ (f,\varepsilon)$
    for $(f,\varepsilon) \in \mathbf{hom}_{\square}(U,T)$,
    $(g,\eta)\in \square^S[T]$.
\end{compactitem}
 \end{definition}
 \begin{example}
    Fig.~\ref{fig:cube} and Fig.~\ref{fi: face maps} show examples of standard $S$-cubes, where $S=[a_1 \dashrightarrow a_2 \dashrightarrow a_3]$ in Fig~\ref{fig:cube} and $S=[a \dashrightarrow b]$ in Fig.~\ref{fi: face maps}.
 \end{example}
 Denote $\mathbf{y}_S \in \square^S[S]$ the cell that corresponds to the identity morphism. The following is a crucial result for this work and is implied by the Yoneda lemma.
 \begin{lemma} \label{lemma: Yoneda lemma}
     Let $X$ be a \emph{precubical set}, $S$ be a \emph{conclist}, and $x \in X$. If $\ev(x)=S$, then there exists a unique precubical map $\iota_x: \square^S \to X$ such that $\iota_x(\mathbf{y}_S)=x$.
 \end{lemma}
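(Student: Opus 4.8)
The plan is to recognize this statement as the specialization of the \emph{Yoneda lemma} to the presheaf category $\mathbf{Set}^{\square^{\textsf{op}}}$, in which precubical sets are the objects and precubical maps (natural transformations) are the morphisms. Since $\ev(x)=S$ means precisely $x \in X[S]$, and $\square^S = \mathbf{hom}_\square(-,S)$ is the representable presheaf, what must be shown is that the evaluation assignment sending a natural transformation $\iota: \square^S \to X$ to $\iota_S(\mathbf{y}_S) \in X[S]$ is a bijection; the lemma is exactly the assertion that $x$ has a unique preimage under it.

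First I would construct the candidate map. Given $x \in X[S]$, define $\iota_x$ componentwise: for every conclist $T$ and every $g \in \square^S[T] = \mathbf{hom}_\square(T,S)$, set $\iota_{x,T}(g) = X[g](x) \in X[T]$, which is well-typed because $g: T \to S$ yields $X[g]: X[S] \to X[T]$. Taking $T=S$ and $g = \mathbf{y}_S = \mathbf{id}^{\square}_{S}$ gives $\iota_x(\mathbf{y}_S) = X[\mathbf{id}^{\square}_{S}](x) = x$, which is the required normalization.

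Next I would verify that $\iota_x$ is genuinely a precubical map, i.e.\ that its components are natural. For a conclist map $(f,\varepsilon): U \to T$, the presheaf $\square^S$ acts by precomposition, $\square^S[(f,\varepsilon)](g) = g \circ (f,\varepsilon)$, so the naturality square, after evaluation on an arbitrary $g \in \square^S[T]$, reduces to the identity $X[(f,\varepsilon)]\bigl(X[g](x)\bigr) = X[g \circ (f,\varepsilon)](x)$, which is exactly the contravariant functoriality of $X$. For uniqueness, given any natural $\iota$ with $\iota_S(\mathbf{y}_S) = x$, I would use that every $g \in \mathbf{hom}_\square(T,S)$ satisfies $g = \mathbf{y}_S \circ g = \square^S[g](\mathbf{y}_S)$; naturality applied to the morphism $g: T \to S$ then forces $\iota_T(g) = X[g]\bigl(\iota_S(\mathbf{y}_S)\bigr) = X[g](x) = \iota_{x,T}(g)$, whence $\iota = \iota_x$.

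There is no serious obstacle here, as the result is a direct instance of Yoneda. The only point demanding care is the bookkeeping of the contravariance of the presheaf: I would make sure consistently to use the convention $X[h \circ k] = X[k] \circ X[h]$, to confirm that $\mathbf{y}_S$ is the image of the identity under the representable structure, and to check that the arrows of $\square$ and their images under $X$ compose in the correct order throughout the naturality and uniqueness verifications.
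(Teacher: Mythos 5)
Your proof is correct and is essentially the paper's own argument: the paper simply notes that the lemma "is implied by the Yoneda lemma," and you have spelled out the standard Yoneda bijection (defining $\iota_{x,T}(g)=X[g](x)$, checking naturality and the normalization at $\mathbf{y}_S$, and deriving uniqueness from naturality applied to $g:T\to S$) for this particular representable presheaf. Nothing further is needed.
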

                                                             
 \section{Interval pomsets with interfaces vs.\ track objects}
 
 Pomsets are concurrent counterparts of words \cite{winkowski1977algebraic,PrattCG}. Interval pomsets \cite{fishburn1970intransitive} equipped with interfaces have been used to develop the language theory of HDAs \cite{MyhillNerode},\linebreak \cite{amrane2023developments,LanguageofHDA,KleeneTh}. They generalize the notion of conclist. 
Like standard cubes are presheaf representations of conclists, track objects are presheaf representations of interval pomsets with interfaces.
As we proceed in this section, we revisit these concepts, then we present important new results for constructing the HDA-path category.
\vspace{-0.35cm}
\subsubsection{Background}\label{subsec: ipomset}

A \emph{partially ordered multiset (pomset)} is a tuple $(P, <_P,\dashrightarrow_P,\lambda_P )$ where $P$ is a finite set, $\lambda_{P}: P \rightarrow \Sigma$ is a labeling function over an alphabet $\Sigma$, $<_{P}$ is a strict partial order on $P$ called \emph{precedence order}, and $\dashrightarrow_P$ is a strict partial order on $P$ called \emph{event order} such that the relation $<_P \cup \dashrightarrow_P$ is total. Elements of $P$ are called \emph{events}. Intuitively, the latter condition means that any two events in $P$ either are concurrent, thus can happen in parallel and ordered by $\dashrightarrow_P$, or occur sequentially, thus ordered by $<_P$.
For $x,y \in P$, write $x \parallel y$ if $x$ and $y$ are \emph{incomparable}, i.e, $x\neq y$, $x \nless y$, and $y \nless x$. We say that an element $x \in P$ is \emph{minimal} if there is no element $y \in P$ such that $y<x$. Similarly, we say that $x$ is \emph{maximal} if there is no element $y \in P$ such that $x<y$. 
For $Q \subseteq P$, we say that $Q$ is an \emph{antichain} if $x \parallel y$ for all $x,y \in Q$. An antichain is maximal if it is not contained in another antichain.
As the relation $<_P \cup \dashrightarrow_P$ is total, an antichain is a conclist.

A \emph{partially ordered multiset with interfaces} or \emph{ipomset} is a tuple $(P,<_{P}, \dashrightarrow_{P},$\linebreak$\lambda_{P}, S_{P}, T_{P})$, where $(P, <_P, \dashrightarrow_P,\lambda_P )$ is a pomset, $S_{P}$ is a subset of the $<$-minimal elements of $P$ called \emph{source interface}, and $T_{P}$ is a subset of the $<$-maximal elements of $P$ called \emph{target interface}. The source and target interfaces are antichains, and thus, conclists. An ipomset $P$ with empty precedence order, i.e $P=\left(U,\emptyset, \dashrightarrow_{U}, \lambda_{U}, S, T\right)$, is referred to as a \emph{discrete ipomset} and will be denoted $\subid{S}{U}{T}$. Pomsets may be regarded as ipomsets with empty interfaces. An \emph{interval ipomset} is an ipomset $P$ in which, for $x,y,z,w \in P$, if $x<z$ and $y<w$ then we have either $x<w$ or $y<z$.
\vspace{-.2cm}
    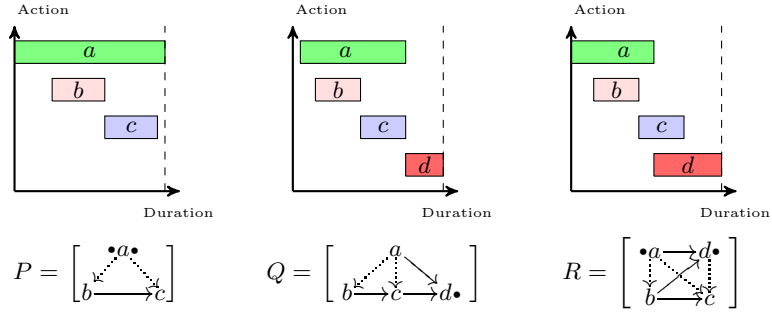
\begin{figure}
  \centering
  \begin{tikzpicture}[x=1cm]
    \def\possh{-1.3}
    \begin{scope}[shift={(0.0,0)}]
      \def\hw{0.3}
      \filldraw[fill=green!50!white,-](0,1.2)--(2,1.2)--(2,1.2+\hw)--(0,1.2+\hw);
      \filldraw[fill=pink!50!white,-](.5,0.7)--(1.2,0.7)--(1.2,0.7+\hw)--(.5,0.7+\hw)--(.5,0.7);
      \filldraw[fill=blue!20!white,-](1.2,0.2)--(1.9,0.2)--(1.9,0.2+\hw)--(1.2,0.2+\hw)--(1.2,0.2);
      \draw[thick,->](0,-.5)--(0,1.7);
      \draw[thick,->](0,-.5)--(2.2,-.5);
      \draw[thin,-,dashed](2,-.5)--(2,1.7);
       \node at (0.4,1.8) { $^{{}^{\text{Action}}}$};
       \node at (2.2,-.9) { $^{{}^{\text{Duration}}}$};
      \node at (1,1.2+\hw*0.5) {$a$};
      \node at (0.85,0.7+\hw*0.5) {$b$};
      \node at (1.55,0.2+\hw*0.5) {$c$};
      \node at (1.1,-1.8+\hw*0.7)  {$P=\bigipomset[.4]{ & \ibullet a \ibullet \ar@{.>}[dl] \ar@{.>}[dr] \\ ~ b \ar[rr] && c ~ }$}; 
    \end{scope}
    \begin{scope}[shift={(3.7,0)}]
      \def\hw{0.3}
       \filldraw[fill=green!50!white,-](0.1,1.2)--(1.5,1.2)--(1.5,1.2+\hw)--(0.1,1.2+\hw)-- (0.1,1.2);
     \filldraw[fill=pink!50!white,-](.3,0.7)--(0.9,0.7)--(0.9,0.7+\hw)--(.3,0.7+\hw)--(.3,0.7);
      \filldraw[fill=blue!20!white,-](0.9,0.2)--(1.5,0.2)--(1.5,0.2+\hw)--(0.9,0.2+\hw)--(0.9,0.2);
       \filldraw[fill=red!60!white,-](1.5,-0.3)--(2,-0.3)--(2,-0.3+\hw)--(1.5,-0.3+\hw)--(1.5,-0.3);
      \draw[thick,->](0,-.5)--(0,1.7);
      \draw[thick,->](0,-.5)--(2.2,-.5);
      \draw[thin,-,dashed](2,-.5)--(2,1.7);
       \node at (0.4,1.8) { $^{{}^{\text{Action}}}$};
       \node at (2.2,-.9) { $^{{}^{\text{Duration}}}$};
      \node at (.7,1.2+\hw*0.5) {$a$};
     \node at (0.6,0.7+\hw*0.5) {$b$};
      \node at (1.2,0.2+\hw*0.5) {$c$};
      \node at (1.75,-0.3+\hw*0.5) {$d$};
        \node at (1.1,-1.8+\hw*0.7){$Q=\bigipomset{& &  a \ar[dr] \ar@{.>}[dl] \ar@{.>}[d] &&  \\ & b \ar[r] &c \ar[r] & d \ibullet &}$}; 
    \end{scope}
    \begin{scope}[shift={(7.4,0)}]
       \def\hw{0.3}
       \filldraw[fill=green!50!white,-](0,1.2)--(1.1,1.2)--(1.1,1.2+\hw)--(0,1.2+\hw);
     \filldraw[fill=pink!50!white,-](.3,0.7)--(0.9,0.7)--(0.9,0.7+\hw)--(.3,0.7+\hw)--(.3,0.7);
      \filldraw[fill=blue!20!white,-](0.9,0.2)--(1.5,0.2)--(1.5,0.2+\hw)--(0.9,0.2+\hw)--(0.9,0.2);
       \filldraw[fill=red!60!white,-](1.1,-0.3)--(2,-0.3)--(2,-0.3+\hw)--(1.1,-0.3+\hw)--(1.1,-0.3);
      \draw[thick,->](0,-.5)--(0,1.7);
      \draw[thick,->](0,-.5)--(2.2,-.5);
      \draw[thin,-,dashed](2,-.5)--(2,1.7);
       \node at (0.4,1.8) { $^{{}^{\text{Action}}}$};
       \node at (2.2,-.9) { $^{{}^{\text{Duration}}}$};
      \node at (.7,1.2+\hw*0.5) {$a$};
     \node at (0.6,0.7+\hw*0.5) {$b$};
      \node at (1.2,0.2+\hw*0.5) {$c$};
      \node at (1.55,-0.3+\hw*0.5) {$d$};
        \node at (1.1,-1.8+\hw*0.7){$R=\bigipomset{& \ibullet a \ar[r] \ar@{.>}[d] \ar@{.>}[dr] & d \ibullet \ar@{.>}[d] & \\ & b \ar[ur] \ar[r] & c & }$}; 
    \end{scope}
  \end{tikzpicture}
  \caption{Interval ipomsets (below) with their corresponding interval representations (above). An event with a dot on the left (resp.\@ on the right) is an element of a source (resp.\@ target) interface. Full arrows indicate precedence order, while dashed arrows indicates event order.}
  \label{fi:interval rep}
\end{figure}
\begin{definition}
  Let $P$ and $Q$ be ipomsets with $T_P\simeq S_Q$. The \emph{gluing composition} of $P$ and $Q$ is $P*Q=( (P\sqcup Q)_{x\simeq f(x)}, \mathord<, \mathord{\intord},\lambda,  S_P,T_Q),$ where $(P\sqcup Q)_{x\simeq f(x)}$ is the disjoint union of $P$ and $Q$ quotiented by the unique isomorphism $f: T_P\to S_Q$, and $ \lambda( x)=
\begin{cases}
  \lambda_P( x) \text{if } x\in P,\quad ~~~~~~  &\mathord< = \mathord{<_P}\cup \mathord{<_Q}\cup( P\setminus T_P)\times( Q\setminus S_Q),\\
  \lambda_Q( x) \text{if } x\in Q,     &\mathord{\intord}=( \mathord{\intord_P}\cup \mathord{\intord_Q})^+
\end{cases}$
 

\end{definition}
\begin{definition}
    Let $P$ an ipomset. We define the order $\prec$ on maximal antichains of $P$ as follows: $U \prec T$ iff $U \neq T$ and for all $u \in U$, $t \in T$, $t \nless_P u$.
\end{definition}
\begin{proposition} (\cite{LanguageofHDA,janicki1993structure}) \label{prop: decomposition into discrete }
    Let $P$ be an ipomset. The following assertions are equivalent:
    \begin{compactenum}
        \item $P$ is an interval order;
        \item \label{eq: maximal dec} $P$ is a finite gluing of discrete ipomsets; 
         \item \label{eq: antichain decomposition} The order $\prec$ defined on the maximal antichains of $P$ is linear.
    \end{compactenum}
\end{proposition}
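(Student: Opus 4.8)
We prove the cycle of implications $(1) \Rightarrow (3) \Rightarrow (2) \Rightarrow (1)$, which lets each step reuse the structural facts established by the previous one.

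\emph{Plan for $(1) \Rightarrow (3)$.} The task is to show that $\prec$ is a strict linear order on the maximal antichains of $P$. Irreflexivity is built into the definition, which requires $U \neq T$. For antisymmetry, observe that if both $U \prec T$ and $T \prec U$ held, then for all $u \in U$, $t \in T$ we would have $t \nless_P u$ and $u \nless_P t$, so $U \cup T$ would be an antichain; maximality of $U$ and $T$ then forces $U = T$, contradicting $U \neq T$. Transitivity needs only maximality: given $U \prec T \prec W$, suppose toward a contradiction that $w <_P u$ for some $w \in W$, $u \in U$. Maximality of $T$ makes $w$ equal to, or comparable with, some $t \in T$; the relation $T \prec W$ forbids $w <_P t$ and $U \prec T$ forbids $t <_P u$, so the remaining possibility $t <_P w <_P u$ again yields $t <_P u$, a contradiction. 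Since $U = W$ is excluded by antisymmetry, we conclude $U \prec W$. The \emph{only} place the interval hypothesis enters is totality: if neither $U \prec T$ nor $T \prec U$ holds for distinct $U,T$, then there exist $t <_P u$ and $u' <_P t'$ with $u,u' \in U$ and $t,t' \in T$; the interval condition applied to $t <_P u$ and $u' <_P t'$ forces $t <_P t'$ or $u' <_P u$, each of which is impossible inside an antichain.

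\emph{Plan for $(3) \Rightarrow (2)$.} List the maximal antichains as $A_1 \prec \dots \prec A_n$. I would first establish two structural facts by the same maximality argument as above: (i) for each $x \in P$ the index set $\{k : x \in A_k\}$ is an interval $\{i(x), \dots, j(x)\}$ — indeed, if $x \in A_i \cap A_k$ with $i < l < k$, then $A_i \prec A_l \prec A_k$ makes every element of $A_l$ incomparable with $x$, so $A_l \cup \{x\}$ is an antichain and $x \in A_l$; and (ii) two events lie in a common maximal antichain exactly when they are incomparable, which together with the linearity of $\prec$ gives $x <_P y \iff j(x) < i(y)$. I then set $P_k$ to be the discrete ipomset carried by $A_k$, with event order inherited from $P$ (total on the conclist $A_k$), source interface $A_{k-1}\cap A_k$ and target interface $A_k \cap A_{k+1}$, using fictitious $A_0, A_{n+1}$ to encode $S_P$ and $T_P$. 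Fact (ii) then identifies the induced precedence of $P_1 * \dots * P_n$ with $<_P$ and matches interfaces along each gluing, giving $P_1 * \dots * P_n \simeq P$.

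\emph{Plan for $(2) \Rightarrow (1)$.} Write $P = P_1 * \dots * P_n$ with each $P_k$ discrete and assign to every event the interval $[i(x), j(x)]$ recording the first and last factor it occupies (well defined, since an event of the gluing is a chain of events identified across consecutive interfaces). Because the gluing precedence relates $x$ and $y$ precisely when the last factor of $x$ precedes the first factor of $y$, one checks $x <_P y \iff j(x) < i(y)$; this is an interval representation, and any order admitting one is \twotwo-free, hence an interval order.

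I expect the main obstacle to be the bookkeeping in $(3) \Rightarrow (2)$: although facts (i) and (ii) fall out cleanly from maximality, assembling the factors $P_k$ with the correct interfaces and verifying that $P_1 * \dots * P_n$ reproduces not only the precedence order but also the event order and the global interfaces $S_P, T_P$ of $P$ requires careful handling of the degenerate cases (empty interfaces between consecutive antichains, and the boundary factors $P_1$ and $P_n$). By contrast, the implication $(1)\Rightarrow(3)$ is essentially routine once the totality step is isolated, and the interval representation in $(2)\Rightarrow(1)$ is immediate.
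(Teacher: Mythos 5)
The paper does not prove this proposition at all: it is imported from the cited literature (Fahrenberg et al.\ and Janicki--Koutny), and no argument for it appears in the appendix of omitted proofs. So there is no in-paper proof to compare against; your proposal has to stand on its own, and it does. The cycle $(1)\Rightarrow(3)\Rightarrow(2)\Rightarrow(1)$ is sound, and the two load-bearing steps are handled correctly: the interval ($\twotwo$-free) hypothesis is used exactly once, to derive totality of $\prec$ from a pair of cross-comparabilities $t<_P u$ and $u'<_P t'$ that would otherwise violate the antichain property of $U$ or $T$; and the converse direction correctly extracts an interval representation $x\mapsto[i(x),j(x)]$ from a discrete decomposition, with $x<_P y$ iff $j(x)<i(y)$, from which $\twotwo$-freeness is immediate. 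Your facts (i) and (ii) in $(3)\Rightarrow(2)$ (consecutiveness of the occurrence set of each event, and incomparability iff shared maximal antichain) are exactly what is needed to see that the antichain decomposition reproduces $<_P$, and the resulting factors are precisely the minimal discrete decomposition of Definition~\ref{def: minimal discrete decomposition}. The remaining bookkeeping you flag (boundary interfaces $S_P$, $T_P$ and recovering the event order, which in this framework only needs to agree on incomparable pairs) is genuinely routine. One tiny point worth making explicit in a full write-up: in the transitivity argument, the subcase $w\in T$ should be noted separately, since there $w<_P u$ directly contradicts $U\prec T$ rather than passing through a third element $t$.
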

In this work, we focus solely on interval ipomsets: \textbf{all ipomsets are assumed to be interval even if not stated explicitly.}

 Another instrumental tool for this work is track objects. They generalize the standard cubes of Def.~\ref{def: standard cube}, replacing a conclist with an interval ipomset.
\begin{definition} \label{def:track obj}
   For a given interval ipomset $P$, the \emph{track object} is an HDA $(\square^P, i_{\square^P},f_{\square^P})$ where $\pobj{P}[U]=\hom_{\intpom}(U,P)$, $\pobj{P}[(f,\varepsilon)](g,\zeta)=(f,\varepsilon)\circ (g,\zeta)$, and $i_{ \pobj{P}}=(S_P\xrightarrow{\subseteq} P,\varepsilon)$
and $f_{ \pobj{P}}=(T_P\xrightarrow{\subseteq} P,\zeta)$
where \[ \varepsilon=
\begin{cases}
\exec & \text{if } p\in S_P, \\
0 & \text{if } p\not\in S_P,
\end{cases}
\qquad
\zeta=
\begin{cases}
\exec & \text{if } p\in T_P, \\
1 & \text{if } p\not\in T_P.
\end{cases} \]
\end{definition}
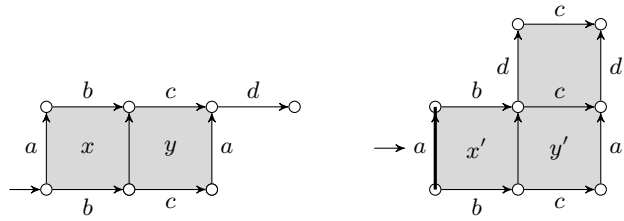
\begin{figure}
    \centering
  \begin{tikzpicture}[x=1.1cm, y=1.1cm, initial text=]
  \begin{scope}[shift={(-3.5,0)}]
        \path[fill=black!15] (0,0) to (2,0) to (2,1) to (1,1) to
      (0,1);

      \node[state] (20) at (2,0) {};
      \node[state] (10) at (1,0) {};
      \node[state] (31) at (3,1) {};
      \node[state, initial=arrow] (00) at (0,0) {};
      \node[state] (10) at (1,0) {};
      \node[state] (20) at (2,0) {};
      \node[state] (01) at (0,1) {};
      \node[state] (11) at (1,1) {};
      \node[state] (21) at (2,1) {};
      \path (00) edge node[below] {$\vphantom{d}b$} (10);
      \path (10) edge node[below] {$c$} (20);
      \path (01) edge node[above] {$b$} (11);
      \path (11) edge node[above] {$c$} (21);
      \path (00) edge node[left] {$a$} (01);
      \path (10) edge (11);
      \path (20) edge node[right] {$a$} (21);
   \path (21) edge node[above] {$d$} (31);
      \node (.5.5) at (.5,.5) {$x$};
      \node (1.5.5) at (1.5,.5) {$y$};
      \end{scope}
\begin{scope}[shift={(1.2,0)}]
        \path[fill=black!15] (0,0) to (2,0) to (2,1) to (1,1) to
      (0,1);
       \path[fill=black!15] (1,1) to (2,1) to (2,2) to (1,2) to (1,1);
       \node[state] (22) at (2,2) {};
        \node[state] (12) at (1,2) {};
      \node[state] (00) at (0,0) {};
      \node[state] (10) at (1,0) {};
      \node[state] (20) at (2,0) {};
      \node[state] (01) at (0,1) {};
      \node[state] (11) at (1,1) {};
      \node[state] (21) at (2,1) {};
      \path (00) edge node[below] {$\vphantom{d}b$} (10);
      \path (10) edge node[below] {$c$} (20);
      \path (01) edge node[above] {$b$} (11);
      \path (11) edge node[above] {$c$} (21);
      \path (00) edge node[left, initial] {$a$} (01);
      \path (10) edge (11);
      \path (20) edge node[right] {$a$} (21);
     \path (11) edge node[left] {$d$} (12);
     \path (21) edge node[right] {$d$} (22);
     \path (12) edge node[above] {$c$} (22);
      \node (.5.5) at (.5,.5) {$x'$};
      \node (1.5.5) at (1.5,.5) {$y'$};
      \draw[-, very thick, black] (0,0) -- (0,1) ;
      \end{scope}
  \end{tikzpicture}
  \caption{Example of track objects: $\square^Q$ on the left and $\square^R$ on the right, where $Q$ and $R$ are the interval ipomsets of Fig.~\ref{fi:interval rep}. The initial cell on the right is highlighted with a thick arrow.}
  \label{fig: track obj exmp}
\end{figure}
\begin{example}
Fig.~\ref{fig: track obj exmp} shows examples of track objects. For instance, the cell $x$ resp. $y$ is modeled by $(f,x) \in \pobj{Q}[U]$ resp. $(g,y) \in \pobj{Q}[T]$, where $U= [b \dashrightarrow a]$, $T= [c \dashrightarrow a]$, and \begin{gather*}
    x= \ipomset{& &  \exec \ar[dr] &&  \\ & \exec   \ar[r]  \ar@{.>}[ur] & 0  \ar@{.>}[u] \ar[r] &   0  &} ~~~~~~~~~~~~~~ \qquad%
     y= \ipomset{& &  \exec \ar[dr] &&  \\ & 1  \ar@{.>}[ur]   \ar[r] & \exec  \ar@{.>}[u] \ar[r] &   0  &}  \qquad%
\end{gather*}
It is easy to determine $f$ and $g$ as they are label preserving.
\end{example}
\subsection{The category of interval ipomsets}\label{subsec: ipomset category}
The decomposition of an interval ipomset $P$ into discrete ipomsets is not unique. However, there is a special decomposition that is unique with respect to specific properties, which we call the minimal discrete decomposition, defined as follows.
\begin{definition} \label{def: minimal discrete decomposition}
The \emph{minimal discrete decomposition} of an interval ipomset $P$ into discrete ipomsets is $P=P_1* \dots * P_m$ where $P_i=( Q_i, \emptyset, \intord\rest{P_i}, \lambda\rest{P_i}, S_i, T_i)$, $Q_i$ are the maximal antichains\footnote{In this case, we have $Q_1 \prec \dots \prec Q_m$}, $S_1= S_P$, $T_m= T_P$, and $T_i= S_{ i+ 1}= P_i\cap P_{ i+ 1}$.
\end{definition}
The minimal discrete decomposition plays a central role in our work. We rely heavily on it throughout the forthcoming proofs. Notably, such decomposition is unique.
\begin{example}\label{exm:minimal disc decom}
    The minimal discrete decompositions of interval ipomsets of Fig.~\ref{fi:interval rep} are as follows:
\begin{equation*}
  P = \bigloset{\ibullet a \ibullet \\ b} * \bigloset{\ibullet a \ibullet \\ c}; \qquad
  Q = \bigloset{a \ibullet \\ b\phantom{\ibullet}} * \bigloset{\ibullet a \\ \phantom{\ibullet}c} * d \ibullet; \qquad
  R = \bigloset{\ibullet a \ibullet \\ b} * \bigloset{\ibullet a \phantom{\ibullet}\\ \phantom{\ibullet}c \ibullet} * \bigloset{\phantom{\ibullet}d \ibullet \\ \ibullet c\phantom{\ibullet}}.
\end{equation*}

\end{example}
Since the notion of interval ipomset generalizes the concept of conclist, it is convenient to think about a category with interval ipomsets as objects that extends the $\square$ category. 
\begin{definition} \label{def: inP category}
    The category $\intpom$ consists of the following.
\begin{compactitem}
 \item Objects are interval ipomsets;
 \item A morphism\footnote{For morphisms of $\intpom$, we do not care about interfaces} between two interval ipomsets $P$ and $Q$ is a pair $(f,\varepsilon)$ such that  $f:P \to Q$ is an injective map that reflects the precedence order, that is, for $x, y\in P$, if $f( x)<_Q f( y)$ then $x<_P y$ and preserves the essential event order, i.e, for $x\parallel y\in P$ if $x \intord_P y$ then $f(x)\intord_Q f(y)$; and $\varepsilon: Q \rightarrow \{ 0,\exec,1 \}$ such that $f(P)=\varepsilon^{-1}(\exec)$ and $ \text{ if } q <_Q q' \text{ then } (\varepsilon(q),\varepsilon(q'))\in {\preceq_{ipom}},$
\end{compactitem} 
  \hspace{.5cm}  where ${\preceq_{ipom}}=\{ (1,1),(0,0),(\exec,\exec),(1,\exec),(1,0),(\exec,0)  \}$   
        \begin{compactitem}
        \item  The composition of morphisms $(f,\varepsilon):P\to Q$ and
    $(g,\zeta):Q\to R$ is \linebreak $\bigl( (g,\zeta) \circ (f,\varepsilon) \bigl)=(g\circ f,\eta)$, where $   \eta(u)=
      \begin{cases}
        \varepsilon(g^{-1}(u)) & \text{for $u\in g(Q)$}, \\
        \zeta(u) & \text{otherwise}.
      \end{cases}$
    \end{compactitem}
    We write $P\simeq Q$ if there exists a bijective map $f:P \to Q$ such that $f$ is also an order isomorphism. If such an isomorphism exists, then it is unique \cite{LanguageofHDA}.
\end{definition}
The definition of the morphisms of the category $\intpom$ will serve later to define the track objects (Def~\ref{def:track obj}). The intuition for the values of $\varepsilon(q)$ is to be $1$ if the event $q$ happens before the events of $f(P)$, $\exec$ if the event $q$ is in $f(P)$, and $0$ if the event $q$ happens after the events of $f(P)$.
That is why we allow all possible cases for $(\varepsilon(q),\varepsilon(q'))$, in $\preceq_{ipom}$, except the cases where $q'$ terminate while $q \in f(P)$ so we eliminate pairs $(\exec,1)$ and the case where $q$ has not started yet while $q'\in f(P)$ so we eliminate $(0,\exec)$.


\begin{definition}\label{def: morphism related to gluing.}
    Let $P$ and $R$ be composable ipomsets.
    There are two morphisms related to the gluing\footnote{We regard both $P$ and $Q$ as sub-pomsets of $P*Q$.} $P*R$:
    \begin{compactitem}
        \item \emph{initial inclusions} $i_P^{P*R}=(P\subseteq P*R, \varepsilon)$, and 
        \item \emph{final inclusions} $f_R^{P*R}=(R\subseteq P*R,\zeta)$, where \[
    \varepsilon(x)
    =
    \begin{cases}
        \exec & \text{for $x\in P$},\\
        0 & \text{otherwise},
      \end{cases}
      \qquad
    \zeta(x)
    =
    \begin{cases}
        \exec & \text{for $x\in R$},\\
        1 & \text{otherwise}.
    \end{cases}
  \]
\end{compactitem}
\end{definition}

\begin{definition} \label{def: inP0 category}
    For a conclist $S$, let $\intpom^0_S\subseteq \intpom$ be a subcategory with ipomsets $P$ with $S_P=S$ as objects and morphisms\footnote{ Since $i_P^{P*R}$ is uniquely determined by $R$, we might identify $i_P^{P*R}$ and $R$.}  $\hom_{\intpom^0_S}(P,Q)=\{i_P^{P*R} \mid \text{$R$ is an ipomset}$\linebreak$\text{such that $P*R\cong Q$}\}. $ 
   We define the category $\intpom^0= \bigcup_{S \in \square}\intpom^0_S$.
\end{definition}
\begin{example}\label{ex: initial inc}
The following are initial inclusions. 
\begin{compactenum}
\item $R_1: \ipomset{ & \ibullet a  \ibullet  &  \\& b &} \to \ipomset[.2]{ & \ibullet a \ibullet  \\ ~ b \ar[rr] && c~}$ where $R_1=\ipomset{ & \ibullet a \ibullet  &  \\  & c &}$;
\item $R_2: \ipomset[.2]{ & \ibullet a  \\ ~ b \ar[rr] && c ~} \to \ipomset{& \ibullet a   \ar[r] &   d \ibullet  &  \\ & b \ar[ur] \ar[r] &c & }$, where 
 $R_2=\ipomset{ & \phantom{\ibullet} d \ibullet  &  \\  & \ibullet c\phantom{\ibullet} &}$.
\end{compactenum}
\end{example}
We use categories $\intpom^0_S$ to construct the bisimulation and the modal logic later in Section~\ref{Sec: bis and ml}. In the next section, we show how we may regard $\intpom^0$ as a subcategory of $\HDA$, as required to apply the open map technique.

\subsection{Defining the HDA-path category}\label{subsec : HDA-path category}


  \begin{theorem}\label{th: pomset and track objects}
   The functor $\Tr: \intpom \rightarrow \HDA$, given by formulas $\Tr(P)=\square^P$ and
    $\Tr(f,\varepsilon)(g,\zeta)= (f,\varepsilon)\circ (g,\zeta) \text{ for } (f,\varepsilon) \in \hom_{\intpom}(P,Q)$, is faithful.
  \end{theorem}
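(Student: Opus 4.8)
\emph{Plan.} Faithfulness of $\Tr$ amounts to injectivity of the map $\hom_{\intpom}(P,Q)\to\hom_{\HDA}(\square^P,\square^Q)$, $(f,\varepsilon)\mapsto\Tr(f,\varepsilon)$, for all interval ipomsets $P,Q$. By definition $\Tr(f,\varepsilon)$ acts on a cell $(g,\zeta)\in\square^P[U]=\hom_{\intpom}(U,P)$ by post-composition, $(g,\zeta)\mapsto(f,\varepsilon)\circ(g,\zeta)$. The plan is therefore to reconstruct the pair $(f,\varepsilon)$ from the images of a well-chosen finite family of cells of $\square^P$; once this is done, any two parallel morphisms with the same $\Tr$-image must agree on that family and hence coincide.

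The conceptual point is that, unlike for a standard cube (Lemma~\ref{lemma: Yoneda lemma}), $P$ need not be a conclist, so there is no single representing cell $\mathbf{y}_P\in\square^P$ to evaluate on. Instead I would probe $\square^P$ with its top-dimensional cells. Using the minimal discrete decomposition $P=A_1*\dots*A_m$ (Def.~\ref{def: minimal discrete decomposition}), whose pieces are exactly the maximal antichains $A_1\prec\dots\prec A_m$ of $P$ linearly ordered by Prop.~\ref{prop: decomposition into discrete }, each $A_i$ is a conclist and its inclusion yields a cell $y_i=(\iota_i,\zeta_i)\in\square^P[A_i]$, where $\iota_i:A_i\hookrightarrow P$ and $\zeta_i$ assigns $\exec$ to the events of $A_i$, $1$ to the events strictly preceding $A_i$, and $0$ to those strictly following it. A preliminary step is to check that each $y_i$ is a genuine $\intpom$-morphism: the antichain condition makes the precedence-reflection requirement vacuous, and maximality of $A_i$ together with the interval property forces every event outside $A_i$ to be uniformly before or after it, so that $\zeta_i$ meets the $\preceq_{ipom}$ constraint and never produces a forbidden pair $(\exec,1)$ or $(0,\exec)$.

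Next I would compute the images. By the composition formula of Def.~\ref{def: inP category}, $\Tr(f,\varepsilon)(y_i)=(f\circ\iota_i,\eta_i)$, where $\eta_i(u)=\zeta_i(f^{-1}(u))$ for $u\in f(P)$ and $\eta_i(u)=\varepsilon(u)$ otherwise. From this data $(f,\varepsilon)$ is recovered as follows. Every event of $P$ lies in some maximal antichain, so $\bigcup_i A_i=P$ and the underlying maps $f\circ\iota_i=f\rest{A_i}$ jointly determine $f$ on all of $P$. For the $\varepsilon$-component, $f(P)=\varepsilon^{-1}(\exec)$ forces $\varepsilon\equiv\exec$ on $f(P)$; and on $Q\setminus f(P)$ the formula gives $\eta_i(u)=\varepsilon(u)$, so $\varepsilon$ there is read off directly from any single $\eta_i$. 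Hence $(f,\varepsilon)$ is entirely determined by the finite family $\{\Tr(f,\varepsilon)(y_i)\}_{i=1}^m$, which gives the desired injectivity.

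I expect the main obstacle to be bookkeeping rather than any deep idea: one must verify carefully that the probing cells $y_i$ are legitimate morphisms of $\intpom$ (respecting essential event order and the $\preceq_{ipom}$ constraint, which is where the interval hypothesis is genuinely used), and that evaluating the composition formula really exposes $\varepsilon$ on the complement of the image. Well-definedness of $\Tr$ itself, i.e.\ that each $\Tr(f,\varepsilon)$ is an actual map in $\HDA$, is needed only as the ambient fact that the assignment makes sense, and plays no role in the injectivity argument.
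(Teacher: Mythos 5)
Your proposal is correct, and it takes a genuinely different route from the paper's own argument. The paper proves faithfulness by a one-line Yoneda-style computation: it evaluates the equal composites on the single test morphism $(g,\zeta)=\mathbf{id}_{\intpom}^P$ and reads off $f_1=f_2$ and $\varepsilon_1=\varepsilon_2$ directly from the composition formula of Def.~\ref{def: inP category}. That is shorter, but it quietly uses $\mathbf{id}_P\in\hom_{\intpom}(P,P)$ as a test cell, and $\hom_{\intpom}(P,P)$ is a cell set $\square^P[P]$ of the precubical set $\square^P$ only when $P$ is itself a conclist; for a non-discrete interval ipomset the hypothesis $\Tr(f_1,\varepsilon_1)=\Tr(f_2,\varepsilon_2)$ only gives agreement on components indexed by conclists, so the paper's probe is not literally available. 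Your substitute --- probing with the family of top-dimensional cells $y_i=(\iota_i,\zeta_i)$ attached to the maximal antichains of the minimal discrete decomposition, recovering $f$ from the first components since $\bigcup_i A_i=P$, and recovering $\varepsilon$ on $Q\setminus f(P)$ from any single $\eta_i$ --- uses the same underlying mechanism (the composition formula exposes $\varepsilon$ off the image) but with legitimate cells of $\square^P$, at the cost of the extra bookkeeping you anticipate: checking that each $(\iota_i,\zeta_i)$ satisfies the $\preceq_{ipom}$ constraint, which follows from maximality of the antichains exactly as you sketch. In short, your version is longer but is the one that actually closes the argument at the level of precubical maps; the paper's version buys brevity by eliding the discrete/non-discrete distinction.
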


\begin{definition}
The category $\TrO^0$ is the subcategory of $\HDA$ given by $\TrO^0 = \Tr (\intpom^0)$. Thus, it is defined by track objects as objects and morphisms are $\jneda_P^{P*R}=\Tr(i_P^{P*R})$, for $\jneda_P^{P*R}$ defined in Def.~\ref{def: morphism related to gluing.}, and called \emph{initial inclusions}.
\end{definition}
Similarly, we call $\Tr(f_P^{P*R})$, for $f_P^{P*R}$ defined in Def.\@ \ref{def: morphism related to gluing.}, a \emph{final inclusion,} and we write $\mathbf{f}_P^{P*R}$.

\begin{definition}
   Let $\square^Q$ and $\square^R$ be track objects such that $T_Q \simeq S_R \simeq U$. The \emph{gluing composition} of $\square^Q$ and $\square^R$ is the pushout HDA $(\square^Q*\square^R, I_{\square^Q},F_{\square^R})$ where  $\square^Q*\square^R=\mathrm{colim}(\square^R\stackrel{\jneda_U^{R}}{\longleftarrow} \square^U \stackrel{\jnedafinal_U^{Q}}{\longrightarrow} \square^Q ) $
\end{definition}
\begin{lemma}(\cite{LanguageofHDA}) \label{lemma: pomset pushout}
    If $Q$ and $R$ are composable ipomsets,
    then $\pobj{ Q*R} =\pobj{ Q}*\pobj{ R}$. 
In addition, $\jneda_{Q}^{Q*R}(g,\zeta)=(g,\jneda_{Q}^{Q*R}(\zeta))$ and $\jnedafinal_{R}^{Q*R}(g,\zeta)=(g,\jnedafinal_{R}^{Q*R}(\zeta))$ are given by
\begin{align*}
      \jneda_{Q}^{Q*R}(\zeta)(p)=
    \begin{cases}
      \zeta(p) & \text{for $p\in Q$}, \\
      0 & \text{otherwise},
    \end{cases} \quad \quad \quad \quad \quad &    \jnedafinal_{R}^{Q*R}(\zeta)(p)=
    \begin{cases}
      \zeta(p) & \text{for $p\in R$}, \\
      1 & \text{otherwise}.
    \end{cases}
\end{align*} 
\end{lemma}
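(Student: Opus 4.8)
The plan is to use that track objects are presheaves over $\square$ and that the gluing $\pobj{Q}*\pobj{R}$ is defined as a pushout; since colimits of presheaves are computed objectwise, it suffices to establish, naturally in each conclist $W$, that $\pobj{Q*R}[W]=\hom_\intpom(W,Q*R)$ is the pushout in $\mathbf{Set}$ of the span $\hom_\intpom(W,Q)\xleftarrow{\jnedafinal_U^Q}\hom_\intpom(W,U)\xrightarrow{\jneda_U^R}\hom_\intpom(W,R)$ (post-composition with the two legs), with coprojections $\jneda_Q^{Q*R}[W]$ and $\jnedafinal_R^{Q*R}[W]$. The initial-cell data then matches automatically: since $S_{Q*R}=S_Q$, the initial cell $i_{\pobj{Q*R}}$ is the image of $i_{\pobj{Q}}$ under the initial inclusion $\jneda_Q^{Q*R}$, which is exactly the declared initial cell $I_{\pobj{Q}}$ of the pushout.

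First I would check that the relevant square commutes. By functoriality and faithfulness of $\Tr$ (Th.~\ref{th: pomset and track objects}), this reduces to the identity $i_Q^{Q*R}\circ f_U^Q = f_R^{Q*R}\circ i_U^R$ in $\intpom$, both sides being the inclusion $U\hookrightarrow Q*R$ of the glued interface. A short application of the composition rule of Def.~\ref{def: inP category}, using the labelings of Def.~\ref{def: morphism related to gluing.}, shows both composites carry the same $\varepsilon$-labeling, namely $\exec$ on $U$, $1$ on $Q\setminus U$, and $0$ on $R\setminus U$; hence the square commutes.

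The heart of the argument is a combinatorial analysis of an arbitrary morphism $(h,\xi)\colon W\to Q*R$. Since $W$ is a conclist and $h$ reflects precedence, the image $h(W)=\xi^{-1}(\exec)$ is an antichain of $Q*R$. By the definition of the gluing, every $q\in Q\setminus U$ and $r\in R\setminus U$ satisfy $q<_{Q*R} r$; consequently no antichain contains both a $Q\setminus U$-event and an $R\setminus U$-event. Moreover, evaluating the constraint $(\xi(q),\xi(r))\in\preceq_{ipom}$ at such a comparable pair forbids $\xi(q)\in\{\exec,0\}$ together with $\xi(r)\in\{\exec,1\}$ in every case except $\xi(q)=\xi(r)=\exec$, which is itself impossible because $q$ and $r$ cannot both lie in the antichain $\xi^{-1}(\exec)$. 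Hence either $\xi\equiv 0$ on $R\setminus U$, or $\xi\equiv 1$ on $Q\setminus U$. In the first case $h(W)\subseteq Q$ and $(h,\xi)$ factors uniquely as $\jneda_Q^{Q*R}\circ(h',\xi\rest{Q})$ with $(h',\xi\rest{Q})\in\hom_\intpom(W,Q)$; in the second it factors uniquely through $\jnedafinal_R^{Q*R}$; and $(h,\xi)$ lies in both images exactly when $\xi\equiv1$ on $Q\setminus U$, $\xi\equiv0$ on $R\setminus U$ and $h(W)\subseteq U$, i.e.\ exactly when it arises from $\hom_\intpom(W,U)$ through the commuting square. Since $\jneda_Q^{Q*R}[W]$ and $\jnedafinal_R^{Q*R}[W]$ are injective (the $\varepsilon$-labeling of the factor is recovered by restriction), their images cover $\hom_\intpom(W,Q*R)$ and meet precisely in the image of $\hom_\intpom(W,U)$; this exhibits the required set-theoretic pushout, naturally in $W$, proving $\pobj{Q*R}=\pobj{Q}*\pobj{R}$.

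Finally, the explicit formulas are obtained by direct computation: unfolding $\jneda_Q^{Q*R}(g,\zeta)=\Tr(i_Q^{Q*R})(g,\zeta)=i_Q^{Q*R}\circ(g,\zeta)$ and applying the composition rule of Def.~\ref{def: inP category} — with $i_Q^{Q*R}$ labeled $\exec$ on $Q$ and $0$ on $R\setminus U$ as in Def.~\ref{def: morphism related to gluing.} — yields $\jneda_Q^{Q*R}(\zeta)(p)=\zeta(p)$ for $p\in Q$ and $0$ otherwise, and symmetrically $\jnedafinal_R^{Q*R}(\zeta)(p)=\zeta(p)$ for $p\in R$ and $1$ otherwise using the final inclusion. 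I expect the main obstacle to be the dichotomy of the third paragraph: correctly isolating that it is the $\varepsilon$-data across the gluing boundary, rather than merely the underlying antichain, that determines on which side of the pushout a morphism lives, and verifying that the overlap is exactly the image of $\pobj{U}$.
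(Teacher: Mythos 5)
Your proof is correct, but note that the paper itself contains no proof of this lemma: it is imported from \cite{LanguageofHDA} (the citation attached to the statement), and no argument for it appears in the body or in the appendix of omitted proofs. So there is nothing in-paper to compare against; what you have written is a self-contained reconstruction. Your route --- computing the pushout of presheaves objectwise and showing that $\hom_{\intpom}(W,Q*R)$ is covered by the images of $\hom_{\intpom}(W,Q)$ and $\hom_{\intpom}(W,R)$, which overlap exactly in the image of $\hom_{\intpom}(W,U)$ --- is the natural one, and the key step is exactly where you locate it: the dichotomy forced by requiring $(\xi(q),\xi(r))\in{\preceq_{ipom}}$ for every pair $q\in Q\setminus U$, $r\in R\setminus U$ with $q<_{Q*R}r$, which leaves $(\exec,\exec)$ as the only surviving forbidden combination, and that case is excluded because $\xi^{-1}(\exec)=h(W)$ is an antichain while $q<_{Q*R}r$. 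Two small points are worth recording explicitly if you write this up: (i) injectivity of the two legs $\jneda_Q^{Q*R}[W]$ and $\jnedafinal_R^{Q*R}[W]$ is what upgrades ``the images cover $\hom_{\intpom}(W,Q*R)$ and meet precisely in the image of $\hom_{\intpom}(W,U)$'' to the universal property of a pushout of sets; (ii) when you restrict $(h,\xi)$ to a factor you should check that $<_{Q*R}$ restricted to $R\times R$ (resp.\ $Q\times Q$) coincides with $<_R$ (resp.\ $<_Q$) --- this uses that $U=T_Q=S_R$ is an antichain in both factors and that $(Q\setminus T_Q)\cap R=\emptyset$ --- so that reflection of precedence and the $\preceq_{ipom}$ condition are genuinely inherited by the restricted morphism. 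Neither point is a gap, only a detail to make explicit; the final computation of the two labeling formulas from the composition rule of Def.~\ref{def: inP category} is routine and correct.
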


\begin{proposition}\label{prop: finite gluing of standard cubes}
  Let $\pobj{P}$ be a track object. If $P=P_1*P_2* \dots *P_{m}$ is the minimal discrete decomposition of $P$, then $\pobj{ P}=\pobj{ P_1}* \dots * \pobj{ P_{m}}$.
\end{proposition}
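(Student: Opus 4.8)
The plan is to prove the identity by induction on the length $m$ of the minimal discrete decomposition, using Lemma~\ref{lemma: pomset pushout} as the engine of the inductive step. The base case $m=1$ is immediate, since then $P=P_1$ and hence $\square^P=\square^{P_1}$. The whole content of the argument is therefore in peeling off one factor at a time and matching the gluing of ipomsets with the gluing of the corresponding track objects.

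For the inductive step I would first record the composability data supplied by Def.~\ref{def: minimal discrete decomposition}: since $T_i=S_{i+1}=P_i\cap P_{i+1}$ for every $i$, consecutive factors are composable. By associativity of the gluing composition on ipomsets I can bracket the decomposition as $P=Q*P_m$, where $Q=P_1*\dots*P_{m-1}$. Here $Q$ is a finite gluing of discrete (hence interval) ipomsets, so by Prop.~\ref{prop: decomposition into discrete } it is an interval ipomset and $\square^Q$ is a well-defined track object; moreover $T_Q=T_{P_{m-1}}\simeq S_{P_m}$, so $Q$ and $P_m$ are composable. Applying Lemma~\ref{lemma: pomset pushout} to the pair $(Q,P_m)$ then yields $\square^P=\square^{Q*P_m}=\square^Q*\square^{P_m}$.

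To finish, I would invoke the induction hypothesis $\square^Q=\square^{P_1}*\dots*\square^{P_{m-1}}$ and substitute it into the identity above. Together with associativity of the gluing composition of track objects — which holds because the gluing is defined as a pushout and iterated pushouts along the initial/final inclusions can be computed as a single colimit, hence associate up to canonical isomorphism — this gives $\square^P=\square^{P_1}*\dots*\square^{P_m}$, as required.

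The main obstacle I anticipate is bookkeeping rather than anything conceptual: I must ensure that the two flavours of associativity I use, for the $*$ on interval ipomsets and for the $*$ on track objects, are genuinely matched, so that the bracketing $P=(P_1*\dots*P_{m-1})*P_m$ on the ipomset side corresponds to the analogous bracketing on the HDA side. The ipomset associativity is read off from the explicit description of $*$ in the gluing definition, while the track-object associativity follows from the colimit characterisation of the gluing; with these in place every "$=$" is to be read as the canonical isomorphism and the induction closes cleanly.
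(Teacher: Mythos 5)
Your proposal is correct and follows exactly the paper's (one-line) proof: induction on $m$ with Lemma~\ref{lemma: pomset pushout} supplying the inductive step. The extra care you take with composability of consecutive factors and with matching the two associativities is a faithful elaboration of what the paper leaves implicit.
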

Note that $\hom_{\TrO^0}(\square^P,\square^Q)\cong\{ \square^R \mid \square^Q=\square^P*\square^R \}$ by Lem.\@ \ref{lemma: pomset pushout} and Prop.\@ \ref{prop: finite gluing of standard cubes}.
    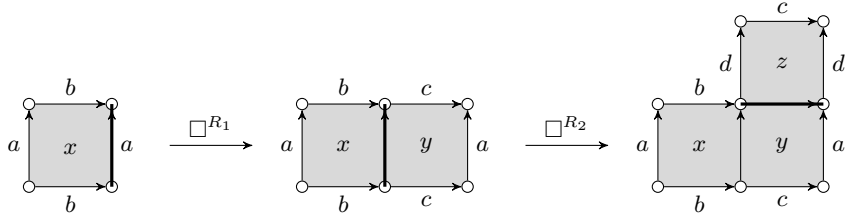
\begin{figure}[ht]
    \centering
  \begin{tikzpicture}[x=1.1cm, y=1.1cm]
  \begin{scope}[shift={(-2.9,0)}]
      \path[fill=black!15] (0,0) to (1,0) to (1,1) to (0,1);
      \foreach \x in {0, 1} \foreach \y in {0, 1} \node[state] (\x\y)
      at (\x,\y) {};
      \path (00) edge node[below] {$b$} (10);
      \path (01) edge node[above] {$b$} (11);
      \path (00) edge node[left] {$a$} (01);
      \path (10) edge node[right] {$a$} (11);
      \node at (.5,.5) {$\vphantom{b}x$};
      \path (1.7,0.5) edge node[above] {$\square^{R_1}$} (2.7,0.5);
             \draw[-, very thick, black] (1,0) -- (1,1) ;
      \end{scope}
\begin{scope}[shift={(0.4,0)}]
    \path[fill=black!15] (0,0) to (2,0) to (2,1) to (1,1) to
      (0,1);
      \node[state] (00) at (0,0) {};
      \node[state] (10) at (1,0) {};
      \node[state] (20) at (2,0) {};
      \node[state] (01) at (0,1) {};
      \node[state] (11) at (1,1) {};
      \node[state] (21) at (2,1) {};
      \path (00) edge node[below] {$\vphantom{d}b$} (10);
      \path (10) edge node[below] {$c$} (20);
      \path (01) edge node[above] {$b$} (11);
      \path (11) edge node[above] {$c$} (21);
      \path (00) edge node[left] {$a$} (01);
      \path (10) edge (11);
      \path (20) edge node[right] {$a$} (21);
       \node at (.5,.5) {$x$};
       \node at (1.5,.5) {$y$};
    \draw[-, very thick, black] (1,0) -- (1,1) ;
      \path (2.7,0.5) edge node[above] {$\square^{R_2}$} (3.7,0.5);
      \end{scope}
\begin{scope}[shift={(4.7,0)}]
    \path[fill=black!15] (0,0) to (2,0) to (2,1) to (1,1) to
      (0,1);
       \path[fill=black!15] (1,1) to (2,1) to (2,2) to (1,2) to (1,1);
       \node[state] (22) at (2,2) {};
        \node[state] (12) at (1,2) {};
      \node[state] (00) at (0,0) {};
      \node[state] (10) at (1,0) {};
      \node[state] (20) at (2,0) {};
      \node[state] (01) at (0,1) {};
      \node[state] (11) at (1,1) {};
      \node[state] (21) at (2,1) {};
      \path (00) edge node[below] {$\vphantom{d}b$} (10);
      \path (10) edge node[below] {$c$} (20);
      \path (01) edge node[above] {$b$} (11);
      \path (11) edge node[above] {} (21);
      \path (00) edge node[left] {$a$} (01);
      \path (10) edge (11);
      \path (20) edge node[right] {$a$} (21);
     \path (11) edge node[left] {$d$} (12);
     \path (21) edge node[right] {$d$} (22);
     \path (12) edge node[above] {$c$} (22);
      \node (.5.5) at (.5,.5) {$x$};
      \node (1.5.5) at (1.5,.5) {$y$};
      \node at (1.5,1.5) {$z$};
        \draw[-, very thick, black] (1,1) -- (2,1) ;
      \end{scope}
  \end{tikzpicture}
    \smallskip
  \caption{Examples of morphisms of $\TrO^0$. $R_1$ and $R_2$ are initial inclusions of Ex.\@ \ref{ex: initial inc}}
  \label{fig: track obj initial inc}
\end{figure} 
\section{Paths and tracks}
The computations or runs of HDAs, which track traversed cells and face maps, have been modeled by paths in \cite{VANGLABBEEK2006265} and by tracks in the categorical framework \cite{LanguageofHDA}. In the following subsection, we revisit these concepts and then in Subsection \ref{subsec: track and HDA-path category} establish a relation between them. This link is crucial for expressing the satisfaction relation of IPML on paths, similarly to temporal logic.


\subsection{Background: Paths and their labels}\label{subs: path and their labels}
\begin{definition}   
 A \emph{path}
 in a precubical set $X$ is a sequence
$
\alpha=(x_{0}, \varphi_{1}, x_{1}, \varphi_{2}, \ldots,$ $\varphi_{n}, x_{n}),
$
where $x_{k} \in X\left[U_{k}\right]$ are cells, and for all $k$, either
\begin{compactitem}
  \item $\varphi_{k}=d_{A}^{0} \in \square\left(U_{k-1}, U_{k}\right), A \subseteq U_{k}$ and $x_{k-1}=\delta_{A}^{0}\left(x_{k}\right)$ (up-step), or
  \item $\varphi_{k}=d_{B}^{1} \in \square\left(U_{k}, U_{k-1}\right), B \subseteq U_{k-1}, \delta_{B}^{1}\left(x_{k-1}\right)=x_{k}$ (down-step).
\end{compactitem}
\end{definition}
We write $x_{k-1} \nearrow^{A} x_{k}$ for the up-steps and $x_{k-1} \searrow_{A} x_{k}$ for the down-steps in $\alpha$. Intuitively, moving by an up step $x_{k-1} \nearrow^{A} x_{k}$  means that the list of events $A$ started and became active in the next cell $x_k$. Similarly, moving by a down step $x_{k-1} \searrow_{A} x_{k}$  means that the list of events $A$ terminated and became inactive in the cell $x_k$.
For a path written as above, we write $\pstart(\alpha)$ and $\pend(\alpha)$ for the first cell $x_0$ and the final cell $x_m$, respectively. We write $\Path_{X}$ for the set of all paths on a precubical set $X$. 

  A precubical map $f: X \rightarrow Y $ induces a map $f:\Path_{X} \to \Path_Y$. For $\alpha$ denoted as above, $f(\alpha)$ is the path $\bigl(f(x_{0}), \varphi_{1}, f(x_{1}), \varphi_{2}, \ldots, \varphi_{n}, f(x_{n})\bigl)$.
  We say that a path is \emph{sparse} if its steps are alternating between up-steps and down-steps.
   The \emph{concatenation} of $\alpha$ denoted as above and $\beta=(y_{0}, \psi_{1}, y_{1}, \psi_{2}, \ldots, \psi_{m}, y_{m})$, defined if $x_n=y_0$, is the path $\alpha*\beta$ given by $ \alpha*\beta=(x_{0}, \varphi_{1}, x_{1}, \varphi_{2}, \ldots, \varphi_{n}, x_{n}, \psi_{1},$ \linebreak $ y_{1}, \psi_{2}, \ldots, \psi_{m}, y_{m} ).$ 

\begin{definition}
     The label of a path $\alpha$ is the ipomset $\ev(\alpha)$, computed recursively:
\begin{compactenum}
  \item \label{en: one cell} If $\alpha=(x)$ has length 0, then $\ev(\alpha)={ }_{\ev(x)} \ev(x)_{\ev(x)}$.
  \item If $\alpha=\left(y \nearrow^{A}  x \right)$, where $A \subseteq \ev(x)$, then $\ev(\alpha)={}_{ \ev(x) \setminus A} \ev(x)_{\ev(x)}.$
  \item \label{en: terminator } If $\alpha=(x \searrow_B y)$, where $B \subseteq \ev(x)$, then $\ev(\alpha)={ }_{\ev(x)} \ev(x)_{\ev(x) \backslash_B}.$
  \item \label{en : label gluing} If $\alpha=\beta_{1} * \cdots * \beta_{n}$, where $\beta_{i}$ are steps, then 
 $\ev(\alpha)=\ev\left(\beta_{1}\right) * \cdots * \ev\left(\beta_{n}\right).$
\end{compactenum}
\end{definition}
As a finite gluing of discrete ipomsets, by Prop.\@ \ref{prop: decomposition into discrete }, $\ev(\alpha)$ is an interval ipomset. 
 \begin{definition}
    Let $\alpha=(x_0,\varphi_1,x_1,\dotsc,\varphi_n,x_n)$.
    We say that $\beta$ is a \emph{restriction} of $\alpha$ 
    and write $\beta \ininc \alpha$, if  $   \beta= \left(x_{0}, \varphi_{1}, x_{1}, \varphi_{2}, \ldots, x_{j-1},\varphi'_{j}, x'_{j}\right),$ where $j\leq n$ and
    \begin{compactitem}
    \item if $\varphi_j=d_B^{1}$ then $\varphi'_j=d_A^{1}$ for $A\subseteq B$;
    \item if $\varphi_j=d_B^{0}$ then $\varphi'_j=d_A^{0}$ for $A\subseteq B$
    and $x'_j=\delta^0_{B\setminus A}(x_j)$.
    \end{compactitem}
\end{definition}
\begin{example}
      On the left, the path $( \delta_a^0 x \nearrow^a x)$ in blue is a restriction of the path $( \delta_a^0 x \nearrow^a x \searrow_b \delta_b^1 x )$ in orange. On the right, the paths $( \delta_a^0\delta_b^0 x \nearrow^a \delta_b^0 x)$ and $( \delta_a^0\delta_b^0 x \nearrow^b \delta_b^0 x)$ in orange are restrictions of $\alpha_2 =( \delta_a^0 x \nearrow^a x)$ in blue.
\[
   \begin{tikzpicture}[x=1.2cm, y=1.2cm]
      \path[fill=black!15] (0,0) to (1,0) to (1,1) to (0,1);
      \foreach \x in {0, 1} \foreach \y in {0, 1} \node[state] (\x\y)
      at (\x,\y) {};
      \path (00) edge node[below] {$a$} (10);
      \path (01) edge node[above] {$a$} (11);
      \path (00) edge node[left] {$b$} (01);
      \path (10) edge node[right] {$b$} (11);
      \node at (.5,.4) {$\vphantom{b}x$};
   \draw[-, very thick, blue] (0,0.5) -- (.5,0.5);
    \draw[-, very thick, orange] (0.5,0.5) -- (.5,1);
           \node[state, fill=orange, minimum size=0.2mm] at (0.5,1) {};
    \node[state, fill=blue, minimum size=0.2mm] at (0.5,0.5) {};
    \node[state, fill=blue, minimum size=0.2mm] at (0,0.5) {};
    \end{tikzpicture}
    \qquad
        \begin{tikzpicture}[x=1.2cm, y=1.2cm]
      \path[fill=black!15] (0,0) to (1,0) to (1,1) to (0,1);
      \foreach \x in {0, 1} \foreach \y in {0, 1} \node[state] (\x\y)
      at (\x,\y) {};
      \path (00) edge node[below] {$a$} (10);
      \path (01) edge node[above] {$a$} (11);
      \path (00) edge node[left] {$b$} (01);
      \path (10) edge node[right] {$b$} (11);
      \node at (.5,.4) {$\vphantom{b}x$};
      \draw[-, very thick, blue] (0,0) -- (.5,0.5);
        \draw[-, very thick, orange] (0,0) -- (.5,0);
           \node[state, fill=orange, minimum size=0.2mm] at (0.5,0) {};
                \draw[-, very thick, orange] (0,0) -- (0,0.5);
           \node[state, fill=orange, minimum size=0.2mm] at (0,0.5) {};
    \node[state, fill=blue, minimum size=0.2mm] at (0.5,0.5) {};
    \node[state, fill=blue, minimum size=0.2mm] at (0,0) {};
    \end{tikzpicture}
\]
  \end{example}
  \begin{definition}
\label{def: Congruence}
    \emph{Congruence} of paths is the equivalence relation generated by $(x \nearrow^A y \nearrow^B z) \simeq x \nearrow^{A\cup B} z$, $(x \searrow^A y \searrow^B z) \simeq x \searrow^{A\cup B} z$, and if $\alpha\simeq \alpha'$ then $\gamma*\alpha*\beta \simeq \gamma*\alpha'*\beta$.
   
    If $\alpha\simeq\beta$,
    then $\pstart(\alpha)=\pstart(\beta)$
    and $\pend(\alpha)=\pend(\beta)$.
    Furthermore, every path $\alpha$ is congruent to a unique sparse path, which is denoted $sp(\alpha)$.
\end{definition}
\begin{definition} \label{def: track}
    A \emph{track} in a precubical set $X$ 
    is a precubical map $g: \square^P \to X$ where $P$ is an ipomset. 
\end{definition}
In the case of a track in an HDA $(X,i_X,I_F)$, we say that $g$ is an initial track if $P$ is a discrete ipomset and $g(\mathbf{y}_P) = i_X$.
\vspace{-.1cm}
\subsection{The categories of tracks and paths}\label{subsec: track and HDA-path category} The relation $\simeq$ is an equivalence relation, which allows the following definition.
\begin{definition}
    Let $X$ be a precubical set. We define the category $\pclass$ as follows.
    \begin{compactitem}
        \item Objects are equivalence classes of paths with respect to $\simeq$.
        \item Morphisms are $\hom_{\pclass}(P[\alpha],[\beta])=\{[\gamma] \mid \alpha*\gamma\simeq \beta\}$), called \emph{path extensions} and denoted $e_{\alpha}^{\alpha*\gamma}=[\alpha]\to[\alpha*\gamma]$.
        \item The composition of $e_{\alpha}^{\alpha*\beta}$ and $e_{\alpha*\beta}^{\alpha*\beta*\gamma}$ is $e_{\alpha}^{\alpha*\beta*\gamma}$.
    \end{compactitem}
\end{definition}

Let $p: \pobj{P} \to X$ and $q:\pobj{Q} \to X$ be two tracks such that $p(F_P)=q(I_Q)$. 
By Lem.\@ \ref{lemma: pomset pushout}, $p$ and $q$ glue to a track $p*q: \pobj{P*Q} \to X$, called the \emph{the gluing of tracks} $p$ and $q$, that satisfies $(p*q)\circ \mathbf{i}_{P}^{P*Q}=p$ and $(p*q)\circ \mathbf{f}_{Q}^{P*Q}=q$.
\begin{definition}
    Let $X$ be a precubical set. We define the category of tracks $\mathbb{T}_X$ as follows.
    \begin{compactitem}
        \item Objects are tracks $p: \pobj{P} \to X$;
        \item 
        Morphisms are $\hom_{\mathbb{T}_X}(p,q)=\{r\mid q=p*r \}$), called \emph{track extensions} and denoted $\mathbf{e}_{P}^{P*R}$, where $p:\pobj{P} \to X$, $r:\pobj{R}\to X$ and thus $q:\pobj{P*R}\to X$. In other words, there is a morphism $\mathbf{e}_{P}^{P*R}$ between tracks $p : \square^P \to X$ and $p' : \square^Q \to X$ iff $Q=P*R$ and they are related by the left triangle in the following diagram:
\begin{tikzcd}
    \square^P\ar[r,"\jneda_P^{P*R}"]\ar[dr,"p", swap] & \square^{P*R}\ar[r,"\jneda_{P*R}^{P*R*Q}"]\ar[d,"p'"] & \square^{P*R*Q}\ar[dl,"p''"]\\
    & X &
\end{tikzcd}
        \item Composition of $\mathbf{e}_{P}^{P*R}: p \to p'$ and $\mathbf{e}_{P*R}^{P*R*Q}:p' \to p''$ is $\mathbf{e}_{P}^{P*R*Q}:p \to p''$, as shown in the diagram above.
        
    \end{compactitem}
\end{definition}
The Yoneda lemma \ref{lemma: Yoneda lemma} is based on the unique cell $\mathbf{y}_S$ of a conclist $S$. For an ipomset $P$, we introduce the characteristic path $\rho_P$ that allows a generalization of the Yoneda lemma by substituting $\rho_P$ for $\mathbf{y}_S$. It is a key contribution of this work that will be used to bridges tracks and paths.
\begin{definition}\label{def: char path}
    Consider an interval ipomset $P$ and $P=P_1*P_2* \dots *P_{m}$ its minimal discrete decomposition with $P_i=\subid{U\setminus A}{U}{U\setminus B}$. The characteristic path of $P$ is $\rho_P= \beta_1*\beta_2 \dots *\beta_m$ the concatenation of steps $\beta_i=(\delta^0_A(\mathbf{y}_U),\mathbf{y}_U,\delta^1_B(\mathbf{y}_U))$.

    Note that the characteristic path is the sparse path $\rho_P \in P_{\pobj{P}}$ such that $\ev(\rho)=P$, $\pstart(\rho)=I_{\square^P}$, and $\pend(\rho)=F_{\square^P}$, calculated by induction as follows.
        \begin{compactitem}
        \item  If $P=\subid{U\setminus A}{U}{U\setminus B}$ is discrete, then $\rho_P=(\delta^0_A(\mathbf{y}_U),\mathbf{y}_U,\delta^1_B(\mathbf{y}_U))$.
        \item If $P=R*Q$, then $\rho_P=\jneda_R^{R*Q}(\rho_R)*\jnedafinal_Q^{R*Q}(\rho_Q)$\footnote{We can check that $\jneda_R^{R*Q}(\rho_R)$ and $\jnedafinal_Q^{R*Q}(\rho_Q)$ can be concatenated by elementary calculations, using the expression of initial and final inclusion of Lem.\ \ref{lemma: pomset pushout} and of the initial and final cells in Def.\ \ref{def:track obj}.}, where $\rho_R \in P_{\square^R}$ and $\rho_Q \in P_{\square^Q}$ the characteristic paths of $R$ and $Q$ respectively. 
        \end{compactitem} 
\end{definition}

The following Prop.\@ generalizes Yoneda Lemma \ref{lemma: Yoneda lemma}. Instead of cells, here we have paths, and instead of conclist, we have ipomsets.
\begin{proposition} \label{prop: generalization of Yoneda}
    Let $X$ be a precubical set, $P$ an ipomset, $ \alpha \in \Path_{X}$. If $\ev (\alpha)=P$, then there exist a unique $\rho_P'\simeq \rho_P$ and a unique track $g_{\alpha} :\square^P \rightarrow X$ such that $g_{\alpha} (\rho_P')=\alpha$.
\end{proposition}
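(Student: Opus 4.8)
The plan is to induct on the length $m$ of the minimal discrete decomposition $P=P_1*\cdots*P_m$ (Def.~\ref{def: minimal discrete decomposition}), building $g_\alpha$ by gluing tracks and reading off the "top'' cells of $\alpha$ through the Yoneda Lemma~\ref{lemma: Yoneda lemma}. Since $\rho_P$ is by construction sparse and congruence preserves both labels and endpoints, I would first replace $\alpha$ by its unique sparse representative $sp(\alpha)$ (Def.~\ref{def: Congruence}), prove the statement in the form $g_\alpha(\rho_P)=sp(\alpha)$, and only at the very end transport the conclusion back to $\alpha$ along the congruence $sp(\alpha)\simeq\alpha$.

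For the base case $P=\subid{U\setminus A}{U}{U\setminus B}$ discrete, I would argue that $\ev(sp(\alpha))=P$ forces $sp(\alpha)$ to be a single up--down pass $(x_0\nearrow^A x\searrow_B x_1)$ through a top cell $x$ with $\ev(x)=U$. Yoneda then supplies a unique $\iota_x:\square^U\to X$ with $\iota_x(\mathbf{y}_U)=x$; since $\square^P=\square^U$ as a precubical set, I set $g_\alpha=\iota_x$, and because $\iota_x$ commutes with the face maps one gets $\iota_x(\rho_P)=(\delta^0_A x\nearrow^A x\searrow_B \delta^1_B x)=sp(\alpha)$. For the inductive step I write $P=P_1*R$ with $R=P_2*\cdots*P_m$ and factor $sp(\alpha)=\sigma*\tau$ with $\ev(\sigma)=P_1$ and $\ev(\tau)=R$. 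The base case gives a track $g_\sigma:\square^{P_1}\to X$ and the induction hypothesis a track $g_\tau:\square^R\to X$; since $\pend(\sigma)=\pstart(\tau)$ they satisfy the gluing condition $g_\sigma(F_{\square^{P_1}})=g_\tau(I_{\square^R})$, so by the gluing of tracks together with $\square^P=\square^{P_1}*\square^R$ (Prop.~\ref{prop: finite gluing of standard cubes}) they glue to $g_\alpha=g_\sigma*g_\tau:\square^{P_1*R}\to X$. Tracing the inductive definition of $\rho_P$ (Def.~\ref{def: char path}) then yields $g_\alpha(\rho_P)=sp(\alpha)$. To recover $\alpha$ itself I would lift the congruence $sp(\alpha)\simeq\alpha$: each generating move splitting an up-step $y\nearrow^{A'\cup A''}z$ into $y\nearrow^{A'}\delta^0_{A''}(z)\nearrow^{A''}z$ (and dually for down-steps) pulls back through the precubical map $g_\alpha$ by applying the same split to the corresponding cell of $\square^P$, since $g_\alpha$ commutes with $\delta^0_{A''}$. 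Iterating produces the desired $\rho_P'\simeq\rho_P$ with $g_\alpha(\rho_P')=\alpha$, settling existence.

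Uniqueness I would derive from the rigidity of track objects. Every cell of $\square^P=\square^{Q_1}*\cdots*\square^{Q_m}$ is a face of some top cell $\mathbf{y}_{Q_i}$, and these top cells are exactly the cells at which $\rho_P$ (equivalently $sp(\alpha)$) turns around; hence $g_\alpha(\mathbf{y}_{Q_i})$ is forced to be the $i$-th top cell of $sp(\alpha)$. By Yoneda the restriction of $g_\alpha$ to each $\square^{Q_i}$ is $\iota_{g_\alpha(\mathbf{y}_{Q_i})}$, and because $\square^P$ is the colimit of these cubes, $g_\alpha$ is uniquely determined by $\alpha$. For the uniqueness of $\rho_P'$ I would use that precubical maps, being natural transformations, preserve $\ev$: each cell $c_k$ of any admissible $\rho_P'$ has $\ev(c_k)$ equal to the event set of the corresponding cell of $\alpha$ and lies over a prescribed top cell $\mathbf{y}_{Q_i}$; within a standard cube a face is determined by its set of active events, so $\rho_P'$ is forced cell by cell.

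The main obstacle I anticipate lies in the inductive factorization $sp(\alpha)=\sigma*\tau$. One must check, using sparseness together with Prop.~\ref{prop: decomposition into discrete }, that the first pass through a full maximal antichain $Q_1$ occurs precisely at the cut between $P_1$ and $R$, so that $\ev(\sigma)=P_1$ and the two glued tracks meet at a common cell $\pend(\sigma)=\pstart(\tau)$ with matching interfaces $T_{P_1}\simeq S_R$. Verifying this coherence at the gluing point, and confirming that the lifted splittings in the final step do not interfere across the cut, is where the genuine work is concentrated; the Yoneda-based existence and the $\ev$-preservation argument for uniqueness are comparatively routine once the decomposition is aligned.
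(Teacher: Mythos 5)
Your argument is sound and reaches the right conclusion, but it is organized quite differently from the paper's proof. The paper inducts on the \emph{length of the path} $\alpha$: the base cases are a single cell, a single up-step, and a single down-step (each handled by the Yoneda embedding $\iota_x$ of the top cell), and the inductive step splits $\alpha=\beta*\theta$ into \emph{any} two shorter pieces, sets $R=\ev(\beta)$, $Q=\ev(\theta)$, and glues $g_\beta*g_\theta$ along $\square^P=\square^R*\square^Q$ (Lem.~\ref{lemma: pomset pushout}), with $\rho_P'=\rho_R'*\rho_Q'$ produced directly by the induction. Because the decomposition $P=R*Q$ is allowed to be arbitrary rather than minimal, and because $\rho_P'$ is built to mirror the step structure of $\alpha$ itself, the two places where you locate the ``genuine work'' simply do not arise in the paper: there is no need to prove that the peaks of $sp(\alpha)$ realize the minimal discrete decomposition of $P$ (your alignment lemma at the cut between $P_1$ and $R$), and no need for the final congruence-lifting transporting $g_\alpha(\rho_P)=sp(\alpha)$ back to $g_\alpha(\rho_P')=\alpha$. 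Your alignment claim is true (it is essentially the correspondence behind Prop.~\ref{prop: decomposition into discrete }(\ref{eq: antichain decomposition}) and sparse paths), and your congruence-lifting via splitting steps through the precubical map $g_\alpha$ does work, so nothing is broken --- but you are paying for two extra lemmas that a better choice of induction variable makes unnecessary. What your route buys in exchange is a genuinely more explicit uniqueness argument: the paper threads uniqueness through the induction hypothesis rather tersely, whereas your identification of $g_\alpha$ on the top cells $\mathbf{y}_{Q_i}$ (forced to be the turning cells of $sp(\alpha)$, then Yoneda plus the colimit presentation of $\square^P$) and your cell-by-cell determination of $\rho_P'$ make the uniqueness claims concrete. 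If you keep your organization, the one step you must actually write out is the factorization $sp(\alpha)=\sigma*\tau$ with $\ev(\sigma)=P_1$; otherwise, switching to induction on path length collapses the proof to the paper's.
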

    The track $g_{\alpha}$ depends on the class of $\alpha$ up to $\simeq$ rather than $\alpha$:
\begin{lemma}\label{lemma: equiv clas Yoneda}
    If $\alpha \simeq \beta$, then $g_{\alpha}=g_{\beta}$.
\end{lemma}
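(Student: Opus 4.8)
The plan is to deduce the lemma directly from the uniqueness clause of Proposition~\ref{prop: generalization of Yoneda}, by exhibiting $g_\alpha$ and $g_\beta$ as the \emph{same} track associated to one common path, namely the sparse representative $sp(\alpha)=sp(\beta)$, and evaluated on the genuine characteristic path $\rho_P$ rather than on some merely congruent representative.

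First I would record two preliminary facts. \textbf{(a)} Congruence preserves labels: if $\alpha\simeq\beta$ then $\ev(\alpha)=\ev(\beta)=:P$. This is checked against the generating moves of Def.~\ref{def: Congruence}: a merge $x\nearrow^{A}y\nearrow^{B}z\simeq x\nearrow^{A\cup B}z$ corresponds, via the gluing formula for $\ev$, to gluing two discrete ipomsets whose composite is exactly the discrete ipomset of the merged step (the down-step case and contexts are handled by functoriality of $*$). In particular $g_\alpha$ and $g_\beta$ both have domain $\square^P$, so the assertion $g_\alpha=g_\beta$ is meaningful. \textbf{(b)} Every precubical map $f\colon X\to Y$ sends congruent paths to congruent paths and sparse paths to sparse paths. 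Both are immediate: $f(\gamma)$ retains the same step operators $\varphi_k$, so it preserves the up/down alternation, hence sparseness; and since $f$ commutes with the face maps $\delta^0,\delta^1$, it carries each generating congruence move to the corresponding generating move, hence preserves $\simeq$.

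Next I would evaluate the tracks on the honest characteristic path $\rho_P$ of Def.~\ref{def: char path}. By Proposition~\ref{prop: generalization of Yoneda} there is a representative $\rho_P'\simeq\rho_P$ with $g_\alpha(\rho_P')=\alpha$. Using \textbf{(b)}, from $\rho_P\simeq\rho_P'$ we obtain $g_\alpha(\rho_P)\simeq g_\alpha(\rho_P')=\alpha$; moreover $\rho_P$ is sparse, so $g_\alpha(\rho_P)$ is sparse as well. Since the sparse representative of a congruence class is unique (Def.~\ref{def: Congruence}), this upgrades the up-to-congruence relation to the on-the-nose equality $g_\alpha(\rho_P)=sp(\alpha)$. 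The identical argument gives $g_\beta(\rho_P)=sp(\beta)$. Now $\alpha\simeq\beta$ forces $sp(\alpha)=sp(\beta)=:\sigma$, with $\ev(\sigma)=P$ by \textbf{(a)}, so both $(\rho_P,g_\alpha)$ and $(\rho_P,g_\beta)$ are pairs consisting of a representative of $\rho_P$ (here $\rho_P$ itself) together with a track sending it to $\sigma$. Applying the uniqueness clause of Proposition~\ref{prop: generalization of Yoneda} to $\sigma$ yields $g_\alpha=g_\beta$.

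The hard part will be the combination in \textbf{(b)} of congruence-preservation with the observation that $g_\alpha(\rho_P)$ is sparse: this is precisely what lets me replace the a priori only-up-to-congruence statement $g_\alpha(\rho_P)\simeq\alpha$ by the genuine equality $g_\alpha(\rho_P)=sp(\alpha)$, which is what makes the uniqueness clause of Proposition~\ref{prop: generalization of Yoneda} applicable on the nose. Everything else — the label-invariance in \textbf{(a)} and the final uniqueness appeal — is routine bookkeeping.
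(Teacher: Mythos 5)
Your proof is correct, but it takes a genuinely different route from the paper's. The paper argues by induction on the generating moves of the congruence of Def.~\ref{def: Congruence}: for a single merge $(x\nearrow^{A}y\nearrow^{B}z)\simeq(x\nearrow^{A\cup B}z)$ both tracks are the Yoneda embedding $\iota_z$ (likewise for down-steps), and the context rule is handled via the compatibility $g_{\theta*\alpha*\gamma}=g_\theta*g_\alpha*g_\gamma$ coming from the recursive construction in the proof of Prop.~\ref{prop: generalization of Yoneda}. You instead bypass the induction entirely by normal forms: you show $g_\alpha(\rho_P)=sp(\alpha)$ (using that precubical maps preserve both $\simeq$ and sparseness, that $\rho_P$ is sparse by Def.~\ref{def: char path}, and that sparse representatives are unique), and then invoke the uniqueness clause of Prop.~\ref{prop: generalization of Yoneda} for the common path $sp(\alpha)=sp(\beta)$ to conclude $g_\alpha=g_\beta$. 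This application of uniqueness — two tracks agreeing on the honest characteristic path $\rho_P$ must coincide — is exactly how the paper itself uses the proposition elsewhere (e.g.\ in the proof of Lem.~\ref{Lemma: Zig Zag property for U}), so it is legitimate. What your approach buys is an intrinsic characterization of $g_\alpha$ as the unique track sending $\rho_P$ to $sp(\alpha)$, which makes congruence-invariance manifest and avoids re-examining the generators; the cost is that you lean on the unproved (but asserted) uniqueness of the sparse representative in Def.~\ref{def: Congruence} and on the two preservation facts, whereas the paper's induction only needs the explicit construction it already has in hand.
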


\begin{theorem}\label{th: path and tracks isomorphism}
    For any precubical set $X$,
    the categories $\pclass$ and $\mathbb{T}_X$ are isomorphic.
\end{theorem}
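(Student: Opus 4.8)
The plan is to construct a functor $\Phi: \pclass \to \mathbb{T}_X$ and a functor $\Psi: \mathbb{T}_X \to \pclass$ and show they are mutually inverse. The bridge between paths and tracks is exactly the generalized Yoneda lemma (Prop.~\ref{prop: generalization of Yoneda}) together with the characteristic path $\rho_P$ of Def.~\ref{def: char path}. On objects, I would send a path class $[\alpha]$ with $\ev(\alpha)=P$ to the track $g_\alpha: \square^P \to X$ supplied by Prop.~\ref{prop: generalization of Yoneda}; this is well defined on $\simeq$-classes by Lem.~\ref{lemma: equiv clas Yoneda}. In the other direction, I would send a track $p: \square^P \to X$ to the path class $[p(\rho_P)]$, where $\rho_P$ is the characteristic path living in $P_{\square^P}$; since precubical maps carry paths to paths and $\ev(p(\rho_P)) = \ev(\rho_P) = P$, this lands in the right place.

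First I would check that $\Phi$ and $\Psi$ are mutually inverse on objects. In one direction, starting from $[\alpha]$ with $\ev(\alpha)=P$, applying $\Phi$ gives $g_\alpha$, and applying $\Psi$ gives $[g_\alpha(\rho_P)]$; I must show $g_\alpha(\rho_P) \simeq \alpha$. This follows from the uniqueness clause of Prop.~\ref{prop: generalization of Yoneda}: there is a unique $\rho_P' \simeq \rho_P$ with $g_\alpha(\rho_P')=\alpha$, and since $g_\alpha$ sends congruent paths to congruent paths and $\rho_P' \simeq \rho_P$, we get $g_\alpha(\rho_P) \simeq g_\alpha(\rho_P') = \alpha$. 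In the other direction, starting from a track $p: \square^P \to X$, applying $\Psi$ gives $[p(\rho_P)]$ and then $\Phi$ gives $g_{p(\rho_P)}$; I must show $g_{p(\rho_P)} = p$. Here the key point is that $p$ itself is a track with $p(\rho_P) = p(\rho_P)$ and $\ev(p(\rho_P))=P$, so by the uniqueness of the track in Prop.~\ref{prop: generalization of Yoneda} applied to the path $p(\rho_P)$, the unique such track must be $p$; hence $g_{p(\rho_P)}=p$.

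Next I would handle morphisms and functoriality. A morphism in $\pclass$ is a path extension $e_\alpha^{\alpha*\gamma}: [\alpha] \to [\alpha*\gamma]$, and a morphism in $\mathbb{T}_X$ is a track extension $\mathbf{e}_P^{P*R}: p \to p*r$. Writing $\ev(\alpha)=P$ and $\ev(\gamma)=R$ so that $\ev(\alpha*\gamma)=P*R$, I would send $e_\alpha^{\alpha*\gamma}$ to $\mathbf{e}_P^{P*R}: g_\alpha \to g_{\alpha*\gamma}$. The content to verify is that $g_{\alpha*\gamma} = g_\alpha * g_\gamma$, i.e.\ that the Yoneda track of a concatenation is the gluing of the Yoneda tracks; this is where the inductive/gluing clause of Def.~\ref{def: char path} (namely $\rho_{P*R}=\jneda_P^{P*R}(\rho_P)*\jnedafinal_R^{P*R}(\rho_R)$) combines with the defining equations $(p*q)\circ \jneda_P^{P*Q}=p$ and $(p*q)\circ \mathbf{f}_Q^{P*Q}=q$ of the gluing of tracks. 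Preservation of identities and composition then reduces to the associativity and unit behaviour of gluing, which is routine once the object-level correspondence and the key equation $g_{\alpha*\gamma}=g_\alpha*g_\gamma$ are in place.

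The main obstacle I expect is precisely establishing $g_{\alpha*\gamma} = g_\alpha * g_\gamma$ (equivalently, that $\Phi$ respects the morphism structure), because it requires unwinding how the characteristic path of a gluing decomposes through the pushout $\square^{P*R} = \square^P * \square^R$ of Lem.~\ref{lemma: pomset pushout}. Concretely, I would argue that $g_\alpha * g_\gamma$ is a track $\square^{P*R} \to X$ whose restriction along $\jneda_P^{P*R}$ sends $\rho_P$ to $\alpha$ and whose restriction along $\mathbf{f}_R^{P*R}$ sends $\rho_R$ to $\gamma$, so that it carries $\rho_{P*R}=\jneda_P^{P*R}(\rho_P)*\mathbf{f}_R^{P*R}(\rho_R)$ to $\alpha*\gamma$; by the uniqueness of the track in Prop.~\ref{prop: generalization of Yoneda} this track equals $g_{\alpha*\gamma}$. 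The remaining verifications---that $\Phi$ and $\Psi$ are inverse on morphisms---are then formal consequences of the uniqueness statements already invoked.
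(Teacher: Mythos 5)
Your proposal is correct and follows essentially the same route as the paper: the same pair of functors ($[\alpha]\mapsto g_\alpha$ via Prop.~\ref{prop: generalization of Yoneda} and $p\mapsto [p(\rho_P)]$ via the characteristic path), with mutual inverseness on objects derived from the uniqueness clauses of Prop.~\ref{prop: generalization of Yoneda} and Lem.~\ref{lemma: equiv clas Yoneda}. The only cosmetic difference is that you verify functoriality on the paths-to-tracks side via $g_{\alpha*\gamma}=g_\alpha*g_\gamma$ (which is exactly the gluing step already established inside the proof of Prop.~\ref{prop: generalization of Yoneda}), whereas the paper checks the tracks-to-paths direction in Lem.~\ref{lemma: Upsilon is a functor}; the two verifications are equivalent given the object-level bijection.
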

\vspace{-.5 cm}
\section{Bisimulation and modal logic}\label{Sec: bis and ml}
      Fix a conclist $S$ and denote by $\HDA_S$ the full subcategory of $\HDA$ with HDAs $(X,i_X,F_X)$ such that $\ev(i_X)=S$.
      So that we have $\HDA=\bigcup_{S\in \square}\HDA_S$.     Similarly, $\TrO_S^0$ is the category that has track objects $\pobj{P}$, with $S_P\simeq S$, as objects and initial inclusions as morphisms. In this section, we apply the open map bisimulation technique \cite{JOYAL1996164} with $\TrO^0_S$ as the HDA-path category to define the \emph{$\TrO^0$-bisimulation} and then to define the IPML. 
  \subsubsection{Overview} A morphism $\varphi: \mathcal{X} \to \mathcal{Y}$ in $\HDA_S$ has the \emph{path lifting-property} with respect to $\TrO_S^0$ if whenever for $\mathbf{i}_{P}^{Q} \in \hom_{\TrO^0_S}$ (thus $Q\cong P*R$ for some ipomset $R$), $p: \pobj{P} \rightarrow X$ and $q: \pobj{Q} \rightarrow Y$, $q \circ \mathbf{i}_{P}^{P*R} = \varphi \circ p$ i.e the following diagram commutes, \[
    \xymatrix{%
      \pobj{P} \ar[r]^{p}
      \ar[d]_{\mathbf{i}_{P}^{P*R}} \ar@{}[dr]  &
      X \ar[d]^{\varphi} \\
      \pobj{P*R}  \ar[r]_{q}  \ar@{.>}[ur]_{p'} & Y  
    }
\]
then there exists a track $p'$ such that $p'\circ  \mathbf{i}_{P}^{P*R}=p$ and $\varphi \circ p'=q$ i.e the two triangles in the previous diagram commute. In this case, we say that $\varphi$ is  $\TrO_S^0$\emph{-open} or that $\varphi$ is \emph{open with respect to} $\TrO_S^0$. 
This gives rise to a notion of bisimulation with respect to $\TrO_S^0$. 
\subsection{Bisimulation from open maps for HDA}
 \begin{definition} \label{def: U bis from open map}
Let $\mathcal{Y}$, $\mathcal{Z}$ be HDAs. We say that $\mathcal{Y}$ and $\mathcal{Z}$ are $\TrO_S^0$-bisimilar if there is a span of  $\TrO_S^0$-open HDA maps $\mathcal{Y} \stackrel{\varphi}{\longleftarrow} \mathcal{X} \stackrel{\psi}{\longrightarrow} \mathcal{Z}$ with a common HDA $\mathcal{X}$.
     \end{definition} 
A path $\alpha$ in an HDA $\mathcal{X}$ is a path in the precubical set $X$ such that $\pstart(\alpha)=i_X$. We denote $\Path_{\mathcal{X}}$ the set of paths in $\mathcal{X}$. Similarly, we denote $\mathbb{P}_{\mathcal{X}}$ the category of classes of paths in an HDA.
A morphism $\varphi: \mathcal{X} \to \mathcal{Y}$ in $\HDA$ has the \emph{future path lifting property} if for $\alpha \in \Path_{\mathcal{X}}$ and $\beta \in \Path_Y$, if $\varphi(\alpha) \text{ and } \beta$ can be concatenated, then there exists $\alpha'$ in $X$ such that $\alpha \text{ and } \alpha^{\prime}$ can be concatenated and $\varphi(\alpha * \alpha^{\prime})=\varphi(\alpha)* \beta$.

\begin{lemma} \label{Lemma: Zig Zag property for U}
For any HDA map $\varphi: \mathcal{X} \rightarrow \mathcal{Y}$, $\varphi$ is $\TrO^0_S$-open iff $\varphi$ has the future path lifting. 
\end{lemma}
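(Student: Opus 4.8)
The plan is to transport the categorical lifting condition and the path-concatenation condition onto a common footing, namely the characteristic paths of Def.~\ref{def: char path} together with the isomorphism $\pclass\cong\mathbb{T}_X$ of Th.~\ref{th: path and tracks isomorphism}. Recall that a morphism of $\TrO^0_S$ is exactly an initial inclusion $\mathbf{i}_P^{P*R}\colon\pobj{P}\to\pobj{P*R}$, that every track $g\colon\pobj{P}\to X$ is $g_\alpha$ for a unique path class $\alpha$ with $\ev(\alpha)=P$ (Prop.~\ref{prop: generalization of Yoneda}, Lem.~\ref{lemma: equiv clas Yoneda}), and---this is the computational heart---that $\rho_{P*R}=\mathbf{i}_P^{P*R}(\rho_P)*\mathbf{f}_R^{P*R}(\rho_R)$ by Def.~\ref{def: char path}. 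Consequently, for any track $g\colon\pobj{P*R}\to X$ the path $g(\rho_{P*R})$ splits as the concatenation of its initial segment $g(\mathbf{i}_P^{P*R}(\rho_P))=(g\circ\mathbf{i}_P^{P*R})(\rho_P)$ and its final segment $g(\mathbf{f}_R^{P*R}(\rho_R))$; precomposition by $\mathbf{i}_P^{P*R}$ therefore corresponds on the path side to truncation to the label-$P$ initial segment. Since HDA maps preserve initial cells and $\pstart(\rho_P)=I_{\pobj{P}}=i_{\pobj{P}}$, any track occurring in a square over $\HDA_S$ sends $\rho_P$ to a path starting at the relevant initial cell, so all paths produced below lie in $\Path_{\mathcal{X}}$ resp.\ $\Path_Y$ as required.

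For the implication ``open $\Rightarrow$ future path lifting'', take $\alpha\in\Path_{\mathcal{X}}$ with $\ev(\alpha)=P$ and $\beta\in\Path_Y$ with $\varphi(\alpha)$ and $\beta$ concatenable, and put $R=\ev(\beta)$, $\gamma=\varphi(\alpha)*\beta$, so $\ev(\gamma)=P*R$. Setting $p=g_\alpha$ and $q=g_\gamma$ (Prop.~\ref{prop: generalization of Yoneda}), the square with vertical maps $\mathbf{i}_P^{P*R}$ and $\varphi$ commutes: both $\varphi\circ p$ and $q\circ\mathbf{i}_P^{P*R}$ are tracks $\pobj{P}\to Y$ sending $\rho_P$ to $\varphi(\alpha)$, the latter by the initial-segment computation above, whose label-$P$ factor is $\varphi(\alpha)$ because $\gamma=\varphi(\alpha)*\beta$ is glued at the common interface cell; hence they agree by uniqueness. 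Openness supplies a diagonal $p'\colon\pobj{P*R}\to X$ with $p'\circ\mathbf{i}_P^{P*R}=p$ and $\varphi\circ p'=q$. Reading off $p'(\rho_{P*R})$ and splitting it at $\mathbf{i}_P^{P*R}(\rho_P)$ gives $p'(\rho_{P*R})=\alpha*\alpha'$ with $\alpha'=p'(\mathbf{f}_R^{P*R}(\rho_R))$, and $\varphi(\alpha*\alpha')=(\varphi\circ p')(\rho_{P*R})=q(\rho_{P*R})=\gamma=\varphi(\alpha)*\beta$, which is the future path lifting.

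For the converse, start from a commuting square with top $p\colon\pobj{P}\to X$ and bottom $q\colon\pobj{P*R}\to Y$, and set $\alpha=p(\rho_P)$, so $\alpha\in\Path_{\mathcal{X}}$ with $\ev(\alpha)=P$. Splitting $q(\rho_{P*R})$ at the interface cell yields $q(\rho_{P*R})=\varphi(\alpha)*\beta$ with $\beta=q(\mathbf{f}_R^{P*R}(\rho_R))$, using $\varphi(\alpha)=(\varphi\circ p)(\rho_P)=(q\circ\mathbf{i}_P^{P*R})(\rho_P)$ from commutativity. Future path lifting provides $\alpha'$ with $\alpha*\alpha'$ defined and $\varphi(\alpha*\alpha')=\varphi(\alpha)*\beta=q(\rho_{P*R})$; here $\ev(\alpha*\alpha')=P*R$ since precubical maps preserve labels. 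Let $p'=g_{\alpha*\alpha'}$. The right triangle holds because $\varphi\circ p'$ and $q$ are tracks $\pobj{P*R}\to Y$ both sending $\rho_{P*R}$ to $q(\rho_{P*R})$, hence equal by uniqueness; the left triangle holds because $p'\circ\mathbf{i}_P^{P*R}$ and $p$ are tracks $\pobj{P}\to X$ both sending $\rho_P$ to $\alpha$, the label-$P$ initial segment of $\alpha*\alpha'$, hence equal. Thus $\varphi$ is $\TrO^0_S$-open.

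I expect the main obstacle to be bookkeeping rather than conceptual: pinning down that precomposition by $\mathbf{i}_P^{P*R}$ on tracks matches initial-segment truncation on paths, which rests on the identity $\rho_{P*R}=\mathbf{i}_P^{P*R}(\rho_P)*\mathbf{f}_R^{P*R}(\rho_R)$ and on the functoriality of the correspondence $\alpha\mapsto g_\alpha$ (Th.~\ref{th: path and tracks isomorphism}). A secondary point demanding care is the role of representatives up to $\simeq$: tracks are determined only by congruence classes (Lem.~\ref{lemma: equiv clas Yoneda}), so every ``sending $\rho$ to a path'' statement must be read up to $\simeq$, and one must check that the concatenations respect the chosen sparse representatives. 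Verifying the concatenability of $\mathbf{i}_P^{P*R}(\rho_P)$ with $\mathbf{f}_R^{P*R}(\rho_R)$---already noted in the footnote to Def.~\ref{def: char path}---and that all constructed paths start at the correct initial cell closes these gaps.
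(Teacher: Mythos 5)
Your proposal is correct and follows essentially the same route as the paper's proof: both directions hinge on the decomposition $\rho_{P*R}=\jneda_P^{P*R}(\rho_P)*\jnedafinal_R^{P*R}(\rho_R)$, the unique track $g_\alpha$ from Prop.~\ref{prop: generalization of Yoneda}, and uniqueness to verify the commuting triangles. The only cosmetic difference is that the paper isolates the identity $\varphi\circ g_\alpha=g_{\varphi(\alpha)}$ as a separate lemma to get commutativity of the square, whereas you re-derive it inline by the same uniqueness argument.
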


\begin{definition}\label{def: U bar bis}
     A \emph{closed cell-bisimulation} between HDAs $\mathcal{Y}$ and $\mathcal{Z}$ is a relation $\overline{R}$ between cells in $Y$ and $Z$ such that
\begin{compactenum}
\item \label{en: U bar bisim.e1} initial cells $i_Y$ and $i_Z$ are related;
\item \label{en: U bar bisim.e3} $\overline{R}$ respects labels: for all
  $( y, z)\in \overline{R}$, $\ev_Y( y)= \ev_Z(z)$;
\item \label{en: U bar bisim.e4} if $( y, z)\in \overline{R}$, then $( \delta^{\nu}_A(y),
 \delta^{\nu}_A(z))\in \overline{R}$ for $A \subseteq \ev_Y(y)= \ev_Z(z)$, $\nu \in \{0,1\}$;
\item \label{en: U bar bisim.e5} for all $( y, z)\in \overline{R}$, if there exists $y'$ such that $\delta^{0}_A(y')=y$ for some $A\subseteq \ev(y')$, then there exists $z'$ such that $\delta^{0}_A(z')=z$ and $( y', z') \in \overline{R}$; 

\item \label{en: U bar bisim.e6} for all $( y, z)\in \overline{R}$, if there exists $z'$ such that $\delta^{0}_A(z')=z$ for some $A\subseteq \ev(z')$, then there exists $y'$ such that $\delta^{0}_A(y')=y$ and $( y', z') \in \overline{R}$; 
\end{compactenum}
\end{definition}

A cell $x$ in an HDA $\mathcal{X}$ is said to be accessible if there exists $\alpha_x \in \Path_{\mathcal{X}}$ such that $\pend(\alpha_x)=x$, we denote $\mathcal{X}_{acc}$ the set of accessible cells in $\mathcal{X}$.

\begin{definition}\label{def: Cell bis}
     A \emph{cell-bisimulation} between $\mathcal{Y}$ and $\mathcal{Z}$ is a relation $R$ between $\mathcal{Y}_{acc}$ and $\mathcal{Z}_{acc}$ that satisfies the same properties as Def.\@ \ref{def: U bar bis}, replacing \ref{en: U bar bisim.e4}. by
\begin{compactenum}
\setcounter{enumi}2
\item for all $( y, z)\in R$, for all $A \subseteq \ev_Y(y)= \ev_Z(z)$,
\begin{compactenum}
    \item \label{en: Cell bisim.e4} $( \delta^{1}_A(y),
 \delta^{1}_A(z))\in R$
    \item  $\delta^{0}_A(y) \in \mathcal{Y}_{acc}$ iff $,
 \delta^{0}_A(z)\in \mathcal{Z}_{acc}$. In this case, $( \delta^{0}_A(y),
 \delta^{0}_A(z))\in R$
\end{compactenum}
\end{compactenum}
\end{definition}

\begin{theorem} \label{U and T bis}
Two HDAs $\mathcal{Y}$ and $\mathcal{Z}$
    are $\TrO^0_S$-bisimilar iff they are cell-bisimilar.  
\end{theorem}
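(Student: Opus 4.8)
The plan is to prove the two implications separately, in each case using the characterization of $\TrO^0_S$-openness as \emph{future path lifting} (Lem.~\ref{Lemma: Zig Zag property for U}) as the essential tool. For the direction cell-bisimilar $\Rightarrow$ $\TrO^0_S$-bisimilar I would build the span by hand. Starting from a cell-bisimulation $R\subseteq \mathcal{Y}_{acc}\times\mathcal{Z}_{acc}$, let $\mathcal{X}$ be the smallest sub-precubical set of the objectwise product $Y\times Z$ (so $(Y\times Z)[U]=Y[U]\times Z[U]$ with componentwise faces) that contains every related pair and is closed under the lower faces $\delta^0_A$. Closure under the upper faces $\delta^1_A$ is then automatic from condition~\ref{en: Cell bisim.e4} together with the cubical identities, so $\mathcal{X}$ is a genuine precubical set; its initial cell is $(i_Y,i_Z)$, which lies in $R$ by~\ref{en: U bar bisim.e1}, and the projections $\varphi(y,z)=y$, $\psi(y,z)=z$ are HDA maps. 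The key internal observation is that every $\mathcal{X}$-accessible cell lies in $R$: down-steps preserve $R$ by~\ref{en: Cell bisim.e4}, while an up-step out of a pair in $R$ lands on a pair both of whose coordinates are accessible, which condition~(b) of Def.~\ref{def: Cell bis} forces back into $R$. Granting this, openness of $\varphi$ is verified step by step along Lem.~\ref{Lemma: Zig Zag property for U}: an up-step $y\nearrow^A y'$ to be lifted is matched using~\ref{en: U bar bisim.e5}, a down-step $y\searrow_A y''$ using~\ref{en: Cell bisim.e4}, and each lifted cell again lies in $R$ so the induction continues; symmetrically $\psi$ is open.

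For the converse, $\TrO^0_S$-bisimilar $\Rightarrow$ cell-bisimilar, I would take a span $\mathcal{Y}\xleftarrow{\varphi}\mathcal{X}\xrightarrow{\psi}\mathcal{Z}$ of open maps and set $R=\{(\varphi(x),\psi(x))\mid x\in\mathcal{X}_{acc}\}$. Since the image of a path is a path, $R\subseteq\mathcal{Y}_{acc}\times\mathcal{Z}_{acc}$; condition~\ref{en: U bar bisim.e1} holds because $\varphi,\psi$ preserve initial cells and $i_X$ is accessible, and~\ref{en: U bar bisim.e3} because precubical maps preserve $\ev$. The upper-face condition~\ref{en: Cell bisim.e4} is immediate, as $\delta^1_A x$ is reachable from $x$ by a down-step, whence $(\delta^1_A\varphi(x),\delta^1_A\psi(x))=(\varphi(\delta^1_A x),\psi(\delta^1_A x))\in R$. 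The back-and-forth conditions~\ref{en: U bar bisim.e5} and~\ref{en: U bar bisim.e6} also follow cleanly from future path lifting: given $(y,z)=(\varphi(x),\psi(x))$ and an up-step $y\nearrow^A y'$, one extends the image of a witnessing path for $x$ by this step and lifts it through $\varphi$ to an up-step $x\nearrow^A x'$, after which $z'=\psi(x')$ satisfies $\delta^0_A z'=z$ and $(y',z')\in R$; the argument for~\ref{en: U bar bisim.e6} is symmetric through $\psi$.

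The hard part, and the step I expect to be the main obstacle, is the lower-face condition~(b) of Def.~\ref{def: Cell bis} in this converse direction: for $(y,z)\in R$ one must show $\delta^0_A y$ is accessible iff $\delta^0_A z$ is, and that the faces are then related. In contrast with the upper faces, lower faces are \emph{predecessors}, whereas $\TrO^0_S$-openness supplies only \emph{forward} lifting, so the accessibility of $\delta^0_A y$ does not obviously transport along the chosen witness $x$ (which is precisely where $\varphi$ may fail to be injective on its accessible fibre). The plan is to exploit that open maps are surjective onto accessible cells — lift a witnessing path from $i_X$ — so that from accessibility of $\delta^0_A y$ one recovers an accessible cell of $\mathcal{X}$ over $y$ whose lower $A$-face is accessible, and then pushes this across to $\mathcal{Z}$. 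Reconciling the cell obtained this way with the given witness, i.e.\ forcing the \emph{specific} pair $(\delta^0_A y,\delta^0_A z)$ rather than merely some related pair, is the delicate point; it is where the interval-order structure of cells and the shared apex must be invoked, beyond the abstract lifting property alone.
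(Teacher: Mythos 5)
Your overall strategy coincides with the paper's in both directions. For cell-bisimilar $\Rightarrow$ $\TrO^0_S$-bisimilar, the paper simply takes $X=R$ with componentwise faces and the two projections; your closure of $R$ under lower faces inside $Y\times Z$ is a more careful rendering of the same construction (needed because condition 3(b) of Def.~\ref{def: Cell bis} only closes $R$ under lower faces of accessible cells), and your verification of openness via Lemma~\ref{Lemma: Zig Zag property for U} is the paper's argument. For the converse the paper uses exactly your relation $R=\{(\varphi(x),\psi(x))\mid x\in\mathcal{X}_{acc}\}$ and handles conditions 1, 2, 5 and 6 the same way you do.

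The one genuine gap is the one you flag yourself: condition 3(b) in the converse direction. Your plan --- recover a preimage of $\delta^0_A y$ by lifting a witnessing path from $i_Y$ and then push it across $\psi$ --- yields, as you observe, only \emph{some} related pair $(\delta^0_A y,\delta^0_A z'')$ with $z''=\psi(x_2)$ for a possibly different preimage $x_2$ of $y$, not the required pair $(\delta^0_A y,\delta^0_A z)$ for the given witness $x$, and you do not close this. The paper resolves the synchronization problem by never leaving the fibre of the shared witness: since $\varphi,\psi$ are precubical, $(\delta^0_A y,\delta^0_A z)=(\varphi(\delta^0_A x),\psi(\delta^0_A x))$, so the pair lies in $R$ as soon as $\delta^0_A x$ is accessible in $\mathcal{X}$ --- the witness to use is $\delta^0_A x$ itself, not a freshly lifted preimage. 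That is the missing idea in your sketch. (Be aware that the paper dispatches the whole of condition 3 with the single remark that $\varphi$ is a precubical map, leaving implicit the remaining point that accessibility of $\delta^0_A\varphi(x)$ in $Y$ forces accessibility of $\delta^0_A x$ in $\mathcal{X}$, which is what makes the accessibility-iff clause go through; so your instinct that this clause is the delicate part of the theorem is sound, but your proposed route to it does not match the paper's and is not completed.)
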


\subsection{Modal characterization}
In this section, we delve into the core contributions of our work. Initially, we introduce the concept of track bisimulation, followed by a formal presentation of the Ipomset modal logic. Notably, the notion of track bisimulation serves as a crucial link connecting our logic’s modalities with the existing concept of ST-bisimulation found in the literature. More specifically, it will demonstrate that our logic characterizes the notion of ST-bisimulation.

\def\R{\mathsf{R}}
\begin{definition}\label{def: Track-bisimulation}
     A \emph{track-bisimulation}, with respect to $\TrO^0_S$, between HDAs $\mathcal{Y}$ and $\mathcal{Z}$ is a symmetric relation $\R$ of pairs of tracks $\left(p_{1}, p_{2}\right)$ with common domain $\square^P$, so $p_{1}: \square^P \rightarrow Y$ is a track in $Y$ and $p_{2}: \square^P \rightarrow Z$ is a track in $Z$, such that
\begin{compactenum}
    \item \label{eq1: track bis} initial tracks $\iota_{\mathcal{X}}:\square^S\to X$
    and $\iota_{\mathcal{Y}}:\square^S\to Y$ are related;
    \item \label{eq2: track bis} For $\left(p_{1}, p_{2}\right) \in \R$, if $p_{1}^{\prime} \circ \jneda_P^{P*R}=p_{1}$, in the diagram
 \xymatrix{
     &    \square^P \ar[dl]_{p_1} \ar[d]_{\jneda_P^{P*R}} \ar[dr]^{p_2}   \\
   Y & \square^Q \ar[l]^{p'_1 } \ar@{.>}[r]_{p'_2} &  Z                          }

then there is $p'_2$ such that $(p'_1,p'_2) \in \R$ and $p'_2 \circ\jneda_P^{P*R}=p_2$. 
\end{compactenum}
We say that a track-bisimulation is \emph{strong} if, in addition, it satisfies:
\begin{compactenum}
\setcounter{enumi}2
\item \label{eq3: track bis}  If $(p_1,p_2) \in \R$ for $p_{1}: \square^Q \rightarrow Y$, $p_{2}: \square^Q \rightarrow Z$,
then for every $\jneda_P^{P*R}: \square^P \rightarrow \square^Q\in \TrO^0_S$
we have $(p_1 \circ \jneda_P^{P*R},p_2 \circ \jneda_P^{P*R}) \in \R$.
\end{compactenum}
We say that two HDAs are (strong) \emph{track-bisimilar} iff there is a \emph{(strong) track-bisimulation} between them.
\end{definition}
We introduce the novel modal logic IPML with HDAs as models, following the approach of Nielsen and Winskel \cite{JOYAL1996164}.  
\begin{definition}[Ipomset Modal Logic]
    The set of formulae in Ipomset Modal Logic (IPML) is given by the following syntax:
\[
F,G::= \top \mid \bot \mid F \wedge G \mid F \vee G  \mid\langle \jneda_P^{P*R} \rangle F\mid  \overline{\langle \jneda_P^{P*R}\rangle} F,
\]
where $\jneda_P^{P*R}$ is a morphism in $\TrO^0_S$. The modality $\overline{\langle \jneda_P^{P*R}\rangle}$ is a backward modality, while $\langle \jneda_P^{P*R}\rangle$ is a forward modality. 
\end{definition}
Like Nielsen and Winskel's original approach, IPML should also have infinitary conjunctions. In contrast, we only consider HDAs with finitely branching, for which no infinitary conjunction is required.

The satisfaction relation between a track $p: \square^P \rightarrow X$ and a formula $F$ is given by structural induction on assertions as follows:
 \begin{compactitem}
\item $p \models \top$ for all $p$, $p \models \bot$ for no $p$, $p \models F \wedge G$ iff $p \models F $ and $p \models G$, and $p \models F \vee G$ iff $p \models F $ or $p \models G$;
\item $p \models \langle \jneda_P^{P*R} \rangle F$ where $\jneda_P^{P*R}: \square^{P} \rightarrow \square^{P*R}$, iff there exists is a track $q: \square^{P*R}  \rightarrow X$ for which $q \models F$ and $p=q \circ \jneda_P^{P*R}$;
\item $p \models \overline{\langle \jneda_Q^{P}\rangle} F$ where $\jneda_Q^{P}: \square^Q \rightarrow \square^{P} $, with $P=Q*S$, iff there exists a track $q: \square^{Q}  \rightarrow X$ for which
$q \models F$ and $q= p \circ \jneda_Q^{Q*S}$.
\end{compactitem}
By Th. \@ \ref{th: path and tracks isomorphism}, the previous modal logic, given with satisfaction relation on tracks, induces a modal logic interpreted over paths, where congruent paths satisfy the same formulas. The induced satisfaction relation is thus a binary relation $\models$ that relates $\alpha \in \mathbb{P}_{\mathcal{X}}$, with $\ev(\alpha)=P$, to formulae. 

For a given track, the forward modality is uniquely determined by the choice of the extending ipomset $R$. While the backward modality is uniquely determined by the decomposition of $P$ into two ipomsets. Thus, our modalities could be reformulated, as follows.
 \begin{compactitem}
 \item $\alpha \models \langle R \rangle F$ with $R$ an ipomset iff there is $\beta \in \pclass$ for which $\alpha*\beta \models F$ and $\ev(\beta)=R$;
\item $\alpha \models \overline{\langle Q*S \rangle} F$ with $P=Q*S$ iff there is $\alpha' \ininc \alpha$ in $\mathbb{P}_{\mathcal{X}}$ for which $\ev(\alpha')=Q$ and $\alpha' \models F$.
\end{compactitem}

\begin{example}
   Consider the paths in the HDAs of Fig.\ref{fig: counterexample }. We have the following:
\begin{compactitem}
    \item $(i_{X_2}) \models  \bigl\langle \loset{\phantom{\ibullet} c \ibullet \\ \ibullet a \ibullet} \bigr\rangle \bigl\langle  [ \ibullet a \to d \ibullet ] \bigr\rangle \top$, meaning that there is a path $\alpha_2$ labeled by $\loset{\phantom{\ibullet} c \ibullet \\ \ibullet a \ibullet}$ from which it is possible to terminate an event labeled by $a$ and start an event labeled by $d$, by executing the path $\beta_2$.
    \item $\alpha_1 \models  \overline{ \bigl\langle c * \loset{d \\ a} \bigr\rangle} \bigl\langle \loset{b \\ d} \bigr\rangle \top$, meaning that there exists a restriction $\alpha'_1$ of $\alpha_1$ such that $\ev(\alpha'_1)=c$ and $\alpha'_1 \models \bigl\langle \loset{b \\ d} \bigr\rangle \top$, that is, $\alpha_1'$ can be concatenated with a path $\beta_1$ labeled by $\loset{b \\ d}$.
\end{compactitem}
\end{example}

\cite[Thm.~15]{JOYAL1996164} now immediately implies the following.

\begin{theorem}
HDAs 
are (strong) track-bisimilar iff initial tracks satisfy  
the same forward (and backward) assertions.
\end{theorem}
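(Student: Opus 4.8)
The plan is to recognize the statement as an instance of the general open-maps characterization of Joyal--Nielsen--Winskel and to supply the two pieces of bookkeeping that make the citation applicable: matching the combinatorial data, and handling the backward modality, which falls outside the vanilla framework. Concretely, I would take the category of models to be $\HDA_S$, the path category to be $\TrO^0_S$, and I would identify the \emph{track-bisimulation} of Def.~\ref{def: Track-bisimulation} with the relational $\TrO^0_S$-bisimulation of \cite{JOYAL1996164}: clause~\ref{eq1: track bis} is the requirement that the roots (initial tracks) be related, and the zig-zag clause~\ref{eq2: track bis} is exactly the back-and-forth condition of a path bisimulation, made two-sided by the standing assumption that $\R$ is symmetric. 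Likewise the forward fragment of IPML, whose modalities $\langle\jneda_P^{P*R}\rangle$ are indexed precisely by the morphisms of $\TrO^0_S$, is the path logic of \cite{JOYAL1996164}. With these identifications, \cite[Thm.~15]{JOYAL1996164} yields directly that two HDAs are track-bisimilar iff their initial tracks satisfy the same forward assertions, which is the non-strong half of the theorem.

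To make the role of the finiteness hypothesis explicit I would then spell out the two directions as a Hennessy--Milner argument. For \emph{soundness} (bisimilar $\Rightarrow$ logically equivalent) I would prove, by structural induction on a formula $F$, that $p_1\models F \iff p_2\models F$ for every related pair $(p_1,p_2)\in\R$; the Boolean cases are immediate, and the forward-modality case uses clause~\ref{eq2: track bis} together with the induction hypothesis, after which the initial tracks being related yields logical equivalence. For \emph{completeness} (logically equivalent $\Rightarrow$ bisimilar) I would let $\R$ be the relation pairing tracks with common domain that satisfy the same forward formulas, and verify the clauses of Def.~\ref{def: Track-bisimulation}: clause~\ref{eq1: track bis} holds by hypothesis on the initial tracks, and for clause~\ref{eq2: track bis} one uses that each track has only finitely many extensions (finite branching), so that if no extension $p_2'$ of $p_2$ matched a given extension $p_1'$ of $p_1$, the finite conjunction of the distinguishing formulas prefixed by $\langle\jneda_P^{P*R}\rangle$ would separate $p_1$ from $p_2$, contradicting $(p_1,p_2)\in\R$. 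This is precisely the point at which finite branching replaces the infinitary conjunctions of \cite{JOYAL1996164}.

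For the \emph{strong} version I would run the same two arguments, but with $\R$ closed under restriction and the logic enriched by the backward modality, which lies outside the literal scope of \cite[Thm.~15]{JOYAL1996164}. The new clause~\ref{eq3: track bis} says exactly that $\R$ is stable under precomposition with initial inclusions $\jneda_P^{P*R}$, i.e.\ under passing to a restriction, and dually the backward modality $\overline{\langle\jneda_Q^{P}\rangle}F$ inspects the restriction $p\circ\jneda_Q^{P}$. I would add the backward-modality case to the soundness induction, discharged by clause~\ref{eq3: track bis} and the induction hypothesis, and I would add the verification of clause~\ref{eq3: track bis} to completeness, which goes through because the backward modality transports any distinguishing formula of a restriction back up, so that two tracks agreeing on all forward-and-backward formulas have restrictions that again agree. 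I expect the backward/strong case to be the main obstacle: it is the part not delivered by the black-box citation, and one must check that restrictions along $\jneda_Q^{P}$ are uniquely determined (so the backward modality is well defined, as noted after the syntax) and that the finite-branching conjunction argument still produces a separating formula when restrictions, rather than only extensions, are at play.
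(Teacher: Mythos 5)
Your proposal is correct and takes the same route as the paper, which proves this theorem purely by the citation \cite[Thm.~15]{JOYAL1996164} after exactly the identifications you make (models $\HDA_S$, path category $\TrO^0_S$, track-bisimulation as JNW's path-bisimulation, IPML as their path logic); your explicit Hennessy--Milner induction, with finite branching replacing infinitary conjunction, just spells out what the paper delegates to the citation. The one inaccuracy is your claim that the strong/backward case ``lies outside the literal scope'' of \cite[Thm.~15]{JOYAL1996164} --- that theorem already characterizes \emph{strong} path-bisimulation via forward-plus-backward assertions, so the single citation covers both halves, though your direct argument for that case (uniqueness of the restriction along $\jneda_Q^{P}$, and transporting distinguishing formulas of restrictions upward via the backward modality) is also sound.
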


\begin{theorem}\label{th: cell to track}
   If HDAs 
   are $\TrO^0_S$-bisimilar, then they are strong track-bisimilar.
 \end{theorem}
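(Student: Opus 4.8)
The plan is to read the required strong track-bisimulation directly off the span of open maps that witnesses $\TrO^0_S$-bisimilarity. So suppose $\mathcal{Y} \xleftarrow{\varphi} \mathcal{X} \xrightarrow{\psi} \mathcal{Z}$ is a span of $\TrO^0_S$-open HDA maps, and define the relation
$$\R = \{(\varphi \circ p,\ \psi \circ p) \mid p \colon \square^P \to X \text{ a track with } S_P \simeq S\}.$$
Each element of $\R$ is a pair of tracks sharing the common domain $\square^P$, because post-composing a track (a precubical map) with the precubical maps $\varphi$ and $\psi$ again yields tracks. Since both legs of the span are open, $\R$ is symmetric in the sense of Def.~\ref{def: Track-bisimulation}. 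It then remains to verify conditions \ref{eq1: track bis}, \ref{eq2: track bis}, and \ref{eq3: track bis}.

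Conditions \ref{eq1: track bis} and \ref{eq3: track bis} are the routine ones. For the initial tracks, the middle HDA has an initial track $\iota_{\mathcal{X}} \colon \square^S \to X$ with $\iota_{\mathcal{X}}(\mathbf{y}_S) = i_X$; since $\varphi$ and $\psi$ preserve initial cells, $\varphi \circ \iota_{\mathcal{X}}$ sends $\mathbf{y}_S$ to $i_Y$, so by the uniqueness clause of Lemma~\ref{lemma: Yoneda lemma} it equals $\iota_{\mathcal{Y}}$, and likewise $\psi \circ \iota_{\mathcal{X}} = \iota_{\mathcal{Z}}$. Hence $(\iota_{\mathcal{Y}}, \iota_{\mathcal{Z}}) \in \R$ via $p = \iota_{\mathcal{X}}$. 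For strongness, if $(p_1, p_2) = (\varphi \circ p,\ \psi \circ p) \in \R$ with $p \colon \square^Q \to X$ and $\jneda_P^{P*R} \colon \square^P \to \square^Q$ is a morphism of $\TrO^0_S$, then the track $p \circ \jneda_P^{P*R} \colon \square^P \to X$ witnesses $(p_1 \circ \jneda_P^{P*R},\ p_2 \circ \jneda_P^{P*R}) \in \R$, by associativity of composition; this is condition \ref{eq3: track bis}.

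The heart of the argument is condition \ref{eq2: track bis}, which is where openness is used. Take $(p_1, p_2) \in \R$ via $p \colon \square^P \to X$, and suppose $p_1' \colon \square^{P*R} \to Y$ satisfies $p_1' \circ \jneda_P^{P*R} = p_1 = \varphi \circ p$. Then the square
$$\xymatrix{ \square^P \ar[r]^{p} \ar[d]_{\jneda_P^{P*R}} & X \ar[d]^{\varphi} \\ \square^{P*R} \ar[r]_{p_1'} & Y }$$
commutes, so $\TrO^0_S$-openness of $\varphi$ supplies a lift $p' \colon \square^{P*R} \to X$ with $p' \circ \jneda_P^{P*R} = p$ and $\varphi \circ p' = p_1'$. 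Setting $p_2' = \psi \circ p'$, the track $p'$ witnesses $(p_1', p_2') \in \R$, and $p_2' \circ \jneda_P^{P*R} = \psi \circ p' \circ \jneda_P^{P*R} = \psi \circ p = p_2$, as required. The mirror version of condition \ref{eq2: track bis} (extending $p_2$ and matching on the $Y$-side) is proved identically, using openness of $\psi$ in place of $\varphi$.

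I expect the only genuine subtlety to be the symmetry bookkeeping: condition \ref{eq2: track bis} is the single place where the open-map hypothesis is consumed, and the symmetry of the bisimulation relies on having \emph{both} legs of the span open rather than just one. Everything else reduces to the diagonal-lift property of open maps together with associativity of composition and Yoneda uniqueness, so no heavy computation is needed.
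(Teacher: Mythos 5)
Your proposal is correct and follows essentially the same route as the paper: the same relation $\R=\{(\varphi\circ p,\psi\circ p)\mid p \text{ a track in } X\}$, initiality from preservation of initial cells, the matching condition from the diagonal lift supplied by openness of $\varphi$ (and symmetrically $\psi$), and strongness by precomposition with initial inclusions. If anything, you are slightly more explicit than the paper on the Yoneda-uniqueness argument for the initial tracks and on the fact that the mirror direction consumes openness of the \emph{other} leg of the span.
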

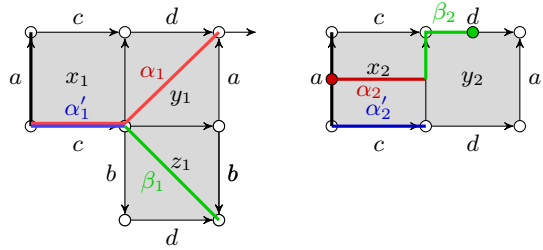
\begin{figure}[h]
    \centering
  \begin{tikzpicture}[x=1.25cm, y=1.25cm]
  \begin{scope}[shift={(-2,0)}]
        \path[fill=black!15] (0,0) to (2,0) to (2,1) to (1,1) to
      (0,1);
        \path[fill=black!15] (2,0) to (2,-1) to (1,-1) to
      (1,0);
      \node[state] (20) at (2,0) {};
      \node[state] (2-1) at (2,-1) {};
      \path (20) edge node[right] {$b$} (2-1);
      \node[state] (1-1) at (1,-1) {};
      \node[state] (10) at (1,0) {};
      \path (10) edge node[left] {$b$} (1-1);
      \path (20) edge node[right] {$b$} (2-1);
      \node[state] (2-1) at (2,-1) {};
      \node[state] (1-1) at (1,-1) {};
      \path (1-1) edge node[below] {$d$} (2-1);
      \node[state] (00) at (0,0) {};
      \node[state] (10) at (1,0) {};
      \node[state] (20) at (2,0) {};
      \node[state] (01) at (0,1) {};
      \node[state] (11) at (1,1) {};
      \node[state, accepting] (21) at (2,1) {};
      \path (00) edge node[below] {$\vphantom{d}c$} (10);
      \path (10) edge node[below] {} (20);
      \path (01) edge node[above] {$c$} (11);
      \path (11) edge node[above] {$d$} (21);
      \path (00) edge node[left] {$a$} (01);
      \path (10) edge (11);
      \path (20) edge node[right] {$a$} (21);
      \node (.5.5) at (.5,.5) {$x_1$};
      \node (1.6.3) at (1.6,.3) {$y_1$};
      \node (1.6-.4) at (1.6,-.4) {$z_1$};  
      \draw[-, very thick, black] (0,0) -- (0,1) ;
       \draw[-, very thick, red!75] (0,0.03) -- (1,0.03) --  (2,1);
         \draw[-, very thick, blue!75] (0,0) -- (1,0);
         \draw[-, very thick, green!80!black] (1,0) --  (2,-1);
           \node at (1.3,.6) {$\vphantom{b} \textcolor{red!80!black}{\alpha_1}$};
           \node at (1.3,-.6) {$\vphantom{b} \textcolor{green!80!black}{\beta_1}$};
            \node at (.5,.2) {$\vphantom{b} \textcolor{blue!80!black}{\alpha_1'}$};
      \end{scope}
\begin{scope}[shift={(1.2,0)}]
        \path[fill=black!15] (0,0) to (2,0) to (2,1) to (1,1) to
      (0,1);
      \node[state] (00) at (0,0) {};
      \node[state] (10) at (1,0) {};
      \node[state] (20) at (2,0) {};
      \node[state] (01) at (0,1) {};
      \node[state] (11) at (1,1) {};
      \node[state] (21) at (2,1) {};
      \path (00) edge node[below] {$\vphantom{d}c$} (10);
      \path (10) edge node[below] {$d$} (20);
      \path (01) edge node[above] {$c$} (11);
      \path (11) edge node[above] {$d$} (21);
      \path (00) edge node[left] {$a$} (01);
      \path (10) edge (11);
      \path (20) edge node[right] {$a$} (21);
      \node (.5.6) at (.5,.6) {$x_2$};
      \node (1.5.5) at (1.5,.5) {$y_2$};
      \draw[-, very thick, red!80!black] (0,0.5) -- (.5,.5) --  (1,.5);
       \draw[-, very thick, blue!80!black] (0,0) -- (1,0) ;
     \draw[-, very thick, green!80!black] (1,0.5) -- (1,1) -- (1.5,1);
            \node at (.4,.4) {$\vphantom{b} \textcolor{red!80!black}{\alpha_2}$};
             \node at (1.2,1.2) {$\vphantom{b} \textcolor{green!80!black}{\beta_2}$};
      \draw[-, very thick, black] (0,0) -- (0,1) ;
       \node[state, fill=red!80!black, minimum size=1.5mm] at (0,0.5) {};
       \node[state, fill=green!80!black, minimum size=1.5mm] at (1.5,1) {};
       \node at (.5,.2) {$\vphantom{b} \textcolor{blue!80!black}{\alpha_2'}$};
      \end{scope}%
  \end{tikzpicture}
   \smallskip
  \caption{Two HDAs $\mathcal{X}_1$ and $\mathcal{X}_2$ in $\HDA_{a}$ that are strong track-bisimilar, cell-bisimilar, but not closed cell-bisimilar HDAs. Each HDA has the edge labeled by $a$ with a thick line as the unique initial cell.}
  \label{fig: counterexample }
\end{figure}
\vspace{-.4cm}
   It is clear that closed cell-bisimilarity implies cell-bisimilarity. The following example shows that the opposite direction is false. It also shows that track-bisimilarity does not imply closed cell bisimilarity.  
The opposite direction of the later remains an open question that we would like to answer in an extended version of this work.
\begin{example}
 Fig.~\ref{fig: counterexample } shows two HDAs $\bigl(X_1,\delta_c^0(x_1)\bigr)$ (on the left) and $\bigl(X_2,\delta_c^0(x_2)\bigl)$ (on the right) that are track-bisimilar but not closed cell-bisimular.  Let $P_1=\loset{\ibullet a \ibullet \\ c}$, $P_2=\loset{\ibullet a\phantom{\ibullet} \\ \phantom{\ibullet}d \ibullet}$, $P_3=\loset{\ibullet d \\ \phantom{\ibullet}b}$.
      It is not difficult to check that $K=\{(g,g')\mid g:\pobj{P} \to X_1,g':\pobj{P} \to X_2 \mid \text{there exists } i_{P}^{P_1*P_2 } \}$ is a strong track-bisimulation. However, they cannot be closed cell-bisimilar, because if there is a closed cell-bisimulation between them, then $\delta^0_a(y_1)$ and $\delta^0_a(y_2)$ are related. However, $\delta_b^0(z_1)=\delta^0_a(y_1)$ while there exists no cell $z_2 \in X_2$ such that $\delta_b^0(z_2)=\delta^0_a(y_2)$.
\end{example}

\begin{remark}
    To check the track bisimilarity of the HDAs of the previous example, one may check that initial \emph{paths} satisfy the same forward assertions. Note that if we allow $i_{X_1}$ and $i_{X_2}$ to be the nodes $\delta_{ac}^0(x_1)$ and $\delta_{ac}^0(x_2)$ respectively, $\mathcal{X}_1$ and $\mathcal{X}_2$ will no longer be track bisimilar. Due to the distinguishing formulae $ \bigl\langle  (c) \bigr\rangle \bigl\langle  \left[\begin{smallmatrix}b\\d \end{smallmatrix}\right] \bigr\rangle \top$ that holds in $(i_{X_1})$ but not in $(i_{X_2})$. In fact, in this case, we will have equivalence between the notions of strong track bisimilarity and cell-bisimilarity. 
\end{remark}

     We say that paths $\alpha=(x_{0}, \varphi_{1}, x_{1}, \varphi_{2}, \ldots, \varphi_{n}, x_{n})$ and $\beta=(y_{0}, \psi_{1}, y_{1}, \psi_{2}, \ldots,$\linebreak $ \psi_{m}, y_{m})$ have the same shape if $n=m$ and $\varphi_i =  \psi_i$ for all $i$. The following notion of behavioral equivalence was originally introduced by v.Glabbeek \cite{VANGLABBEEK2006265} as ST-bisimulation. In our setting it has been formulated in \cite{LanguageofHDA} as follows.
 
 \begin{definition} \label{def: U bisimulation}
 A \emph{path-bisimulation} between HDAs $\mathcal{Y}$ and $\mathcal{Z}$ is a symmetric relation $\R$ between paths in $Y$ and $Z$ such that
 \begin{compactenum}
\item \label{en: U bisim.e1}initial paths $(i_Y)$ and $(i_Z)$ are related;
\item \label{en: U bisim.e3} $\R$ respects the shape: for all
  $( \rho, \sigma)\in \R$, $\rho$ and $ \sigma$ have the same shape;
\item \label{en: U bisim.e4} for all $( \rho, \sigma)\in \R$ and path
  $\rho'$ in $Y$ where $\rho$ and $\rho'$ may be concatenated,
  there exists a path $\sigma'$ in $Z$ such that $( \rho * \rho',
  \sigma* \sigma')\in \R$;
 \end{compactenum}
  A path-bisimulation is called \emph{strong} if, in addition, it satisfies:
\begin{compactenum}
\setcounter{enumi}3 
\item \label{en: U bisim.e6} for all $( \rho, \sigma)\in \R$ and $\rho'$ a restriction of $\rho$, there exists $\sigma'$ a restriction of $\sigma$ such that $( \rho',\sigma')\in \R$.
\end{compactenum}
Finally, $\mathcal{X}$ and $\mathcal{Y}$ are (strong) \emph{\emph{path-bisimilar}} if there exists a (strong) \emph{path-bisimulation} $\R$ between them; this is an equivalence relation.
\end{definition}

\begin{theorem}\label{th: last theorem}
   Two HDAs $\mathcal{X}_1$ and $\mathcal{X}_2$ are (strong) track-bisimilar iff they are (strong) path-bisimilar.
\end{theorem}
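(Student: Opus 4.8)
The plan is to prove both equivalences at once by transporting bisimulations across the isomorphism $\pclass\cong\mathbb{T}_X$ of Theorem~\ref{th: path and tracks isomorphism}. The dictionary sends a track $p:\square^P\to X$ to the path $p(\rho_P)$ and, conversely, a path $\alpha$ with $\ev(\alpha)=P$ to the track $g_\alpha$ of Proposition~\ref{prop: generalization of Yoneda}; by Lemma~\ref{lemma: equiv clas Yoneda} the latter depends only on the class $[\alpha]$. Under this dictionary the clauses match up. The initial track $\iota_{\mathcal X}$ and the initial path $(i_X)$ correspond, since both are images of the characteristic path $\rho_S$, which is congruent to the length-zero path $(\mathbf{y}_S)$. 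The forward-lifting clause of track-bisimulation corresponds to the concatenation clause of path-bisimulation, because $\rho_{P*R}=\jneda_P^{P*R}(\rho_P)*\jnedafinal_R^{P*R}(\rho_R)$ realises a track extension along $\jneda_P^{P*R}$ as concatenation with a path labelled $R$. In the strong case, closure under pre-composition with the maps $\jneda_{P'}^{P}$ corresponds to closure under restriction.

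For the direction ``track-bisimilar $\Rightarrow$ path-bisimilar'' I would start from a $\TrO^0_S$-bisimulation $\R$ and put $\R'=\{(\rho,\sigma)\mid \rho,\sigma\text{ have the same shape and }(g_\rho,g_\sigma)\in\R\}$. Clause~\ref{en: U bisim.e3} (shape) then holds by construction, and clause~\ref{en: U bisim.e1} (initial paths) is immediate from clause~\ref{eq1: track bis}. For the extension clause~\ref{en: U bisim.e4}, given $(\rho,\sigma)\in\R'$ and an extension $\rho*\rho'$ with $P=\ev(\rho)$ and $R=\ev(\rho')$, the track $g_{\rho*\rho'}$ extends $g_\rho$ along $\jneda_P^{P*R}$, so clause~\ref{eq2: track bis} yields a partner $p_2':\square^{P*R}\to Z$ with $(g_{\rho*\rho'},p_2')\in\R$ and $p_2'\circ\jneda_P^{P*R}=g_\sigma$; pushing the shape-representative of $\rho*\rho'$ through $p_2'$ produces the required $\sigma*\sigma'$ of the same shape. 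For the strong restriction clause~\ref{en: U bisim.e6} I use that a restriction $\rho'\ininc\rho$ satisfies $g_{\rho'}=g_\rho\circ\jneda_{\ev(\rho')}^{\ev(\rho)}$ and then invoke clause~\ref{eq3: track bis}.

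The converse is dual. From a (strong) path-bisimulation $\R$ I set $\tilde\R=\{(g_\rho,g_\sigma)\mid(\rho,\sigma)\in\R\}$; this is well defined over a common domain because equal shape forces $\ev(\rho)=\ev(\sigma)$, and it is symmetric. Clause~\ref{eq1: track bis} follows from~\ref{en: U bisim.e1}; for~\ref{eq2: track bis} an arbitrary track extension $p_1'$ of $g_\rho$ corresponds to a concatenation $\rho*\rho'$, to which I apply clause~\ref{en: U bisim.e4} and then read off the partner track $g_{\sigma*\sigma'}$, using $(p*q)\circ\jneda_P^{P*Q}=p$ to check the triangle. For the strong clause~\ref{eq3: track bis} I translate pre-composition with $\jneda_{P'}^{P}$ into the restriction of clause~\ref{en: U bisim.e6}, noting that the common shape of a related pair pins down the decomposition $\ev(\rho)=\ev(\rho')*R$, and hence the inclusion $\jneda_{P'}^{P}$, uniquely.

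The hard part will be the friction between the shape-sensitive definition of path-bisimulation and the shape-blind definition of track-bisimulation: a track records only the congruence class $[\alpha]=[\rho_P]$, whereas clause~\ref{en: U bisim.e3} constrains the literal step sequence. To discharge the extension and restriction clauses I therefore expect to need two supporting facts — that the induced maps preserve shapes, and a rigidity statement to the effect that a path in a track object issuing from the initial cell is determined up to congruence by its shape — so that the partner paths can always be realised with exactly the shape required. The companion identity $\rho'\ininc\rho \iff \rho'*\gamma\simeq\rho$ (restriction equals prefix up to congruence), which underlies $g_{\rho'}=g_\rho\circ\jneda_{\ev(\rho')}^{\ev(\rho)}$ in the strong case, is the other place where genuine work is needed; everything else is bookkeeping through the isomorphism of Theorem~\ref{th: path and tracks isomorphism}.
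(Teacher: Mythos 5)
Your proposal follows essentially the same route as the paper: both directions transport the bisimulation across the isomorphism $\pclass\cong\mathbb{T}_X$ of Theorem~\ref{th: path and tracks isomorphism}, via the characteristic path $\rho_P$ and the correspondence $\alpha\mapsto g_\alpha$ of Proposition~\ref{prop: generalization of Yoneda}, and your clause-by-clause dictionary is exactly the paper's. The one substantive difference is how the transported path relation is defined. The paper takes $\R'=\{(p_1(\rho_P),p_2(\rho_P))\mid(p_1,p_2)\in\R\}$, tying the two paths together by a \emph{common} witnessing path in $\square^P$; you instead take all same-shape pairs $(\rho,\sigma)$ with $(g_\rho,g_\sigma)\in\R$. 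Your choice is precisely what forces the ``rigidity'' lemma you flag: in the extension clause the partner produced by clause~\ref{eq2: track bis} is $p_2'(\hat\tau)$ for the witness $\hat\tau\simeq\rho_{P*R}$ of $\rho*\rho'$, whose prefix is $g_\sigma(\bar\tau)$ with $\bar\tau$ the witness of $\rho$; to conclude that this prefix is literally $\sigma$ you need $\bar\tau$ to coincide with the witness of $\sigma$, i.e.\ that \emph{within the congruence class of $\rho_P$} a path is determined by its shape. That is the statement to prove (not that the shape determines the congruence class, which is how your phrasing reads), and it does hold: the peaks and valleys of the sparse form are fixed, down-runs are reconstructed forwards and up-runs backwards from them. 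The paper's same-witness definition avoids this lemma but has the dual defect that, as written, the relation contains only images of the sparse characteristic paths and so does not literally contain $\rho*\rho'$ for an arbitrary extension; the fix that repairs both versions is $\R'=\{(p_1(\tau),p_2(\tau))\mid(p_1,p_2)\in\R,\ \tau\simeq\rho_P\ \text{a path in }\square^P\}$. Your handling of the strong clauses via $\rho'\ininc\rho\Rightarrow g_{\rho'}=g_\rho\circ\jneda_{\ev(\rho')}^{\ev(\rho)}$ supplies a step the paper's appendix proof omits entirely (it never verifies clause~\ref{eq3: track bis} in the backward direction), but the companion equivalence you state should be weakened to an implication: a prefix up to congruence need not itself be a restriction, only congruent to one.
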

 
\paragraph*{Conclusion}
We have investigated open maps for the category $\TrO^0\subseteq \TrO$. The general approach yields the abstract notion of $\TrO^0$-bisimulation and a path logic, IPML (with past modality) for which the logical equivalence is equivalent to the (strong) Track-bisimulation. We showed that our logic is powerful enough to capture true concurrency and characterize (strong) Path-bisimulation, known in the literature as ST-bisimulation. We summarize the hierarchy of the different notions in Fig.\ref{fi: conclusion}. In future work, we aim to look at the extension of IPML that captures the finest bisimulation equivalence, hereditary history preserving bisimulation. We would thus have a complete spectrum for concurrency bisimulation notions that might be interpreted over other models of concurrency such as Petri nets, event structures, and configuration structures, due to the expressiveness of HDA.
\bibliographystyle{ACM-Reference-Format}

\begin{thebibliography}{32}


\ifx \showCODEN    \undefined \def \showCODEN     #1{\unskip}     \fi
\ifx \showDOI      \undefined \def \showDOI       #1{#1}\fi
\ifx \showISBNx    \undefined \def \showISBNx     #1{\unskip}     \fi
\ifx \showISBNxiii \undefined \def \showISBNxiii  #1{\unskip}     \fi
\ifx \showISSN     \undefined \def \showISSN      #1{\unskip}     \fi
\ifx \showLCCN     \undefined \def \showLCCN      #1{\unskip}     \fi
\ifx \shownote     \undefined \def \shownote      #1{#1}          \fi
\ifx \showarticletitle \undefined \def \showarticletitle #1{#1}   \fi
\ifx \showURL      \undefined \def \showURL       {\relax}        \fi
\providecommand\bibfield[2]{#2}
\providecommand\bibinfo[2]{#2}
\providecommand\natexlab[1]{#1}
\providecommand\showeprint[2][]{arXiv:#2}

\bibitem[Aceto et~al\mbox{.}(2007)]%
        {aceto2007reactive}
\bibfield{author}{\bibinfo{person}{Luca Aceto}, \bibinfo{person}{Anna Ingolfsdottir}, \bibinfo{person}{{Kim Guldstrand} Larsen}, {and} \bibinfo{person}{Jiri Srba}.} \bibinfo{year}{2007}\natexlab{}.
\newblock \bibinfo{booktitle}{\emph{Reactive Systems: Modelling, Specification and Verification}}.
\newblock \bibinfo{publisher}{Cambridge University Press}, \bibinfo{address}{United Kingdom}.
\newblock
\showISBNx{9780521875462}


\bibitem[Amrane et~al\mbox{.}(2023)]%
        {amrane2023developments}
\bibfield{author}{\bibinfo{person}{Amazigh Amrane}, \bibinfo{person}{Hugo Bazille}, \bibinfo{person}{Uli Fahrenberg}, {and} \bibinfo{person}{Krzysztof Ziemia{\'{n}}ski}.} \bibinfo{year}{2023}\natexlab{}.
\newblock \showarticletitle{Closure and Decision Properties for Higher-Dimensional Automata}. In 
\newblock
\showISBNx{978-3-031-47963-2}


\bibitem[Baldan and Crafa(2010)]%
        {baldan2010logic}
\bibfield{author}{\bibinfo{person}{Paolo Baldan} {and} \bibinfo{person}{Silvia Crafa}.} \bibinfo{year}{2010}\natexlab{}.
\newblock \showarticletitle{A Logic for True Concurrency}. In \bibinfo{booktitle}{\emph{CONCUR 2010 - Concurrency Theory}}, \bibfield{editor}{\bibinfo{person}{Paul Gastin} {and} \bibinfo{person}{Fran{\c{c}}ois Laroussinie}} (Eds.). \bibinfo{publisher}{Springer Berlin Heidelberg}, \bibinfo{address}{Berlin, Heidelberg}, \bibinfo{pages}{147--161}.
\newblock
\showISBNx{978-3-642-15375-4}


\bibitem[Baldan and Crafa(2014)]%
        {baldan2014hereditary}
\bibfield{author}{\bibinfo{person}{Paolo Baldan} {and} \bibinfo{person}{Silvia Crafa}.} \bibinfo{year}{2014}\natexlab{}.
\newblock \showarticletitle{Hereditary History-Preserving Bisimilarity: Logics and Automata}. In \bibinfo{booktitle}{\emph{Programming Languages and Systems}}, \bibfield{editor}{\bibinfo{person}{Jacques Garrigue}} (Ed.). \bibinfo{publisher}{Springer International Publishing}, \bibinfo{address}{Cham}, \bibinfo{pages}{469--488}.
\newblock
\showISBNx{978-3-319-12736-1}


\bibitem[Baldan and Padoan(2020)]%
        {baldan2020model}
\bibfield{author}{\bibinfo{person}{Paolo Baldan} {and} \bibinfo{person}{Tommaso Padoan}.} \bibinfo{year}{2020}\natexlab{}.
\newblock \showarticletitle{Model checking a logic for true concurrency}.
\newblock \bibinfo{journal}{\emph{ACM Transactions on Computational Logic (TOCL)}} \bibinfo{volume}{21}, \bibinfo{number}{4} (\bibinfo{year}{2020}), \bibinfo{pages}{1--49}.
\newblock


\bibitem[Bednarczyk(1987)]%
        {bednarczyk1989categories}
\bibfield{author}{\bibinfo{person}{Marek~Antoni Bednarczyk}.} \bibinfo{year}{1987}\natexlab{}.
\newblock \emph{\bibinfo{title}{Categories of Asynchronous Systems}}.
\newblock \bibinfo{thesistype}{Ph.\,D. Dissertation}. \bibinfo{school}{University of Sussex}, \bibinfo{address}{GBR}.
\newblock
\newblock
\shownote{AAIDX83002}.


\bibitem[De~Nicola and Vaandrager(1995)]%
        {de1995three}
\bibfield{author}{\bibinfo{person}{Rocco De~Nicola} {and} \bibinfo{person}{Frits Vaandrager}.} \bibinfo{year}{1995}\natexlab{}.
\newblock \showarticletitle{Three logics for branching bisimulation}.
\newblock \bibinfo{journal}{\emph{Journal of the ACM (JACM)}} \bibinfo{volume}{42}, \bibinfo{number}{2} (\bibinfo{year}{1995}), \bibinfo{pages}{458--487}.
\newblock


\bibitem[Fahrenberg et~al\mbox{.}(2021)]%
        {LanguageofHDA}
\bibfield{author}{\bibinfo{person}{Uli Fahrenberg}, \bibinfo{person}{Christian Johansen}, \bibinfo{person}{Georg Struth}, {and} \bibinfo{person}{Krzysztof Ziemianski}.} \bibinfo{year}{2021}\natexlab{}.
\newblock \showarticletitle{Languages of higher-dimensional automata}.
\newblock \bibinfo{journal}{\emph{Math. Struct. Comput. Sci.}} \bibinfo{volume}{31}, \bibinfo{number}{5} (\bibinfo{year}{2021}), \bibinfo{pages}{575--613}.
\newblock
\urldef\tempurl%
\url{https://doi.org/10.1017/S0960129521000293}
\showDOI{\tempurl}


\bibitem[Fahrenberg et~al\mbox{.}(2022)]%
        {KleeneTh}
\bibfield{author}{\bibinfo{person}{Uli Fahrenberg}, \bibinfo{person}{Christian Johansen}, \bibinfo{person}{Georg Struth}, {and} \bibinfo{person}{Krzysztof Ziemia\'{n}ski}.} \bibinfo{year}{2022}\natexlab{}.
\newblock \showarticletitle{{A Kleene Theorem for Higher-Dimensional Automata}}. In \bibinfo{booktitle}{\emph{33rd International Conference on Concurrency Theory (CONCUR 2022)}} \emph{(\bibinfo{series}{Leibniz International Proceedings in Informatics (LIPIcs)}, Vol.~\bibinfo{volume}{243})}, \bibfield{editor}{\bibinfo{person}{Bartek Klin}, \bibinfo{person}{S{\l}awomir Lasota}, {and} \bibinfo{person}{Anca Muscholl}} (Eds.). \bibinfo{publisher}{Schloss Dagstuhl -- Leibniz-Zentrum f{\"u}r Informatik}, \bibinfo{address}{Dagstuhl, Germany}, \bibinfo{pages}{29:1--29:18}.
\newblock
\showISBNx{978-3-95977-246-4}
\showISSN{1868-8969}
\urldef\tempurl%
\url{https://doi.org/10.4230/LIPIcs.CONCUR.2022.29}
\showDOI{\tempurl}


\bibitem[Fahrenberg and Ziemia{\'{n}}ski(2023)]%
        {MyhillNerode}
\bibfield{author}{\bibinfo{person}{Uli Fahrenberg} {and} \bibinfo{person}{Krzysztof Ziemia{\'{n}}ski}.} \bibinfo{year}{2023}\natexlab{}.
\newblock \showarticletitle{A Myhill-Nerode Theorem for Higher-Dimensional Automata}. In \bibinfo{booktitle}{\emph{Application and Theory of Petri Nets and Concurrency}}, \bibfield{editor}{\bibinfo{person}{Luis Gomes} {and} \bibinfo{person}{Robert Lorenz}} (Eds.). \bibinfo{publisher}{Springer Nature Switzerland}, \bibinfo{address}{Cham}, \bibinfo{pages}{167--188}.
\newblock
\showISBNx{978-3-031-33620-1}


\bibitem[Fishburn(1970)]%
        {fishburn1970intransitive}
\bibfield{author}{\bibinfo{person}{Peter~C Fishburn}.} \bibinfo{year}{1970}\natexlab{}.
\newblock \showarticletitle{Intransitive indifference with unequal indifference intervals}.
\newblock \bibinfo{journal}{\emph{Journal of Mathematical Psychology}} \bibinfo{volume}{7}, \bibinfo{number}{1} (\bibinfo{year}{1970}), \bibinfo{pages}{144--149}.
\newblock


\bibitem[Hennessy and Milner(1985)]%
        {hennessy1985algebraic}
\bibfield{author}{\bibinfo{person}{Matthew Hennessy} {and} \bibinfo{person}{Robin Milner}.} \bibinfo{year}{1985}\natexlab{}.
\newblock \showarticletitle{Algebraic laws for nondeterminism and concurrency}.
\newblock \bibinfo{journal}{\emph{Journal of the ACM (JACM)}} \bibinfo{volume}{32}, \bibinfo{number}{1} (\bibinfo{year}{1985}), \bibinfo{pages}{137--161}.
\newblock


\bibitem[Janicki and Koutny(1993)]%
        {janicki1993structure}
\bibfield{author}{\bibinfo{person}{Ryszard Janicki} {and} \bibinfo{person}{Maciej Koutny}.} \bibinfo{year}{1993}\natexlab{}.
\newblock \showarticletitle{Structure of concurrency}.
\newblock \bibinfo{journal}{\emph{Theoretical Computer Science}} \bibinfo{volume}{112}, \bibinfo{number}{1} (\bibinfo{year}{1993}), \bibinfo{pages}{5--52}.
\newblock


\bibitem[Joyal et~al\mbox{.}(1996)]%
        {JOYAL1996164}
\bibfield{author}{\bibinfo{person}{André Joyal}, \bibinfo{person}{Mogens Nielsen}, {and} \bibinfo{person}{Glynn Winskel}.} \bibinfo{year}{1996}\natexlab{}.
\newblock \showarticletitle{Bisimulation from Open Maps}.
\newblock \bibinfo{journal}{\emph{Information and Computation}} \bibinfo{volume}{127}, \bibinfo{number}{2} (\bibinfo{year}{1996}), \bibinfo{pages}{164--185}.
\newblock
\showISSN{0890-5401}
\urldef\tempurl%
\url{https://doi.org/10.1006/inco.1996.0057}
\showDOI{\tempurl}


\bibitem[Laroussinie(2010)]%
        {baier2008principles}
\bibfield{author}{\bibinfo{person}{François Laroussinie}.} \bibinfo{year}{2010}\natexlab{}.
\newblock \showarticletitle{Christel Baier and Joost-Pieter Katoen Principles of Model Checking. MIT Press (May 2008). ISBN: 978-0-262-02649-9. £44.95. 975 pp. Hardcover}.
\newblock \bibinfo{journal}{\emph{Comput. J.}} \bibinfo{volume}{53}, \bibinfo{number}{5} (\bibinfo{date}{June} \bibinfo{year}{2010}), \bibinfo{pages}{615--616}.
\newblock
\showISSN{1460-2067}
\urldef\tempurl%
\url{https://doi.org/10.1093/comjnl/bxp025}
\showDOI{\tempurl}


\bibitem[Leifer and Milner(2000)]%
        {leifer2000deriving}
\bibfield{author}{\bibinfo{person}{James~J. Leifer} {and} \bibinfo{person}{Robin Milner}.} \bibinfo{year}{2000}\natexlab{}.
\newblock \showarticletitle{Deriving Bisimulation Congruences for Reactive Systems}. In \bibinfo{booktitle}{\emph{CONCUR 2000 --- Concurrency Theory}}, \bibfield{editor}{\bibinfo{person}{Catuscia Palamidessi}} (Ed.). \bibinfo{publisher}{Springer Berlin Heidelberg}, \bibinfo{address}{Berlin, Heidelberg}, \bibinfo{pages}{243--258}.
\newblock
\showISBNx{978-3-540-44618-7}


\bibitem[Nielsen and Clausen(1994)]%
        {nielsen1994bisimulation}
\bibfield{author}{\bibinfo{person}{Mogens Nielsen} {and} \bibinfo{person}{Christian Clausen}.} \bibinfo{year}{1994}\natexlab{}.
\newblock \showarticletitle{Bisimulation for Models in Concurrency}. In \bibinfo{booktitle}{\emph{CONCUR '94: Concurrency Theory}}, \bibfield{editor}{\bibinfo{person}{Bengt Jonsson} {and} \bibinfo{person}{Joachim Parrow}} (Eds.). \bibinfo{publisher}{Springer Berlin Heidelberg}, \bibinfo{address}{Berlin, Heidelberg}, \bibinfo{pages}{385--400}.
\newblock
\showISBNx{978-3-540-48654-1}


\bibitem[Nielsen et~al\mbox{.}(1981)]%
        {nielsen1981petri}
\bibfield{author}{\bibinfo{person}{Mogens Nielsen}, \bibinfo{person}{Gordon Plotkin}, {and} \bibinfo{person}{Glynn Winskel}.} \bibinfo{year}{1981}\natexlab{}.
\newblock \showarticletitle{Petri nets, event structures and domains, part I}.
\newblock \bibinfo{journal}{\emph{Theoretical Computer Science}} \bibinfo{volume}{13}, \bibinfo{number}{1} (\bibinfo{year}{1981}), \bibinfo{pages}{85--108}.
\newblock


\bibitem[Phillips and Ulidowski(2014)]%
        {phillips2014event}
\bibfield{author}{\bibinfo{person}{Iain Phillips} {and} \bibinfo{person}{Irek Ulidowski}.} \bibinfo{year}{2014}\natexlab{}.
\newblock \showarticletitle{Event identifier logic}.
\newblock \bibinfo{journal}{\emph{Mathematical Structures in Computer Science}} \bibinfo{volume}{24}, \bibinfo{number}{2} (\bibinfo{year}{2014}), \bibinfo{pages}{e240204}.
\newblock


\bibitem[Pnueli and Manna(1992)]%
        {pnueli1992temporal}
\bibfield{author}{\bibinfo{person}{Amir Pnueli} {and} \bibinfo{person}{Zohar Manna}.} \bibinfo{year}{1992}\natexlab{}.
\newblock \showarticletitle{The temporal logic of reactive and concurrent systems}.
\newblock \bibinfo{journal}{\emph{Springer}}  \bibinfo{volume}{16} (\bibinfo{year}{1992}), \bibinfo{pages}{12}.
\newblock


\bibitem[Pratt(1991)]%
        {PrattCG}
\bibfield{author}{\bibinfo{person}{Vaughn Pratt}.} \bibinfo{year}{1991}\natexlab{}.
\newblock \showarticletitle{Modeling Concurrency with Geometry}. In \bibinfo{booktitle}{\emph{Proceedings of the 18th ACM SIGPLAN-SIGACT Symposium on Principles of Programming Languages}} (Orlando, Florida, USA) \emph{(\bibinfo{series}{POPL '91})}. \bibinfo{publisher}{Association for Computing Machinery}, \bibinfo{address}{New York, NY, USA}, \bibinfo{pages}{311–322}.
\newblock
\showISBNx{0897914198}
\urldef\tempurl%
\url{https://doi.org/10.1145/99583.99625}
\showDOI{\tempurl}


\bibitem[Sangiorgi(1998)]%
        {sangiorgi1998bisimulation}
\bibfield{author}{\bibinfo{person}{Davide Sangiorgi}.} \bibinfo{year}{1998}\natexlab{}.
\newblock \showarticletitle{On the bisimulation proof method}.
\newblock \bibinfo{journal}{\emph{Mathematical Structures in Computer Science}} \bibinfo{volume}{8}, \bibinfo{number}{5} (\bibinfo{year}{1998}), \bibinfo{pages}{447--479}.
\newblock


\bibitem[Shehtman(2016)]%
        {nielsen2005bisimulation}
\bibfield{author}{\bibinfo{person}{V.~B. Shehtman}.} \bibinfo{year}{2016}\natexlab{}.
\newblock \showarticletitle{Bisimulation games and locally tabular logics}.
\newblock \bibinfo{journal}{\emph{Russian Mathematical Surveys}} \bibinfo{volume}{71}, \bibinfo{number}{5} (\bibinfo{date}{oct} \bibinfo{year}{2016}), \bibinfo{pages}{979}.
\newblock
\urldef\tempurl%
\url{https://doi.org/10.1070/RM9731}
\showDOI{\tempurl}


\bibitem[Shields(1985)]%
        {shields1985concurrent}
\bibfield{author}{\bibinfo{person}{Mike~W Shields}.} \bibinfo{year}{1985}\natexlab{}.
\newblock \showarticletitle{Concurrent machines}.
\newblock \bibinfo{journal}{\emph{Comput. J.}} \bibinfo{volume}{28}, \bibinfo{number}{5} (\bibinfo{year}{1985}), \bibinfo{pages}{449--465}.
\newblock


\bibitem[v.Glabbeek(1995)]%
        {van1995configuration}
\bibfield{author}{\bibinfo{person}{R.J. v.Glabbeek} {and} \bibinfo{person}{G.D. Plotkin}.} \bibinfo{year}{1995}\natexlab{}.
\newblock \showarticletitle{Configuration structures}. In \bibinfo{booktitle}{\emph{Proceedings of Tenth Annual IEEE Symposium on Logic in Computer Science}}. \bibinfo{publisher}{IEEE}, \bibinfo{address}{San Deigo, CA, USA}, \bibinfo{pages}{199--209}.
\newblock
\showISSN{1043-6871}
\urldef\tempurl%
\url{https://doi.org/10.1109/LICS.1995.523257}
\showDOI{\tempurl}


\bibitem[v.Glabbeek(1990)]%
        {Van.br}
\bibfield{author}{\bibinfo{person}{R.~J. v.Glabbeek}.} \bibinfo{year}{1990}\natexlab{}.
\newblock \showarticletitle{The linear time - branching time spectrum}. In \bibinfo{booktitle}{\emph{CONCUR '90 Theories of Concurrency: Unification and Extension}}, \bibfield{editor}{\bibinfo{person}{J.~C.~M. Baeten} {and} \bibinfo{person}{J.~W. Klop}} (Eds.). \bibinfo{publisher}{Springer Berlin Heidelberg}, \bibinfo{address}{Berlin, Heidelberg}, \bibinfo{pages}{278--297}.
\newblock
\showISBNx{978-3-540-46395-5}


\bibitem[v.Glabbeek(2006)]%
        {VANGLABBEEK2006265}
\bibfield{author}{\bibinfo{person}{Rob~J. v.Glabbeek}.} \bibinfo{year}{2006}\natexlab{}.
\newblock \showarticletitle{On the expressiveness of higher dimensional automata}.
\newblock \bibinfo{journal}{\emph{Theoretical Computer Science}} \bibinfo{volume}{356}, \bibinfo{number}{3} (\bibinfo{year}{2006}), \bibinfo{pages}{265--290}.
\newblock
\newblock
\shownote{Expressiveness in Concurrency}.


\bibitem[v.Glabbeek and Goltz(2001)]%
        {van2001refinement}
\bibfield{author}{\bibinfo{person}{Rob~J. v.Glabbeek} {and} \bibinfo{person}{Ursula Goltz}.} \bibinfo{year}{2001}\natexlab{}.
\newblock \showarticletitle{Refinement of actions and equivalence notions for concurrent systems}.
\newblock \bibinfo{journal}{\emph{Acta Informatica}} \bibinfo{volume}{37}, \bibinfo{number}{4} (\bibinfo{year}{2001}), \bibinfo{pages}{229--327}.
\newblock


\bibitem[v.Glabbeek and Vaandrager(1997)]%
        {vanGlabbeek1997difference}
\bibfield{author}{\bibinfo{person}{Rob~J. v.Glabbeek} {and} \bibinfo{person}{Frits Vaandrager}.} \bibinfo{year}{1997}\natexlab{}.
\newblock \showarticletitle{The difference between splitting in n and n+1}.
\newblock \bibinfo{journal}{\emph{Information and Computation}} \bibinfo{volume}{136}, \bibinfo{number}{2} (\bibinfo{year}{1997}), \bibinfo{pages}{109--142}.
\newblock


\bibitem[Winkowski(1977)]%
        {winkowski1977algebraic}
\bibfield{author}{\bibinfo{person}{J{\'o}zef Winkowski}.} \bibinfo{year}{1977}\natexlab{}.
\newblock \showarticletitle{An algebraic characterization of the behaviour of non-sequential systems}.
\newblock \bibinfo{journal}{\emph{Inform. Process. Lett.}} \bibinfo{volume}{6}, \bibinfo{number}{4} (\bibinfo{year}{1977}), \bibinfo{pages}{105--109}.
\newblock


\bibitem[Winskel(1987)]%
        {winskel1986event}
\bibfield{author}{\bibinfo{person}{Glynn Winskel}.} \bibinfo{year}{1987}\natexlab{}.
\newblock \showarticletitle{Event structures}. In \bibinfo{booktitle}{\emph{Petri Nets: Applications and Relationships to Other Models of Concurrency}}, \bibfield{editor}{\bibinfo{person}{W.~Brauer}, \bibinfo{person}{W.~Reisig}, {and} \bibinfo{person}{G.~Rozenberg}} (Eds.). \bibinfo{publisher}{Springer Berlin Heidelberg}, \bibinfo{address}{Berlin, Heidelberg}, \bibinfo{pages}{325--392}.
\newblock
\showISBNx{978-3-540-47926-0}


\bibitem[Winskel(1989)]%
        {winskel1989introduction}
\bibfield{author}{\bibinfo{person}{Glynn Winskel}.} \bibinfo{year}{1989}\natexlab{}.
\newblock \showarticletitle{An introduction to event structures}. In \bibinfo{booktitle}{\emph{Linear Time, Branching Time and Partial Order in Logics and Models for Concurrency}}, \bibfield{editor}{\bibinfo{person}{J.~W. de~Bakker}, \bibinfo{person}{W.~P. de~Roever}, {and} \bibinfo{person}{G.~Rozenberg}} (Eds.). \bibinfo{publisher}{Springer Berlin Heidelberg}, \bibinfo{address}{Berlin, Heidelberg}, \bibinfo{pages}{364--397}.
\newblock
\showISBNx{978-3-540-46147-0}


\end{thebibliography}

\newpage
\appendix
   \begin{figure*}[bp]
\begin{tikzpicture}[fill=blue!20, x=.9cm] 
\path  (2.8,3.2) node(closed_cell_bis) [rectangle,rotate=0,draw,fill] {Closed cell-bisimulation} (7.5,3.2) node(cell_bis) [rectangle,rotate=0,draw,fill] {Cell-bisimulation}
(1.6,1.5) node(strong_track) [rectangle,draw,fill] { strong track bisimulation }
(1.6,-0.2) node(track) [rectangle,draw,fill] {track bisimulation }
(13.6,1.5) node(strong_path_bis) [rectangle,draw,fill] { Strong Path-bisimulation}
(13.6,-0.2) node(path_bis) [rectangle,draw,fill] {Path-bisimulation}
(7.6,1.5) node(all_asssertions) [rectangle,rotate=0,draw,fill] {Same forward+backward assertions}
(7.6,-0.2) node(asssertions) [rectangle,rotate=0,draw,fill] {Same forward assertions}
(11.6,3.2) node(t0bis) [rectangle,rotate=-0,draw,fill] {$T^0$-bisimulation};
\draw[thick] (strong_track) -- (all_asssertions);
\draw[thick] (all_asssertions) -- (strong_track);
\draw[thick]  (strong_track) -- (track) ;
\draw[thick] (closed_cell_bis) -- (cell_bis);
\draw[thick] (cell_bis) -- (closed_cell_bis) ;
\draw[thick] (t0bis) -- (cell_bis);
\draw[thick] (cell_bis) -- (t0bis) ;
\draw[thick] (closed_cell_bis) -- (strong_track);
\draw[thick] (strong_track) -- (closed_cell_bis);

\draw[thick] (track) -- (asssertions);
\draw[thick] (asssertions) -- (track);
\draw[thick] (t0bis.south) -- (strong_path_bis);
\draw[thick] (strong_path_bis) -- (path_bis);
\draw[thick] (strong_path_bis) -- (t0bis.south);
\draw[thick] (strong_path_bis) --  (all_asssertions) ;
\draw[thick] (all_asssertions) -- (strong_path_bis);
\draw[thick] (path_bis) --  (asssertions) ;
\draw[thick] (asssertions) -- (path_bis);
\end{tikzpicture}
\caption{Hierarchy of notions of equivalence in the particular case $S=\emptyset$.}
\label{fi: conclusion2}
\end{figure*}
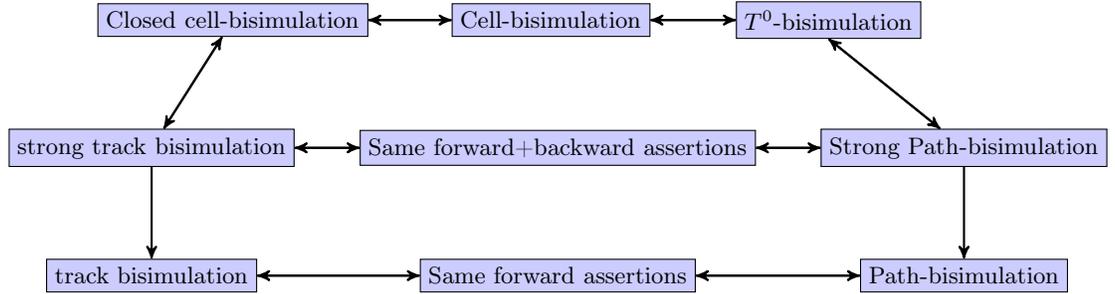

   \section{Particular case}\label{Sec: App particular case}

    In this section, we consider $S$ as the empty conclist, meaning that we allow initial cell to be a node.
    \begin{lemma}
        Let $\alpha=\left(x_{0}, \varphi_{1}, x_{1}, \varphi_{2}, \ldots, \varphi_{n}, x_{n}\right) \in \Path_{\mathcal{X}}$ where $A\subseteq \ev(x_n)$. There exists $\alpha'\in \Path_{\mathcal{X}}$ such that $\pend(\alpha')=\delta^0_A(x_m)$. 
    \end{lemma}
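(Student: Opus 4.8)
The plan is a double induction: an outer induction reducing the set $A$ to a single event, and an inner induction on the number of steps of $\alpha$, in which I inspect the last step and transport it along $\delta^0_a$ using the cubical identities for the face maps. The argument is elementary and self-contained, needing only the relations $\delta^0_A\delta^0_{A'}=\delta^0_{A\cup A'}$ and $\delta^0_A\delta^1_B=\delta^1_B\delta^0_A$ (for disjoint index sets) together with the definition of up- and down-steps.

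First I would reduce to $|A|=1$. If $A=\emptyset$ then $\delta^0_A(x_n)=x_n$ and $\alpha'=\alpha$ works. If $A\neq\emptyset$, pick $a\in A$ and write $\delta^0_A=\delta^0_{A\setminus\{a\}}\circ\delta^0_a$; granting the single-event case, there is a path $\beta$ ending at $\delta^0_a(x_n)$, and since $A\setminus\{a\}\subseteq\ev(\delta^0_a(x_n))$, the outer induction hypothesis applied to $\beta$ and $A\setminus\{a\}$ yields the desired $\alpha'$. So it suffices to fix a single event $a\in\ev(x_n)$ and build a path ending at $\delta^0_a(x_n)$.

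For the single-event case I induct on the number $n$ of steps of $\alpha$, analysing the last step $\varphi_n$ and writing $\alpha_-=(x_0,\dots,\varphi_{n-1},x_{n-1})$ for the prefix. The base case $n=0$ is vacuous: then $\alpha=(i_X)$ and, since $S=\emptyset$, we have $\ev(i_X)=\emptyset$, so no event $a$ exists. For the inductive step there are three cases. If $\varphi_n$ is an up-step $x_{n-1}\nearrow^B x_n$ with $a\in B$, then $\delta^0_{B\setminus\{a\}}\circ\delta^0_a=\delta^0_B$ gives $x_{n-1}=\delta^0_{B\setminus\{a\}}(\delta^0_a(x_n))$, so I replace the last step by the shorter up-step and set $\alpha'=\alpha_-*(x_{n-1}\nearrow^{B\setminus\{a\}}\delta^0_a(x_n))$. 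If $\varphi_n$ is an up-step with $a\notin B$, then $a\in\ev(x_{n-1})$; the inner hypothesis gives $\beta$ ending at $\delta^0_a(x_{n-1})$, and from $\delta^0_a(x_{n-1})=\delta^0_a\delta^0_B(x_n)=\delta^0_B\delta^0_a(x_n)$ I append the up-step $\delta^0_a(x_{n-1})\nearrow^B\delta^0_a(x_n)$. Finally, if $\varphi_n$ is a down-step $x_{n-1}\searrow_C x_n$, then $a\in\ev(x_n)\subseteq\ev(x_{n-1})$ with $a\notin C$; the hypothesis again gives $\beta$ ending at $\delta^0_a(x_{n-1})$, and since the two kinds of faces commute on disjoint sets, $\delta^1_C\delta^0_a(x_{n-1})=\delta^0_a\delta^1_C(x_{n-1})=\delta^0_a(x_n)$, so I append the down-step $\delta^0_a(x_{n-1})\searrow_C\delta^0_a(x_n)$.

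The well-foundedness of the inner induction is exactly where the hypothesis $S=\emptyset$ enters: since $a\in\ev(x_n)$ but $a\notin\ev(x_0)=\emptyset$, the event $a$ must be introduced by some up-step, so reading $\alpha$ backwards the first (terminating) up-step case is always reached before the base case, and we are never asked to ``unstart'' $a$ at the initial node. I expect the main obstacle to be purely the bookkeeping: in each of the three cases one must check the disjointness side-conditions (such as $a\notin B$ or $a\notin C$) under which the cubical identities hold, and verify that the appended step meets the up-/down-step conditions of the path definition. These verifications are routine but must be carried out case by case.
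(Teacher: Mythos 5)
Your proof is correct and follows essentially the same route as the paper's: reduce to a single event $a$, use the fact that $\ev(x_0)=\emptyset$ to locate the up-step that introduces $a$, and transport the remaining suffix of the path along $\delta^0_a$ via the cubical identities. The only difference is organizational — you peel off the last step by induction on path length, whereas the paper directly names the index $k$ where $a$ becomes active and rewrites all subsequent cells $x_i$ as $\delta^0_a(x_i)$ in one pass; the resulting path $\alpha'$ is the same.
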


\begin{proof}
    We will show the case where $A=\{a\}$. The general case could be shown by iterating following the same principle and using the relation $\delta_{A\cup B}^{0}=\delta_{A}^{0}\delta_{B}^{0}$. Since $|\ev(x_0)|=0$ and there exists $k$ such that $\varphi_k=\square (U_{k-1},U_k)$, $\delta_a(x_k)=x_{k-1}$, and $a \in \ev(x_k) $ for all $i \geq k$, so that  $\alpha=\left(x_{0}, \varphi_{1}, x_{1}, \varphi_{2}, \ldots,\varphi_{k-1},\delta_a (x_{k}), \varphi_{k},x_k, \ldots, \varphi_{n}, x_{n}\right)$. 
    Let $\alpha'=\left(x_{0}, \varphi_{1}, x_{1}, \varphi_{2}, \ldots,\varphi_{k-1},\delta_a (x_{k}),\delta_a^0(x_k), \ldots, \varphi'_{n}, \delta_a^0(x_{n})\right)$ 
    where $\ev(x_{k})=U_{k}$ and for all $i\geq k+1$, either
\begin{compactitem}
  \item $\varphi'_{i}=d_{A}^{0} \in \square\left(U_{i-1}\setminus {a}, U_{i}\setminus {a}\right), A \subseteq U_{i} \setminus {a}$ and $\delta_a^0(x_{i-1})=\delta_{A}^{0}\left(\delta_{a}^{0}(x_{i})\right)$ (up-step), or
  \item $\varphi'_{i}=d_{B}^{1} \in \square\left(U_{i}, U_{i-1}\right), B \subseteq U_{i-1}, \delta_{B}^{1}\left(\delta_{a}^{0}(x_{i-1)}\right)=\delta_{a}^{0}(x_{i})$ (down-step).
\end{compactitem}

\end{proof} 
We will denote by $\delta_A^0(\alpha)$ the path constructed in the proof.

\begin{theorem}
    Two HDAs $\mathcal{Y}$ and $\mathcal{Z}$ are Cell-bisimilar iff they are strong Path-bisimmilar.
\end{theorem}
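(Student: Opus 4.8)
The plan is to build the two bisimulations from one another along the dictionary ``cell $=$ endpoint of a path'', matching the clauses of Def.~\ref{def: Cell bis} against those of Def.~\ref{def: U bisimulation}. Throughout I use the particular-case Lemma above, which guarantees that if $y$ is accessible and $A\subseteq\ev(y)$ then $\delta^0_A(y)$ is accessible as well (and its constructed witness $\delta^0_A(\alpha)$ will be reused). From a cell-bisimulation $R$ I define $\mathsf{R}$ to relate paths $\rho=(y_0,\varphi_1,\dots,\varphi_n,y_n)$ in $Y$ and $\sigma=(z_0,\varphi_1,\dots,\varphi_n,z_n)$ in $Z$ of the \emph{same shape} for which $(y_k,z_k)\in R$ for all $k$; equality of shapes means the $\varphi_i$ coincide as $\square$-morphisms, which forces $\ev(y_k)=\ev(z_k)$ as labelled conclists, so clause~\ref{en: U bisim.e3} is automatic, and all cells are accessible since $y_0=i_Y$, $z_0=i_Z$. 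Conversely, from a strong path-bisimulation $\mathsf{R}$ I set $R=\{(\pend(\rho),\pend(\sigma))\mid(\rho,\sigma)\in\mathsf{R}\}$, a relation between accessible cells containing $(i_Y,i_Z)$.

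For cell-bisimilar $\Rightarrow$ strong path-bisimilar I verify Def.~\ref{def: U bisimulation} for $\mathsf{R}$. The initial paths are related since $(i_Y,i_Z)\in R$. For the extension clause~\ref{en: U bisim.e4} I process the steps of $\rho'$ one at a time from the related pair $(y_n,z_n)$: a down-step $\searrow_A$ is matched by clause~\ref{en: Cell bisim.e4} (closure under $\delta^1_A$), while an up-step $y\nearrow^A y'$, i.e.\ $\delta^0_A(y')=y$, is matched via clause~\ref{en: U bar bisim.e5}, yielding $z'$ with $\delta^0_A(z')=z_n$ and $(y',z')\in R$; iterating builds a same-shape $\sigma'$ with all cells related. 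For the strong clause~\ref{en: U bisim.e6}, a restriction $\rho'\ininc\rho$ only alters the last kept step: if it ends at $\delta^1_A(y_{j-1})$ I use clause~\ref{en: Cell bisim.e4}, and if it ends at $\delta^0_{B\setminus A}(y_j)$ (truncating an up-step $d^0_B$) I use the $\delta^0$-part of Def.~\ref{def: Cell bis}, whose accessibility hypothesis holds because $\rho'$ itself is a path to that cell. The symmetric obligations follow from clause~\ref{en: U bar bisim.e6} and the symmetry of the $\delta^0$-part.

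For the converse I verify Def.~\ref{def: Cell bis} for $R$. Label-preservation and relatedness of $(i_Y,i_Z)$ are immediate. For clause~\ref{en: Cell bisim.e4}: given $(y,z)\in R$ witnessed by $(\rho,\sigma)$, extend $\rho$ by the down-step $y\searrow_A\delta^1_A(y)$ and apply clause~\ref{en: U bisim.e4}; respect of shape forces the matching step on $\sigma$ to be $z\searrow_A\delta^1_A(z)$, so $(\delta^1_A(y),\delta^1_A(z))\in R$. For clause~\ref{en: U bar bisim.e5} (and its mirror~\ref{en: U bar bisim.e6}): if $\delta^0_A(y')=y$, extend $\rho$ by the up-step $y\nearrow^A y'$ and again use~\ref{en: U bisim.e4}; the matching up-step $z\nearrow^A z'$ delivers the desired $z'$. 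The remaining, decisive obligation is the $\delta^0$-part: I must show $(\delta^0_A(y),\delta^0_A(z))\in R$, the ``iff'' on accessibility being automatic here by the Lemma above.

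The hard part is exactly this last clause, because a restriction modifies only the final step of a path, whereas un-starting an event $a\in\ev(y)$ requires undoing an up-step that may sit deep inside $\rho$. My plan is to treat one event at a time and combine both strong clauses. Locate the up-step $\varphi_k=d^0_{B_k}$ of $\rho$ that starts $a$ (never terminated afterwards, since $a$ stays active); restrict $\rho$ at step $k$ to a path $\rho'$ ending at $\delta^0_a(y_k)$, obtaining by clause~\ref{en: U bisim.e6} a related restriction $\sigma'$ of $\sigma$ ending at $\delta^0_a(z_k)$. Then re-extend $\rho'$ along the rest of $\rho$ with $a$ kept un-started --- this is precisely the tail of the construction $\delta^0_a(\rho)$ of the Lemma above, ending at $\delta^0_a(y)$ --- and apply clause~\ref{en: U bisim.e4} to produce a same-shape companion ending at $\delta^0_a(z)$; hence $(\delta^0_a(y),\delta^0_a(z))\in R$. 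Iterating over $A$ via $\delta^0_A=\delta^0_a\circ\delta^0_{A\setminus a}$, with each intermediate cell accessible by the Lemma, gives $(\delta^0_A(y),\delta^0_A(z))\in R$. The routine points left to check are that the reduced tail steps are well defined and that same-shape pins down the $Z$-endpoints, both following from the commutation $\delta^0_a\delta=\delta\delta^0_a$ of face maps.
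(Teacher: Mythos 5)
Your forward direction (cell-bisimilar $\Rightarrow$ strong path-bisimilar) is sound, and it is more direct than the paper's, which obtains that implication by chaining through $\TrO^0_S$-bisimulation and track-bisimulation (Thm.~\ref{U and T bis}, Thm.~\ref{th: cell to track}, Thm.~\ref{th: last theorem}); relating same-shape paths whose cells are pointwise related is the right invariant, and your case analysis of extensions and restrictions against the clauses of Def.~\ref{def: Cell bis} is correct.

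The backward direction, however, has a genuine gap, and it sits exactly at the clause you single out as ``the hard part'': the $\delta^0$-closure clause of Def.~\ref{def: Cell bis}, which demands the \emph{specific} pair $(\delta^0_A(y),\delta^0_A(z))\in R$. Your restriction step is fine --- the endpoint of a restriction is computed deterministically from the truncated face map, so same-shape does force $\pend(\sigma')=\delta^0_a(z_k)$. But when you re-extend along the tail of $\delta^0_a(\rho)$ and invoke the extension clause~\ref{en: U bisim.e4}, that clause only yields \emph{some} same-shape companion $\sigma'*\sigma''$. The endpoint of an extension is \emph{not} pinned down by its shape: up-steps in a precubical set are non-deterministic, since a cell may be the lower $A$-face of many cells, so at every up-step of the tail the matching path in $Z$ may diverge from the tail of $\delta^0_a(\sigma)$. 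You therefore only obtain $(\delta^0_a(y),w)\in R$ for some $w$ with the correct event set, not $(\delta^0_a(y),\delta^0_a(z))$; the commutation of face maps you appeal to concerns faces, not the choice of cofaces, and does not repair this. The paper avoids the issue by not taking the naive endpoint relation: it defines $\overline{R}=\bigl\{\bigl(\pend(\delta^0_A(\alpha)),\pend(\delta^0_A(\beta))\bigr)\mid(\alpha,\beta)\in\mathsf{R},\ A\subseteq\ev(\pend(\alpha))\bigr\}$, so that closure under further lower faces holds by construction via $\delta^0_{A'}\delta^0_A=\delta^0_{A\cup A'}$, and the verification burden shifts to the remaining clauses. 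To fix your proof you would need to enlarge $R$ in the same way (or otherwise show that the pair of modified paths $(\delta^0_a(\rho),\delta^0_a(\sigma))$ can itself be placed in a strong path-bisimulation), rather than derive the $\delta^0$-closure from the extension clause.
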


\begin{proof}
    $\Rightarrow$ This is equivalent to Th. \@ \ref{th: cell to track} and can be shown in a similar way.

    $\Leftarrow$ Assume that there is a Path-bisimulation $R$ between $\mathcal{Y}$ and $\mathcal{Z}$. We can easily check that the relation $\overline{R}=\bigl\{(\pend( \delta_A^0(\alpha)),\pend(\delta_A^0(\beta))\mid (\alpha,\beta)\in R $ $ \text{and } A\subseteq \ev(\pend(\alpha)) \bigl\}$
    is a Cell-bisimulation
\end{proof}
    The hierarchy in this case is shown in Fig.~\ref{fi: conclusion2}.
\section{Omitted Proofs}\label{omitted proofs}
  \begin{proof}[Proof of Thm.\@ \ref{th: pomset and track objects}]
Let $(f_1,\varepsilon_1),(f_2,\varepsilon_2): P \to Q$ be morphisms of $\intpom$ such that $\Tr(f_1,\varepsilon_1)=\Tr(f_2,\varepsilon_2): \square^P \to \square^Q$. Using the expression of the composition in Def.\@ \ref{def: inP category}, we have $(f_1 \circ g, \eta_1)=(f_2 \circ g, \eta_2)$ for all $(g,\zeta)$. In particular for $(g,\zeta)=\mathbf{id}_{\intpom}^P$, i.e, $g:P \to P$ such that $g(p)=p$ for all $p$ and $\zeta^{-1}(\exec)=P$, 
we obtain $f_1=f_2$ and $\eta_1=\eta_2$, that is \begin{equation*}
   \begin{cases} \zeta(f_1^{-1}(q)) & \text { if } q \in f_1(P), \\ \varepsilon_1(q) & \text { otherwise}. \end{cases}= \begin{cases} \zeta(f_2^{-1}(q)) & \text { if } q \in f_2(P), \\ \varepsilon_2(q) & \text { otherwise}. \end{cases}
    \end{equation*}
On the other hand, since 
$
  \varepsilon_i(q)= \begin{cases} \exec & \text { if } q \in f_i(P), \\ \varepsilon_i(q) & \text { otherwise}. \end{cases}  $, $\varepsilon_1=\varepsilon_2$. Thus $(f_1,\varepsilon_1)=(f_2,\varepsilon_2)$.
\end{proof}

\begin{proof}[Proof of Prop.\@ \ref{prop: finite gluing of standard cubes}]
   We employ induction on $m$ and use Lem.\@ \ref{lemma: pomset pushout}.
\end{proof}

\begin{proof}[Proof of Prop.\@ \ref{prop: generalization of Yoneda}]
Induction on the number of cells of $\alpha$.
    \begin{compactitem}
        \item If $\alpha=(x)$, then $\rho_P'=(\mathbf{y}_P)$, and the Yoneda embedding $\iota_{x}$ satisfy the requirements.
        \item If $\alpha=(y\nearrow^A x)$, then $P=\subid{\ev(x) \setminus A}{\ev(x)}{\ev(x)}$ and $\rho_P'=(\delta^0_A(\mathbf{y}_U),\mathbf{y}_U)$, where $U=\ev(x)$. Again, we take $g_{\alpha}=\iota_x$.
         \item If $\alpha=(x\searrow^B y)$, then $P= \subid{\ev(x)}{\ev(x)}{\ev(y)\setminus B} $ and $\rho_P'=(\mathbf{y}_U,\delta^1_A(\mathbf{y}_U))$, where $U=\ev(x)$. Similarly, we consider $g_{\alpha}=\iota_x$.  
        \item If $\alpha=\beta*\theta$ where both $\beta$ and $\theta$ are shorter than $\alpha$. Let $R=\ev(\beta)$ and $Q=\ev(\theta)$ (we therefore have $P=R*Q$). By induction hypothesis, there exist $\rho_R'\simeq\rho_R$, $\rho_Q'\simeq \rho_Q$, a unique $g_{\beta}:\square^R \rightarrow X$, and a unique $g_{\theta}:\square^Q \rightarrow X$, such that $g_{\beta}(\rho_R')=\beta$ and $g_{\theta}(\rho_Q')=\theta$. $g_{\beta}*g_{\theta}$ satisfies the requirement of $g_{\alpha}$. By Def.\@ \ref{def: char path}, $\rho_R'*\rho_Q'\simeq \rho_P$.
    \end{compactitem}
    \end{proof} 
    \begin{proof}[Proof of Lem.\@ \ref{lemma: equiv clas Yoneda}]
Induction on the length of $\alpha$.
    \begin{compactitem}
        \item If $\alpha=(x\nearrow^{A\cup B} z)$ and $\beta=(x\nearrow^{A} y \nearrow^{ B} z)$ , then both $g_{\alpha}$ and $g_{\beta}$ are the Yoneda embedding $\iota_z$.
        \item Similarly, if $\alpha=(x\searrow_{A\cup B} z)$ and $\beta=(x\searrow_{A} y \searrow_{ B} z)$.  
        \item If $\theta*\alpha*\gamma \simeq \theta*\beta*\gamma $ such that $\alpha\simeq \beta$. We have $g_{\theta*\alpha*\gamma}=g_{\theta}*g_{\alpha}*g_{\gamma}$ and $g_{\theta*\beta*\gamma}=g_{\theta}*g_{\beta}*g_{\gamma}$.
        By induction hypothesis, $g_{\alpha}=g_{\beta}$, thus $g_{\theta*\alpha*\gamma}=g_{\theta*\beta*\gamma}$.\qed
    \end{compactitem}
\end{proof}
\begin{proof}[Proof of Thm.\@ \ref{th: path and tracks isomorphism}]
    Let us define a functor $\Psi: \pclass \to  \mathbb{T}_X$ by
    $\Psi([\alpha])= g_{\alpha}$, $\Psi(e_\alpha^{\alpha*\beta})=\mathbf{e}_{\ev(\alpha)}^{\ev(\alpha)*\ev(\beta)}$. By Lem.\@ \ref{lemma: Upsilon is a functor} in App. \ref{omitted proofs}, $\Upsilon$ is a functor. To show that it is the inverse of $\Psi$, let $\alpha \in \pclass$, $\Upsilon \circ \Psi(\alpha)= \Upsilon (g_\alpha)=g_\alpha(\rho_P)=\alpha$ by Lem.\@ \ref{lemma: equiv clas Yoneda} and Prop.\@ \ref{prop: generalization of Yoneda}. We use the same argument to show that $\Upsilon \circ \Psi(e_\alpha^{\alpha*\beta})=e_\alpha^{\alpha*\beta}$.
To show that $\Psi \circ \Upsilon = \mathbf{id}_{\mathbb{T}_X}$, we have $\Psi \circ \Upsilon(p)=\Psi (p(\rho_P))=g_{p(\rho_P)}=p$ by the uniqueness of such a track (Lem.\@ \ref{lemma: equiv clas Yoneda}). The same argument applies to show that $\Psi \circ \Upsilon(\mathbf{e}_{P}^{P*R})=(\mathbf{e}_{P}^{P*R})$. 
\end{proof}
\begin{lemma}\label{lemma: varphi commutes}
    For any HDA map $\varphi: \mathcal{X} \rightarrow \mathcal{Y}$ and $\alpha \in \Path_{\mathcal{X}}$, we have $\varphi \circ g_{\alpha}=g_{\varphi(\alpha)}$.
\end{lemma}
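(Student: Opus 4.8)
The plan is to deduce the identity from the uniqueness clause of the generalized Yoneda lemma, Prop.~\ref{prop: generalization of Yoneda}. The key point is that both $\varphi\circ g_\alpha$ and $g_{\varphi(\alpha)}$ are precubical maps out of the \emph{same} track object, so it suffices to check that they send one and the same characteristic path to $\varphi(\alpha)$.

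First I would fix $P=\ev(\alpha)$ and verify that $g_{\varphi(\alpha)}$ is even defined on $\square^P$, i.e.\ that $\ev(\varphi(\alpha))=P$. This holds because a precubical map is a natural transformation $X\to Y$, hence restricts to a map $X[U]\to Y[U]$ for each conclist $U$; consequently $\ev(\varphi(x))=\ev(x)$ for every cell $x$, and the induced path map $\varphi\colon\Path_X\to\Path_Y$ leaves the steps $\varphi_i$ untouched. Since $\ev$ of a path is computed recursively from its steps and the event sets of its cells, we get $\ev(\varphi(\alpha))=\ev(\alpha)=P$, so $g_{\varphi(\alpha)}\colon\square^P\to Y$ exists.

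Next, by Prop.~\ref{prop: generalization of Yoneda} applied to $\alpha$ there is a representative $\rho_P'\simeq\rho_P$ with $g_\alpha(\rho_P')=\alpha$, the equality holding as paths, with $g_\alpha$ acting cellwise. The composite $\varphi\circ g_\alpha\colon\square^P\to Y$ is again a precubical map, and since $\varphi$ and $g_\alpha$ both act on paths cellwise while preserving the step data, the path map induced by $\varphi\circ g_\alpha$ is the composite of the two induced path maps. Therefore
\[
(\varphi\circ g_\alpha)(\rho_P')=\varphi\bigl(g_\alpha(\rho_P')\bigr)=\varphi(\alpha).
\]
Thus the pair consisting of $\rho_P'\simeq\rho_P$ and the track $\varphi\circ g_\alpha$ satisfies exactly the defining property of the pair associated to $\varphi(\alpha)$. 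Invoking Prop.~\ref{prop: generalization of Yoneda} once more, this time for the path $\varphi(\alpha)$ (whose label is $P$), such a pair is unique; hence $\varphi\circ g_\alpha=g_{\varphi(\alpha)}$, as desired.

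The only real subtlety, and the step I would spell out carefully, is the functoriality of the path-map assignment together with the invariance of $\ev$ under precubical maps: these are what let me apply the uniqueness clause with the \emph{same} characteristic path representative $\rho_P'$ on both sides. Both facts are essentially bookkeeping about the definitions of the induced path map and of $\ev$, so once they are recorded the conclusion is immediate.
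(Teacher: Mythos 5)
Your proof is correct and follows the same route as the paper's: apply $\varphi$ to the identity $g_\alpha(\rho_P')=\alpha$ and conclude by the uniqueness clause of Prop.~\ref{prop: generalization of Yoneda}. The extra bookkeeping you supply (that $\ev(\varphi(\alpha))=\ev(\alpha)$ and that the induced path maps compose) is left implicit in the paper but is exactly the right thing to check.
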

\begin{proof}[Proof of Lem.\@ \ref{lemma: varphi commutes}]
    Let $P=\ev(\alpha)$ and $\rho_P'$ the path of Prop.\@ \ref{prop: generalization of Yoneda}. By definition, $g_{\alpha}(\rho_P')=\alpha$, thus $\varphi \circ g_{\alpha} (\rho_P')=\varphi(\alpha)$. We obtain the equality required by the uniqueness of $g_{\varphi(\alpha)}$.
\end{proof}
\begin{proof}[Proof of Lem.\@ \ref{Lemma: Zig Zag property for U}]
$1 \Rightarrow 2$ Let $\ev (\alpha)=P$ and $\ev(\beta)=R$. By the definition of $g_{\varphi(\alpha)*\beta}$ (as constructed int the proof of Prop.\@ \ref{prop: generalization of Yoneda}), we have $g_{\varphi(\alpha)*\beta}\circ \mathbf{i}_{P}^{P*R} = g_{\varphi(\alpha)}=\varphi \circ g_{\alpha}$ (by Lem.\@ \ref{lemma: varphi commutes}), meaning that the following diagram commutes
\[    \xymatrix{%
      \square^{P} \ar[r]^{g_{\alpha}}
      \ar[d]_{\mathbf{i}_{P}^{P*R}} \ar@{}[dr]  &
      X \ar[d]^{\varphi} \\
      \square^{P*R}  \ar[r]_{g_{\varphi(\alpha)*\beta}} \ar@{.>}[ur]_{g'}  & Y  
    }
\]
Since $\varphi$ is $\TrO^0$-open, there exists $g': \square^{P*R} \to Y$ such that $g' \circ \mathbf{i}_{P}^{P*R}= g_{\alpha}$ and $\varphi \circ g' = g_{\varphi(\alpha)*\beta}$.
Consider $\alpha'=g'(\mathbf{f}_{Q}^{P*Q}(\rho_{R}'))$.
We have $g' \circ \mathbf{i}_{P}^{P*R}(\rho_P')=\alpha$, thus $\alpha$ and $\alpha'$ can be concatenated. On the other hand, $\varphi(\alpha')=(g_{\varphi(\alpha)}*g_{\beta}) \circ \mathbf{f}_{Q}^{P*Q}(\rho'_R)=g_{\beta}(\rho'_R)=\beta$, by the definition of the track concatenation.

$2\Rightarrow 1$ Let $\varphi: \mathcal{X} \rightarrow \mathcal{Y}$ be an HDA map, and $q: \square^{P*R} \rightarrow Y$ such that $\varphi \circ p = q \circ \jneda_P^{P*R}$. Let $\rho_P$ and $\rho_Q$ be the characteristic paths of $P$ and $Q$ respectively. By definition, $\rho_Q= \jneda_P^{P*R}(\rho_P)* \sigma$ for some $\sigma \in \Path_{\pobj{Q}}$, hence 
$q(\rho_Q)=\bigl(q\circ \jneda_P^{P*R} (\rho_P)\bigl)* q(\sigma)$. Defining $\alpha=p(\rho_P)$ and $\beta=q(\sigma)$, we obtain $q(\rho_Q)=\varphi(\alpha)* \beta$. The \emph{future path lifting property} yields a \emph{path }$\alpha'$ such that $\varphi(\alpha*\alpha')=\varphi(\alpha)*\beta$. Since $\ev(\alpha*\alpha')= \ev(q(\rho_Q))=Q$, by Prop.\@ \ref{prop: generalization of Yoneda}, there exists a unique track 
$g: \pobj{Q} \to X$ such that $g(\rho_Q)=\alpha*\alpha'$. On one hand $g \circ \jneda_P^{P*R} (\rho_P)= \alpha = p(\rho_P)$, 
thus by Prop.\@ \ref{prop: generalization of Yoneda}, $g \circ \jneda_P^{P*R} = p$. On the other hand, 
$\varphi \circ g (\rho_Q)= \varphi (\alpha*\alpha') = q (\rho_Q)$ thus again by Prop.\@ \ref{prop: generalization of Yoneda} $\varphi \circ g = q$. Therefore, $\varphi$ is $\TrO^0_S$-open.  
\end{proof}
\begin{proof}[Proof of Thm.\@ \ref{U and T bis}]
    $\Rightarrow$: Assume that there is a span of $\TrO^0_S$-open HDA-maps $\mathcal{Y} \stackrel{\varphi}{\longleftarrow} \mathcal{X} \stackrel{\psi}{\longrightarrow} \mathcal{Z}$. 
    The relation $K=\{ (\varphi(x), \psi(x)) \mid x \in \mathcal{X}_{acc} \}$ is a Cell-bisimulation. Since $\varphi(i_X)=i_Y$ and $\psi(i_X)= i_Z$, \ref{def: Cell bis}. \ref{en: U bar bisim.e1} is satisfied. By \cite[Lemma 27]{LanguageofHDA}, $K$ respects labels, thus \ref{def: Cell bis}.\ref{en: U bar bisim.e3} is satisfied. 
    Condition \ref{def: Cell bis}.\ref{en: U bar bisim.e4} is satisfied as $\varphi$ is a precubical map. To show \ref{def: Cell bis}.\ref{en: U bar bisim.e5} and similarly \ref{def: Cell bis}.\ref{en: U bar bisim.e6}, let $(y,z)=(\varphi(x),\psi(x))$. Assume that there exists $y' \in Y$ such that $\delta_A^0(y')=y$, meaning that $\varphi(\alpha_x)$ can be concatenated with $\beta=(y \nearrow^A y')$.    Since $\varphi$ is open, by Lem.\@ \ref{Lemma: Zig Zag property for U}, there exists $\alpha_x'=(x \nearrow^A x')$ in $X$ such that $\varphi(\alpha')=(y \nearrow^A y')$. Defining $z'=\psi(x')$ we obtain $\delta_A^0(z')=z$ and $(y',z') \in K$.

$\Leftarrow$: Assume that there exists a Cell-bisimulation $\R$ between $\mathcal{Y}$ and $\mathcal{Z}$. Let $\mathcal{X}=(X,(i_Y, i_Z))$ where $X=R$ and $\delta_A^{\nu}(y,z)=(\delta_A^{\nu}(y),\delta_A^{\nu}(z))$. Let $\varphi$ and $\psi$ be projections which for $(y,z) \in X$ give $y$ and $z$ respectively. By condition \ref{def: Cell bis}.\ref{en: U bar bisim.e1}, $\varphi$ and $\psi$ preserve initial cells.  Let $(y,z)\in X$, and then an up-step $\beta=\bigl(\varphi(y,z) \nearrow^A y'\bigl)$ in $Y$. 
 As $(y,z) \in \R$, by \ref{def: Cell bis} .\ref{en: U bar bisim.e5}, there exists $z' \in Z$ such that $z \nearrow^A z'$ and $(y',z') \in R$. That is, there exists an up-step $\alpha=((y,z) \nearrow^A (y',z'))$ in $X$ such that $\varphi(\alpha)=\beta$. Thus, $\varphi$ has the future path-lifting property. By Lem.\@ \ref{Lemma: Zig Zag property for U}, $\varphi$ is  $\TrO^0_S$-open. We can show that $\psi$ is $\TrO^0_S$-open similarly.
\end{proof}
\begin{proof}[Proof of Thm.\@ \ref{th: cell to track}]
 Assume that there is a span of  $\TrO^0_S$-open maps with a common HDA $\mathcal{X}$. We show that the relation $K=\{ (\varphi \circ p, \psi \circ p); p\text{ is a track in }X \}$ is a strong Track-bisimulation. Since $\varphi$ and $\psi$ preserve initial cells, $K$ satisfies \ref{def: Track-bisimulation}.\ref{eq1: track bis}. To show that $K$ satisfies the condition \ref{def: Track-bisimulation}.\ref{eq2: track bis}, assume that $ (p_1,p_2) \in K$, so that $ (p_1,p_2)=(\varphi \circ p, \psi \circ p)$ for some track $p$ in $X$. If $p_1= p_1' \circ \jneda_P^{P*R}$ for some morphism $\jneda_P^{P*R}: \square^P \to \square^Q$ of $\TrO^0_S$, that is, the left square in the following diagram commutes
\[ \xymatrix{
     & ~~ \square^P \ar[d]_{p} \ar[dl]_{\jneda_P^{P*R}} \\ \square^Q \ar@{.>}[r]^{p'} \ar[d]_{p'_1}  &  X \ar[dl]_{\varphi} \ar[dr]^{\psi}   \\  Y &  &  Z  } \]
Then, since $\varphi$ is $\TrO^0_S$-open, there exists $p': \square^Q \to X$ such that the two triangles in the previous diagram commute. Let $p'_2= \psi \circ p'$, we have $(p'_1,p'_2) \in K$ and $p'_2 \circ \jneda_P^{P*R}=\psi \circ p' \circ \jneda_P^{P*R}= \psi \circ p=p_2$ as required by \ref{def: Track-bisimulation}.\ref{eq2: track bis}. Finally, if $(p_1,p_2) \in K$, it is clear that for any initial inclusion $\jneda_P^{P*R}: \square^P \to \square^Q$ $ (p_1 \circ \jneda_P^{P*R}, p_2 \circ \jneda_P^{P*R}) \in K$. Thus, the Track-bisimulation $K$ is strong. 
\end{proof}

\begin{lemma}\label{lemma: Upsilon is a functor}
    The map $\Upsilon:\mathbb{T}_X \to  \pclass$ given as follows:
    \begin{compactitem}
        \item $\Upsilon(p)= p(\rho_P)$ for $p: \pobj{P}\to X \in \Ob(\mathbb{T}_X)$;
        \item $\Upsilon(f)= [p(\rho_P)] \to \bigl[p(\rho_P)*\bigl(p'\circ \jnedafinal_{R}^{P*R}(\rho_R)\bigl)\bigl]$ for $f: p \to p'$, where $p: \pobj{P} \to X$ and $p': \pobj{P*R} \to X$
    \end{compactitem}
    is a functor.
\end{lemma}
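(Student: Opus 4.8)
The plan is to verify the three functor conditions in turn --- well-definedness on morphisms (matching endpoints, correct target object), preservation of identities, and preservation of composition --- since well-definedness on objects is immediate: a precubical map sends paths to paths, so $p(\rho_P)$ is a path and $[p(\rho_P)]$ an object of $\pclass$. Throughout I will use two structural facts about precubical maps, namely $p(\alpha*\beta)=p(\alpha)*p(\beta)$ and $\alpha\simeq\beta\Rightarrow p(\alpha)\simeq p(\beta)$, together with the recursive identity $\rho_{A*B}=\jneda_A^{A*B}(\rho_A)*\jnedafinal_B^{A*B}(\rho_B)$ from Def.~\ref{def: char path} and the endpoint properties $\pstart(\rho_P)=I_{\pobj{P}}$, $\pend(\rho_P)=F_{\pobj{P}}$.

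For well-definedness on morphisms I would write $f=\mathbf{e}_P^{P*R}\colon p\to p'$ with $p'=p*r$, where $r:=p'\circ\jnedafinal_R^{P*R}$, so that $\Upsilon(f)$ is the extension of $[p(\rho_P)]$ by $\beta:=r(\rho_R)$. The gluing condition $p(F_P)=r(I_R)$ gives $\pend(p(\rho_P))=p(F_{\pobj{P}})=r(I_{\pobj{R}})=\pstart(r(\rho_R))$, so the concatenation is legal. Applying $p'$ to $\rho_{P*R}=\jneda_P^{P*R}(\rho_P)*\jnedafinal_R^{P*R}(\rho_R)$ and using $p'\circ\jneda_P^{P*R}=p$ shows $p'(\rho_{P*R})=p(\rho_P)*r(\rho_R)$, so the target of $\Upsilon(f)$ is exactly $\Upsilon(p')$ and $\Upsilon(f)\in\hom_{\pclass}(\Upsilon(p),\Upsilon(p'))$. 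For identities, the identity on $p$ is $\mathbf{e}_P^{P*R}$ with $R={}_{T_P}(T_P)_{T_P}$, for which $P*R=P$ and $\rho_R=(\mathbf{y}_{T_P}\nearrow^{\emptyset}\mathbf{y}_{T_P}\searrow_{\emptyset}\mathbf{y}_{T_P})\simeq(\mathbf{y}_{T_P})$ is congruent to a length-$0$ path; hence $r(\rho_R)$ is trivial at $\pend(p(\rho_P))$, concatenation leaves $p(\rho_P)$ unchanged, and $\Upsilon(\mathbf{id}_p)=e_{p(\rho_P)}^{p(\rho_P)}=\mathbf{id}_{\Upsilon(p)}$.

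The main step is composition. Given $f=\mathbf{e}_P^{P*R}\colon p\to p'$ and $g=\mathbf{e}_{P*R}^{P*R*Q}\colon p'\to p''$, which compose in $\mathbb{T}_X$ to $\mathbf{e}_P^{P*R*Q}\colon p\to p''$, I compute both extending paths. The composite $\Upsilon(g)\circ\Upsilon(f)$ is the extension of $[p(\rho_P)]$ by $r(\rho_R)*q(\rho_Q)$, with $q:=p''\circ\jnedafinal_Q^{P*R*Q}$; this uses that the source of $\Upsilon(g)$ and the target of $\Upsilon(f)$ agree, i.e. the identity $p'(\rho_{P*R})=p(\rho_P)*r(\rho_R)$ from the previous paragraph. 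The image $\Upsilon(\mathbf{e}_P^{P*R*Q})$ is the extension by $p''\circ\jnedafinal_{R*Q}^{P*R*Q}(\rho_{R*Q})$; expanding $\rho_{R*Q}=\jneda_R^{R*Q}(\rho_R)*\jnedafinal_Q^{R*Q}(\rho_Q)$ and distributing $p''$ reduces the equality of the two morphisms to the two coherences
\[
p''\circ\jnedafinal_{R*Q}^{P*R*Q}\circ\jneda_R^{R*Q}=r,
\qquad
p''\circ\jnedafinal_{R*Q}^{P*R*Q}\circ\jnedafinal_Q^{R*Q}=q.
\]
These in turn follow from associativity of the inclusions, $\jnedafinal_{R*Q}^{P*R*Q}\circ\jneda_R^{R*Q}=\jneda_{P*R}^{P*R*Q}\circ\jnedafinal_R^{P*R}$ and $\jnedafinal_{R*Q}^{P*R*Q}\circ\jnedafinal_Q^{R*Q}=\jnedafinal_Q^{P*R*Q}$, combined with the track-gluing identities $p''\circ\jneda_{P*R}^{P*R*Q}=p'$, $p'\circ\jnedafinal_R^{P*R}=r$ and $p''\circ\jnedafinal_Q^{P*R*Q}=q$. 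Hence both extending paths equal $r(\rho_R)*q(\rho_Q)$ and the two morphisms coincide.

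I expect the coherence identities for the inclusion maps to be the only genuine obstacle. They must be read off from the explicit formulas of Lem.~\ref{lemma: pomset pushout} --- the initial inclusion sets the complementary events to $0$, the final inclusion to $1$ --- together with the universal property of the gluing pushout witnessing the associativity $(P*R)*Q\cong P*(R*Q)$. Verifying that on each event the two composites assign the same value in $\{0,\exec,1\}$ is routine bookkeeping; everything else is formal manipulation of concatenations modulo $\simeq$.
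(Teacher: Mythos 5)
Your proposal is correct and follows essentially the same route as the paper: the paper's proof likewise reduces preservation of composition to the two coherence identities $\jnedafinal_{R*Q}^{P*R*Q}\circ\jneda_R^{R*Q}=\jneda_{P*R}^{P*R*Q}\circ\jnedafinal_R^{P*R}$ and $\jnedafinal_{R*Q}^{P*R*Q}\circ\jnedafinal_Q^{R*Q}=\jnedafinal_Q^{P*R*Q}$, verified from the explicit formulas of Lem.~\ref{lemma: pomset pushout}, and uses the same identity $p'(\rho_{P*R})=p(\rho_P)*\bigl(p'\circ\jnedafinal_R^{P*R}(\rho_R)\bigr)$ for well-definedness of the target. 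Your additional explicit checks of endpoint matching and identity preservation are fine (the paper leaves these implicit).
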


\begin{proof}
To show that $\Upsilon$ is a functor, 
on the one hand, we have
\begin{align}\label{proofeq1}
\Upsilon(\mathbf{e}_{P}^{P*R*Q})(p)
&= [p(\rho_P)] \to [p(\rho_P)*p''\circ \jnedafinal_{R*Q}^{P*R*Q}(\rho_{R*Q})]
\\
&=[p(\rho_P)*p''\circ \jnedafinal_{R*Q}^{P*R*Q}(\jneda_{R}^{R*Q} (\rho_R)* \jnedafinal_{Q}^{R*Q}(\rho_Q))] \\
& =\bigl[p(\rho_P)*\bigl(p''\circ \jnedafinal_{R*Q}^{P*R*Q}\circ \jneda_{R}^{R*Q} (\rho_R)\bigl) * \bigl( p''\circ \jnedafinal_{R*Q}^{P*R*Q}\circ \jnedafinal_{Q}^{R*Q}(\rho_Q)\bigl)\bigl].
\end{align}
     On the other hand, we have\footnote{Note that $\Upsilon(f)(p)=p'(\rho_{P*R})$ because $\rho_{P*R}=\jneda_{P}^{P*R} (\rho_P)* \jnedafinal_{R}^{P*R}(\rho_R)$ and $ p'\circ \jneda_{P}^{P*R}=p$.} 
 \begin{align*}
\Upsilon(\mathbf{e}_{P}^{P*R})=& [p(\rho_P)] \to \bigl[p(\rho_P)*\bigl(p'\circ \jnedafinal_{R}^{P*R}(\rho_R)\bigl)\bigl]=[p'(\rho_{P*R})]\\
\Upsilon(\mathbf{e}_{P*R}^{L})=& \bigl[p'(\rho_{P*R})] \to \bigl[(p(\rho_{P})*p'\circ \jnedafinal_{R}^{P*R}(\rho_R))*p''\circ \jnedafinal_{Q}^{L}(\rho_Q)\bigl].
 \end{align*}
 where $L=P*R*Q$. Thus, 
 \begin{equation}\label{proofeq2}
    \bigl(\Upsilon(\mathbf{e}_{P*R}^{P*R*Q})\circ \Upsilon(\mathbf{e}_{P}^{P*R})\bigl)(p)= (p(\rho_{P})*p'\circ \jnedafinal_{R}^{P*R}(\rho_R))*p''\circ \jnedafinal_{Q}^{P*R*Q}(\rho_Q).
 \end{equation}
 To show that $\Upsilon(\mathbf{e}_{P}^{P*R*Q})(p)=\bigl(\Upsilon(\mathbf{e}_{P*R}^{P*R*Q})\circ \Upsilon(\mathbf{e}_{P}^{P*R})\bigl)(p)$, we need to show that (\ref{proofeq1})=(\ref{proofeq2}). To show the equality of the second paths in each equation, we prove that
    \begin{equation}\label{eq: in the proof}
        p''\circ \jnedafinal_{R*Q}^{P*R*Q} \circ \jneda_{R}^{R*Q}=p'\circ \jnedafinal_{R}^{P*R}
    \end{equation}
    Using the expression of initial and final inclusions (Lem.\@ \ref{lemma: pomset pushout}), an elementary calculation shows that $\jnedafinal_{R*Q}^{P*R*Q} \circ \jneda_{R}^{R*Q}=\jneda_{P*R}^{P*R*Q}\circ \jnedafinal_{R}^{P*R}$, meaning that the square in the following diagram commutes. Thus, $ p''\circ \jnedafinal_{R*Q}^{P*R*Q} \circ \jneda_{R}^{R*Q}= p''\circ \jneda_{P*R}^{P*R*Q}\circ \jnedafinal_{R}^{P*R}$ and (\ref{eq: in the proof}) follows since $p''\circ \jneda_{P*R}^{P*R*Q} =p'$
       \[ \xymatrix{
    & \pobj{R} \ar[r]^{\jneda} \ar[d]^{\jnedafinal}&  {\pobj{R*Q}} \ar[d]^{\jnedafinal}  \\ \pobj{P} \ar[dr]_{p} \ar[r]^{\jneda} &  {\pobj{P*R}} \ar[r]^{\jneda} \ar[d]_{p'} &  \pobj{P*R*Q}  \ar[dl]^{p''}  \\   & X & } \]
    what left is to show the third paths in each equation are equal, i.e, 
    \[ p''\circ \jnedafinal_{R*Q}^{P*R*Q}\circ \jnedafinal_{Q}^{R*Q}(\rho_Q)=p''\circ \jnedafinal_{Q}^{P*R*Q}(\rho_Q) \]
    which is clear by the following diagram
    \[ \xymatrix{
     &  {\pobj{Q}} \ar[r]^{\jnedafinal} \ar[d]_{\jnedafinal} &  \pobj{R*Q}  \ar[dl]^{\jnedafinal}  \\   & \pobj{P*R*Q}& } \]
    \end{proof}
    \begin{proof}[Proof of Thm. \@ \ref{th: last theorem}]
    $\Rightarrow$ Assume that there is a (strong) track-bisimulation $K$ between $\mathcal{X}_1$ and $\mathcal{X}_2$. We show that the relation between tracks with domain $\square^P$ given by $\R=\{\bigl(p_1(\rho_P),p_2(\rho_P)\bigl) \mid (p_1,p_2)\in K\}$, where $\rho_P$ is the characteristic path of $P$, is a (strong) path-bisimulation. By Yoneda lemma, there is a one-to-one correspondence between the initial cells and the initial tracks, thus $R$ satisfies \ref{def: U bisimulation}.\ref{en: U bisim.e1}. Since $p_i$ are precubical maps, \ref{def: U bisimulation}.\ref{en: U bisim.e3} is satisfied. Note that $\R=\{\bigl( \Upsilon(p_1),\Upsilon(p_1)\bigl) \mid (p_1,p_2)\in K\}$, where $\Upsilon$ is the isomorphism of Lemma \ref{lemma: Upsilon is a functor}. 
    The property \ref{def: U bisimulation}.\ref{en: U bisim.e4}
    holds because $\Upsilon$ is an isomophism between the categories $\mathbb{P}_{X_i}$ and $T_{X_i}$ (Th.\@ \ref{th: path and tracks isomorphism}) having extensions as morphisms and by the the property \ref{def: Track-bisimulation}.\ref{eq2: track bis}.

    $\Leftarrow$ For any path $\alpha \in \mathrm{P}_{X}$, we denote by $p_{\alpha}: \pobj{\ev(\alpha)} \to X$ the track of Prop.\@ \ref{prop: generalization of Yoneda}. We assume that there is a strong Path-bisimulation $\R$ between HDAs $\mathcal{X}_1$ and $\mathcal{X}_2$. We show that the relation between tracks $K=\{(p_{\alpha_1},p_{\alpha_2})\mid (\alpha_1,\alpha_2) \in K \}$ is a strong $\TrO^0_S$-bisimulation. First, $K$ satisfies \ref{def: Track-bisimulation}.\ref{eq1: track bis} because there is one-to-one correspondence between initial tracks and initial cells (Yoneda embedding). To show \ref{def: Track-bisimulation}.\ref{eq2: track bis}, it is enough to notice that $K=\{\bigl(\Psi(\alpha_1),\Psi(\alpha_2)\bigl)\mid (\alpha_1,\alpha_2) \in K \}$ and to use the property \ref{def: U bisimulation}.\ref{en: U bisim.e4}.
\end{proof}

\end{document}